\documentclass[11pt]{article}

\RequirePackage[OT1]{fontenc}
\RequirePackage{amsthm,amsmath,amssymb}
\RequirePackage{natbib}
\RequirePackage[colorlinks,citecolor=blue,urlcolor=blue]{hyperref}

\usepackage{amssymb, amsthm}
\usepackage[normalem]{ulem}
\usepackage[hang,flushmargin]{footmisc}
\usepackage{enumitem}  
\usepackage[hang,flushmargin]{footmisc}
\usepackage{lipsum}
\usepackage{changepage}
\usepackage{colortbl}
\usepackage[dvipsnames]{xcolor}
\usepackage{bbm,mathtools} 
\usepackage{graphicx,adjustbox}
\usepackage{multirow,makecell,subcaption}
\usepackage[linesnumbered,ruled,vlined]{algorithm2e}
\usepackage{tabularx,algorithmic}
\usepackage{color,float,xcolor,colortbl,multirow}
\usepackage{epstopdf}

\definecolor{lightgray}{gray}{0.85}
\definecolor{middlegray}{gray}{0.75}
\definecolor{darkgray}{gray}{0.65}

\makeatletter
\newcommand{\algorithmfootnote}[2][\footnotesize]{%
  \let\old@algocf@finish\@algocf@finish
  \def\@algocf@finish{\old@algocf@finish
    \leavevmode\rlap{\begin{minipage}{\linewidth}
    #1#2
    \end{minipage}}%
  }%
}
\makeatother

\SetKwRepeat{Do}{do}{while}%
\SetKwInOut{Output}{output}
\SetKwInput{Initialization}{Initialization}
\SetKwProg{Fn}{Function}{:}{}
\def\bm#1{\mbox{\boldmath $#1$}}
\def\thick#1{\hbox{\rlap{$#1$}\kern0.25pt\rlap{$#1$}\kern0.25pt$#1$}}

\newtheorem{theorem}{Theorem}
\newtheorem{lemma}{Lemma}

\newtheorem{proposition}{Proposition}
\theoremstyle{definition}
\newtheorem{definition}{Definition}

\newtheorem{remark}{Remark}
\def\bm#1{\mbox{\boldmath $#1$}}

\def\E{\mathbb E}

\def\Cov{{\rm Cov}}
\def\BSIGMA{{\bm\Sigma}}

\newcommand{\bS}{\bm{S}}

\newcommand{\bX}{\bm{X}}

\newcommand{\bg}{\bm{g}}

\newcommand{\bC}{\bm{C}}
\newcommand{\bR}{\bm{R}}

\newcommand{\mS}{\mathcal{S}}

\newcommand{\bh}{h}

\newcommand{\mM}{\mathcal{M}}
\newcommand{\mMS}{\mathcal{M}\cup \mathcal{S}}
\newcommand{\bD}{\bm{D}}
\newcommand{\bA}{\bm{A}}

\newcommand{\indep}{\perp \!\!\! \perp}

\newcommand{\bSigma}{\bm{\Sigma}}

\newcommand{\bOmega}{\bm{\Omega}}

\newcommand{\bW}{\bm{W}}

\newcommand{\op}{\mathrm{op}}
\def\mA{\mathcal{A}}

\def\Var{\mbox{Var}}

\def\Cov{\mbox{Cov}}

\newdimen\biblioindent \biblioindent=30pt

 at 7truept

\def\bH{\bm{H}}

\DeclareMathOperator*{\argmin}{argmin}

\def\beq{\begin{equation}}
	\def\eeq{\end{equation}}
\def\beqn{\begin{eqnarray}}
	\def\eeqn{\end{eqnarray}}
\def\beqnn{\begin{eqnarray*}}
	\def\eeqnn{\end{eqnarray*}}

\theoremstyle{remark}  

\begin{document}
	
	
	\renewcommand{\baselinestretch}{1.5}
	
	
	
	$\ $\par
	
 \thispagestyle{plain}
 
	\fontsize{12}{14pt plus.8pt minus .6pt}\selectfont \vspace{0.8pc}
	\centerline{\large\bf Hypothesis Testing for Penalized Estimating Equations with 
}
	\vspace{2pt} 
	\centerline{\large\bf Cross-Fitted Covariance Calibration}
	\vspace{.4cm} 
	\centerline{Jing Zhou$^{1}$, Zhe Zhang$^{2}$}
	\vspace{2pt}
	\centerline{\it $^{1}$Department of Mathematics, University of Manchester}
	\centerline{\it $^{2}$Perelman School of Medicine, University of Pennsylvania}
	\centerline{\it Email: jing.zhou@manchester.ac.uk, zhe.zhang@pennmedicine.upenn.edu}
	\vspace{.4cm} 
 
	\vspace{.55cm} \fontsize{12}{11.5pt plus.8pt minus.6pt}\selectfont
	

	\begin{quotation}
     \noindent {\it Abstract:} We study hypothesis testing for penalized estimators in settings where the full marginal distribution of a multivariate response is difficult to specify, such as longitudinal data with correlated measurements or high-dimensional heteroscedastic regression. Assuming that the conditional mean model is correctly specified, we establish that the penalized estimating equations admit a $\sqrt{n}$-consistent solution, even when the working covariance structure is misspecified. Our inferential target is a low-dimensional subvector of parameters associated with the mean model. We show that the resulting test statistic converges to a $\chi^2$ distribution, and that its asymptotic power depends on the nuisance covariance function. To mitigate this dependence, we propose estimating the covariance function via cross-fitting, which provides a calibrated and robust procedure for inference.
  \vspace{9pt}\\
  \noindent {\it Key words and phrases: Penalized estimating equations; hypothesis testing; cross-fitting; covariance function estimation; working covariance misspecification; heteroscedasticity
}
		\par
\end{quotation}\par
	
	\def\thefigure{\arabic{figure}}
	\def\thetable{\arabic{table}}
	
	\renewcommand{\theequation}{\thesection.\arabic{equation}}

	\fontsize{12}{14pt plus.8pt minus .6pt}\selectfont


 \section{Introduction}

In high-dimensional models, estimation and inference for parameters associated with the conditional mean are complicated by heteroscedasticity and unknown covariance structures. In such settings, specifying the joint density of a multivariate response vector is often challenging, and the properties of likelihood-based estimators that ignore heteroscedasticity or covariate-dependent covariance structures are underexplored. Moreover, misspecification of the covariance structure can lead to efficiency loss \citep{aitkin1935least} and invalid inference \citep{White1982,mackinnon1985some}. 

Consider a sample $\{(Y_i, \bm X_i^\top)\}_{i=1}^n$, where $\bm X_i \in \mathbbm R^{p\times l}$ and $Y_i \in \mathbbm R^l$. Rather than fully specifying the data-generating process, we consider estimation based on the following conditional mean structure
\begin{equation}\label{eq:conditional_mean}
    \mathbb{E}(Y_i \mid \bX_i) 
    = \bg(\bX^\top_i \beta_0)
    = \bigl(g(X_{i1}^\top \beta_0), \ldots, g(X_{il}^\top \beta_0)\bigr)^\top,
\end{equation}
where $g(\cdot)$ is a known link function and $\beta_0 \in \mathbb{R}^p$ is an unknown $s$-sparse parameter vector of primary interest. Our goal is to conduct estimation and inference for $\beta_0$ in the high-dimensional regime $p > n$, based on the mean specification \eqref{eq:conditional_mean}, without imposing restrictive assumptions on the conditional covariance of $Y_i$. Such specification allows the observations $i \in [n]$ to be independent but not identically distributed.

When the response $Y_i$'s are scalar-valued, a classical approach for 
accommodating dispersion and heteroscedasticity is through quasi-likelihood methods \citep{wedderburn1974quasi,McCullaghNelder1989}. Such estimation requires solving quasi-score equations, which specify the conditional mean and variance functions. Typically, the variance function depends on known functions of the mean up to a scalar dispersion parameter. When the quasi-score function is integrable, a scalar quasi-likelihood whose maximizer admits a well-developed asymptotic theory.


However, extending quasi-likelihood methods to multivariate responses is substantially more challenging. In general, the quasi-score for a multivariate mean does not coincide with the gradient of any scalar objective function unless restrictive integrability conditions are imposed on the covariance structure \citep{godambe1960optimum,liang1986longitudinal}. As a result, likelihood- or quasi-likelihood-based formulations are typically unavailable (or become cumbersome) when the response dimension is greater than one, and the covariance is unknown or covariate-dependent. In contrast, when $l=1$ (a univariate response), quasi-likelihood reduces to a standard scalar criterion and can be combined naturally with regularization; see, for example, the penalized quasi-likelihood developments in \citet{mammen1997penalized} and related high-dimensional treatments such as \citet{shi2019linear}. This motivates our use of estimating-equation methods, which avoid the need to specify a fully integrable multivariate quasi-likelihood while still allowing flexible modeling of $\bSigma(\cdot)$.

To address this limitation, we adopt an estimating-equations-based framework that does not require the existence of a (quasi-)likelihood function. Specifically, generalized estimating equations (GEE) can be constructed by specifying the conditional mean and a working covariance structure. This approach ensures consistency and asymptotic normality, even in the presence of covariance misspecification \citep{liang1986longitudinal,godambe1985foundations}. 

Various approaches to estimating $\beta_0$ within the framework of generalized estimating equations (GEE) have been developed for longitudinal and clustered data. These methods typically decompose the conditional covariance structure into a marginal variance component—often motivated by quasi-likelihood theory for exponential family models—and a low-dimensional working correlation matrix. A common strategy specifies the working correlation as a linear combination of basis matrices; see, for example, \citep{qu2000improving}. Such working correlations are typically assumed to be deterministic, common across subjects, and independent of covariates.

However, such specifications may be restrictive in practice. In particular, the marginal variance is typically modeled as a known function of the mean and may fail to accommodate more general forms of marginal heteroscedasticity. Moreover, standard working correlation specifications do not explicitly accommodate dependence structures that vary with covariates. However, there is evidence from empirical studies that correlation and marginal variance may both be covariate-dependent; see, for example, the labour income study in \citet{meghir2004income,guvenen2009empirical}; evolutionary and behavioural ecological studies in \citet{nakagawa2025location}.

To encourage sparsity in estimating $\beta_0$ within the GEE framework, penalization methods have been proposed; see, for example, \citet{wang2012penalized,fang2020test}. These works focus on longitudinal data and model the covariance structure using the classical decomposition into a marginal variance function motivated by an exponential family and a deterministic working correlation. Throughout this paper, we do not assume a longitudinal data structure; instead, the GEE framework is used solely as a convenient vehicle for constructing estimating equations under general forms of covariance misspecification.

Similar to the partially penalized GEE considered in \citet{fang2020test}, let $\mM \subseteq \{1,\ldots,p\}$ be a subset of indices with cardinality $|\mM| = m$, and denote the corresponding subvector of $\beta_0$ by $\beta_{0,\mM}$. The testing problem of interest in this paper is
\begin{equation}\label{eq: hypothesis}
    H_0: \bm C \beta_{0,\mM} = t 
    \quad \text{versus} \quad 
    H_a: \bm C \beta_{0,\mM} \neq t,
\end{equation}
where $\bm C$ is a given $r \times m$ matrix with full row rank and $t$ is an $r$-dimensional vector, with $r \le m$. A special case of interest is $\bm C = \bm I_m$, the $m \times m$ identity matrix, and $t = \bm 0_r$, the $r$-dimensional zero vector.

In contrast to \citet{fang2020test}, our objective is not to model longitudinal data, but rather to understand the behavior of estimators of $\beta_0$ when the underlying data-generating process is more complex. Specifically, we allow the conditional covariance structure to depend on covariates through an unknown, potentially nonlinear function. That is,
\begin{equation}\label{eq: conditional covariance matrix}
    \Cov (Y_i \mid \bX_i)
    =
    \bSigma(\bX_{i,\mA})
    \in \mathbb{R}^{l \times l},
\end{equation}
where $\mA \subseteq \{1,\ldots,p\}$ denotes an active subset of covariates on which $\bSigma(\cdot)$ depends, $\bX_{i, \mA}\coloneqq (X_{i1,\mA},\ldots,X_{il,\mA})\in\mathbb{R}^{l\times|\mA|}$ is formed by rows of $\bX_i$ in $\mA$, and $\operatorname{vec}(\bX_{i,\mA}) \in \mathbb{R}^{l\cdot|\mA|}$ denotes its vectorized version satisfying Assumption~\ref{cond: active set covariance matrix}. For notational simplicity, we write $\bSigma(\bX_{i,\mA})$ instead of $\bSigma(\operatorname{vec}(\bX_{i,\mA}))$. Estimation in this setting is nontrivial, as \eqref{eq: conditional covariance matrix} allows the responses $Y_i$'s to be independent but not identically distributed. As is relevant to the labor income research in \citet{guvenen2009empirical, meghir2004income}, modeling the variance function is of independent interest with practical interpretation for empirical studies.


We propose a plug-in estimator of the working covariance function $\bSigma(\cdot)$, motivated by inverse-variance weighting to gain efficiency under heteroscedasticity. In linear models, this leads to GLS/WLS and feasible GLS, in which the error variance is estimated from an initial fit (e.g., \citealp{amemiya1973regression}); see also \citet{andrews1986note} for further discussion. Related sandwich variance estimators yield inference that remains valid under unknown heteroscedasticity \citep{white1980heteroskedasticity}. Our construction is also connected to variance-function estimation and joint mean--dispersion modeling in fixed-$p$ settings \citep{davidian1987variance}, and to recent location--scale methods that adapt to conditional heteroscedasticity \citep{spady2018simultaneous,young2024sandwich}. 
Overall, $\hat{\bSigma}(\cdot)$ provides a practical way to incorporate these weighting ideas into our estimating-equation framework while keeping the focus on the statistical properties of the resulting estimator.

\section{Model setup}


\subsection{Estimating equations}

We denote a sample of $n$ observations as $\{(Y_i, \bm X_i)\}_{i=1}^n$ where $\bm X_i$ are independent and identicially distributed. Further, due to the specifications in \eqref{eq:conditional_mean} and \eqref{eq: conditional covariance matrix}, the copies $Y_i$'s are independent, but an identical distribution is not necessary.

If $\bSigma(\cdot)$ and the active set $\mA$ were known, the estimating equations can be defined as 
\begin{eqnarray}\label{eq: estimating equations general}
    U_n(\beta) = \frac{1}{n} \sum_{i=1}^n  U_i(\beta) \in \mathbb{R}^p,
\end{eqnarray}
where $U_i(\beta) = \bX_i \bD_i(\beta) \bSigma(\bX_{i,\mA})^{-1} \left\{Y_i - \bg\left(\bX_i^\top\beta\right)\right\}$. Let $\dot g (\cdot)$ denote the first derivative of $g(\cdot)$, $\bD_i(\beta) = \operatorname{diag}(\dot{g}(X_{i1}^\top \beta), \ldots, \dot{g}(X_{il}^\top \beta))$. We refer to $U_n(\beta)$ as the \emph{oracle estimating equations}, since they are constructed using the true covariance matrix 
$\bSigma(\cdot)$, which is unknown in practice.

The partially penalized estimating equations $U^{\rm p}_n(\beta)$ are defined as
\begin{equation}\label{eq: partial penalized estimating equation}
U^{\rm p}_n(\beta) \coloneqq U_n (\beta) + \partial \rho_\lambda (\beta;\mM),
\end{equation}
where $\partial \rho_\lambda (\beta; \mM) = (\dot\rho_\lambda(\beta_1; \mM), \ldots, \dot\rho_\lambda(\beta_p; \mM))^\top$, $\dot\rho_\lambda(\beta_1; \mM) \coloneqq \dot\rho_\lambda(\beta_j) I\{ j \notin \mM\}$, and $\dot \rho_\lambda(\beta_j)$ satisfies Assumption~\ref{cond: nonconvex regularizer}. The two most common nonconvex penalties are the SCAD \citep{fan2001variable} with $\dot\rho_{{\rm scad},\lambda}(t; a) = \lambda [I(|t| \le \lambda) + \frac{(a\lambda - |t|)_+ }{(a -1)\lambda} I(|t| > \lambda)]$ and MCP \citep{zhang2010mcp} with $\dot\rho_{{\rm mcp},\lambda}(t; a) = (\lambda - |t|/a)_+$, where $a > 0$ is some fixed constant controlling the shrinkage of large coefficients, see Assumption~\ref{cond: large value no thresholding}.

By definition, \eqref{eq: partial penalized estimating equation} implies that no penalty is imposed on the subvector $\beta_{0,\mM}$, which is of 
primary interest for hypothesis testing. An estimator of the $p$-dimensional sparse parameter vector $\beta_0$, denoted by $\tilde\beta$, is obtained by solving
\begin{equation}\label{eq: partially penalized estimating equations}
    \bm{0} \in U^{\rm p}_n(\beta).
\end{equation}
To compute solutions to the regularized estimating equations in our numerical studies, we use a self-implemented algorithm based on the computational framework of the \texttt{PGEE} package in \textsf{R}. Because this paper focuses on theory and inference for our estimator, we omit further algorithmic details.

The oracle estimating equations $U_n(\beta)$ involve an unknown covariance matrix $\bSigma(\cdot)$, which complicates constructing the estimator. In practice, people specify a working covariance structure $\check\bSigma(\cdot)$, and we discuss the conditions on $\bSigma(\cdot)$ and $\check\bSigma(\cdot)$ in order to guarantee a $\sqrt{n}$-consistent estimator in Proposition~\ref{prop: consistency}.

\begin{proposition}\label{prop: consistency}
   Under Assumptions~\ref{cond: fixed and bounded design}, \ref{cond: conditional mean function}, \ref{cond: residuals moments and subgaussian}, \ref{cond: eigenvalue max min finite}, \ref{cond: variance function bounded}, and \ref{cond: log p and n order}, for $
\lambda_n\asymp\max\left\{\frac{\log n}{\sqrt{n}}\sqrt{\log p}, \frac{p (m+s)^{5/2} (\log n)^3}{n^{3/2}}\right\}$, there exists a partially penalized solution $\tilde{\beta} = \left(\tilde{\beta}^\top_{\mMS},\tilde{\beta}^\top_{\left(\mMS\right)^c}\right)^\top$ to \eqref{eq: partially penalized estimating equations}, which satisfies the following consistency properties
        \begin{equation}\label{eq: equations negligible 1}
            P\left\{|U^{\rm p}_{n, j} (\tilde\beta)| = \bm 0, j \in \mMS \right\} \to 1
        \end{equation}
        \begin{equation}\label{eq: equations negligible 2}
            P \left\{ |U_{n, j}^{\rm p} (\tilde{\beta})| \le \frac{\lambda_n}{\log n} , j \in (\mMS)^c \right\} \to 1,
        \end{equation}
        \begin{equation}\label{eq: prop section consistency}
        P\left(\tilde{\beta}_{\left(\mMS\right)^c} = \bm 0\right) \rightarrow 1,
        \end{equation}
        \begin{equation}\label{eq: prop active set estimation consistency}
        \|\tilde{\beta}_{\mMS} - \beta_{0,\mMS}\|_2 = O_P\left(\sqrt{(s+m)/{n}}\right).
        \end{equation}
\end{proposition}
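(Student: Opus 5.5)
I will follow the standard oracle-construction strategy for folded-concave penalized estimating equations, in the style of \citet{fang2020test}. First, solve the restricted, partially penalized equations on the oracle support. Then verify that the resulting vector, padded with zeros, approximately solves the full system.

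\textbf{Step 1 (restricted problem on $\mMS$).} Consider the restricted system in which $\beta_{(\mMS)^c}=\bm 0$ and only the coordinates in $\mMS$ are solved, i.e.\ $U^{\rm p}_{n,\mMS}(\beta_{\mMS},\bm 0)=\bm 0$. By Assumption~\ref{cond: large value no thresholding}, the entries of $\beta_{0,\mS\setminus\mM}$ are bounded away from zero by a multiple of $\lambda_n$. Hence $\dot\rho_{\lambda_n}$ vanishes on a neighbourhood of $\beta_{0,\mMS}$ of radius $O(\sqrt{(s+m)/n})$, and in that neighbourhood the system reduces to the unpenalized equations $U_{n,\mMS}(\beta_{\mMS},\bm 0)=\bm 0$.

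I will show that these have a root in the ball $B=\{\|\beta_{\mMS}-\beta_{0,\mMS}\|_2\le C\sqrt{(s+m)/n}\}$ via a Brouwer / Ortega--Rheinboldt-type argument. It suffices to show that
\[
(\beta_{\mMS}-\beta_{0,\mMS})^\top U_{n,\mMS}(\beta_{\mMS},\bm 0)<0
\]
on $\partial B$ with probability tending to one. I will Taylor-expand
\[
U_{n,\mMS}(\beta)=U_{n,\mMS}(\beta_0)-\bH_n(\beta_0)(\beta-\beta_0)_{\mMS}+R_n .
\]
The terms are controlled as follows.
\begin{itemize}
\item \emph{Leading term.} $\|U_{n,\mMS}(\beta_0)\|_2=O_P(\sqrt{(s+m)/n})$. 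This follows from the conditional mean-zero property \eqref{eq:conditional_mean}, independence, and the bounded-design and moment conditions (Assumptions~\ref{cond: fixed and bounded design}, \ref{cond: residuals moments and subgaussian}, \ref{cond: variance function bounded}). This step works for any bounded, positive-definite working covariance $\check\bSigma$, so misspecification only affects the sandwich constants.
\item \emph{Curvature term.} $\bH_n(\beta_0)=n^{-1}\sum_i \bX_{i,\mMS}\bD_i\check\bSigma^{-1}\bD_i\bX_{i,\mMS}^\top$ has minimum eigenvalue bounded below by Assumption~\ref{cond: eigenvalue max min finite}. The additional term involving $\ddot g$ times the residual is $o_P(1)$ in operator norm.
\item \emph{Remainder.} $\|R_n\|_2=O_P\bigl((s+m)^{3/2}\cdot (s+m)/n\bigr)$, using the smoothness of $g$ in Assumption~\ref{cond: conditional mean function}. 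This is negligible under the rate in Assumption~\ref{cond: log p and n order}.
\end{itemize}
This yields \eqref{eq: prop active set estimation consistency} and \eqref{eq: equations negligible 1}. Since the root lies in $B$, no penalty is active, so the penalized restricted equations hold exactly.

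\textbf{Step 2 (zero-padding and the inactive coordinates).} Define $\tilde\beta=(\tilde\beta_{\mMS}^\top,\bm 0^\top)^\top$, which gives \eqref{eq: prop section consistency} trivially. For $j\in(\mMS)^c$, at $\tilde\beta_j=0$ the penalty subdifferential is $[-\lambda_n,\lambda_n]$. So it suffices to show
\[
\max_{j\in(\mMS)^c}|U_{n,j}(\tilde\beta)|\le \lambda_n/\log n
\]
with probability tending to one; choosing the subgradient element equal to $-U_{n,j}(\tilde\beta)$ then gives \eqref{eq: equations negligible 2}. I will decompose
\[
U_{n,j}(\tilde\beta)=U_{n,j}(\beta_0)-\bH_{n,j,\mMS}(\tilde\beta-\beta_0)_{\mMS}+R_{n,j}
\]
and bound each piece.
\begin{itemize}
\item \emph{Score term.} By sub-Gaussianity and a union bound, $\max_j|U_{n,j}(\beta_0)|=O_P(\sqrt{\log p/n})$. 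This is the first branch of $\lambda_n$, with the extra $\log n$ absorbing the $1/\log n$ slack and truncation.
\item \emph{Linear term.} Boundedness of the design and the Step-1 rate give a bound of $O_P(\sqrt{(s+m)}\cdot\sqrt{(s+m)/n})$, or sharper. This is dominated by $\lambda_n/\log n$ under Assumption~\ref{cond: log p and n order}.
\item \emph{Second-order remainder.} Taking a maximum over $p$ coordinates with Hessian entries of size $(s+m)$ times $\|\tilde\beta-\beta_0\|_2^2$ produces something like $p(m+s)^{5/2}(\log n)^2/n^{3/2}$. This is precisely the second branch of $\lambda_n$.
\end{itemize}

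\textbf{Main obstacle.} I expect the main difficulty in Step 2 to be the uniform control over $j\in(\mMS)^c$ of the linear and quadratic terms, because $\tilde\beta$ depends on all the data. I would handle this in two ways. The quadratic remainder can be bounded deterministically via a sup over the ball $B$, using bounded covariates and a Lipschitz $\ddot g$, accepting the crude factor $p$ that appears in $\lambda_n$. For the linear term, I would center the Hessian at its population counterpart and use the uniform law $\max_j\|\bH_{n,j,\mMS}-\E\bH_{j,\mMS}\|_2=O_P(\sqrt{(s+m)\log p/n})$. I would also check carefully that $\lambda_n/\log n$ still dominates $\sqrt{\log p/n}$; the $\log n$ factor in the first branch of $\lambda_n$ is designed exactly for this.
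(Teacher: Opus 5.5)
Your plan is essentially the paper's proof. It uses the Ortega--Rheinboldt sign condition $(\beta-\beta_0)^\top U_n(\beta)<0$ on the boundary of the oracle ball $\mathcal{N}_{\beta,\tau}$ and zero-pads the restricted root. It then controls $U_{n,j}(\tilde\beta)$ for $j\in(\mMS)^c$ by a three-term Taylor expansion: a sub-Gaussian union bound for the score, and crude bounds for the gradient and Hessian pieces.

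The one real difference is that you bound the linear and quadratic pieces deterministically via the bounded design, uniformly in $j$. This is cleaner than the paper's Markov-plus-union-bound route, which pays the factor $p$ and needs Assumption~\ref{cond: active set and non active set covariance}. In fact, a deterministic sup bound incurs no factor $p$ at all.

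The check you flagged on the score term is genuine. With $\lambda_n\asymp\log n\sqrt{\log p/n}$ exactly, $\lambda_n/\log n$ is of the same order as $\max_j|U_{n,j}(\beta_0)|$. You therefore need a large enough constant $K$ in $\lambda_n$; the union bound then gives $2p^{1-C'K^2}\to 0$. The paper hides this by appealing to Assumption~\ref{cond: log p and n order}.

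One step is misjustified. Assumption~\ref{cond: large value no thresholding} is a property of the penalty alone ($\dot\rho_\lambda(t)=0$ for $t\ge a\lambda$) and says nothing about $\beta_0$. It cannot give you that $\beta_{0,\mS}$ is bounded away from zero by a multiple of $\lambda_n$. To get \eqref{eq: equations negligible 1} you need an explicit minimal-signal condition, e.g.\ $\min_{j\in\mS}|\beta_{0,j}|\ge a\lambda_n+\tau\sqrt{(s+m)/n}$. Without it, some $|\tilde\beta_j|$ with $j\in\mS$ may fall below $a\lambda_n$. There the SCAD/MCP derivative is nonzero, and the unpenalized oracle root no longer solves the penalized equations. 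The paper makes the same move in its Step~2, assuming $\min_{j\in\mS}|\beta_{0,j}|/\lambda_n\to\infty$ without listing it, so you should state it as an additional hypothesis.

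Similarly, your curvature lower bound via Assumption~\ref{cond: eigenvalue max min finite} also requires $\dot g$ bounded away from zero and $\lambda_{\min}(\check\bSigma^{-1})$ bounded below. Assumptions~\ref{cond: conditional mean function} and~\ref{cond: variance function bounded} give only upper bounds and do not provide this; the paper glosses over this point as well.
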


\begin{remark}


Proposition~\ref{prop: consistency} establishes consistency of the solution $\tilde{\beta}$ to the penalized estimating equations in \eqref{eq: partially penalized estimating equations}; the proof is given in Section~\ref{ssec: proof of proposition consistency}. The role of the (unknown) covariance function $\bSigma(\cdot)$ is reflected through a mild regularity condition, stated in Assumption~\ref{cond: variance function bounded}, which prevents degeneracy of the weighting scheme. In particular, we require the diagonal weights in $\bSigma(\cdot)^{-1}$ to be uniformly bounded above, i.e.,
\[
\sup_{x}\max_{k}\big[\bSigma(x)^{-1}\big]_{kk}<\infty.
\]
This condition is analogous to requiring the conditional variance to be bounded away from zero in heteroscedastic regression; see, e.g., \citet{spady2018simultaneous}.

Importantly, the proof of Proposition~\ref{prop: consistency} does not rely on the correct specification of the working covariance. Consistency for $\beta_0$ continues to hold (under a correctly specified mean model) provided the inverse working covariance $\check{\bSigma}(\cdot)^{-1}$ satisfies the same uniform boundedness condition,
\[
\sup_{x}\max_{k}\big[\check{\bSigma}(x)^{-1}\big]_{kk}<\infty,
\]
even if $\check{\bSigma}(\cdot)\neq \bSigma(\cdot)$. We use this robustness property to justify the initial estimator under a misspecified working covariance and, subsequently, the cross-fitted estimator based on an estimated covariance function.

\end{remark}

\section{cross-fitted estimation with estimated covariance function}
 By Proposition~\ref{prop: consistency}, we can conclude that misspecification of the working covariance function $\check{\bSigma}(\cdot)$ does not affect the consistency of the estimator $\check{\beta}$. However, our primary interest lies in the testing problem in~\eqref{eq: hypothesis}, which depends on the asymptotic distribution of estimators of $\beta_0$. Since the true covariance function $\bSigma(\cdot)$ is unknown, it is natural to employ a data-driven estimator for this nuisance component. Care must be taken to avoid dependence between the estimated covariance function $\hat{\bSigma}(\cdot)$ and the estimating equations, as such dependence can introduce additional first-order terms in the asymptotic expansion and generally prevent the estimator from achieving $\sqrt{n}$-asymptotic normality without further restrictive conditions. To eliminate this bias and restore first-order orthogonality between the estimating equations and nuisance estimation error, we adopt a cross-fitting strategy \citep{chernozhukov2018DML}.

The cross-fitting workflow 
proceeds as follows. We split the dataset $\{(Y_i, \bX_i)\}_{i=1}^n$ into two disjoint subsamples. The two subsamples have index sets $\mathcal I_1$ and $\mathcal{I}_2$, such that $\mathcal I_1 \cup \mathcal I_2  = [n]$, $\mathcal I_1 \cap \mathcal I_2 = \varnothing$, $|\mathcal I_1| = \lfloor n/2 \rfloor$, and $|\mathcal I_2| = n - |\mathcal I_1|$. Each subsample is used to obtain an initial estimator, defined in \eqref{eq: population residuals definition} and denoted by $\check{\beta}^{(1)}$ and $\check{\beta}^{(2)}$, respectively, using the working covariance specification. By Proposition~\ref{prop: consistency}, these initial estimators are consistent. In Theorem~\ref{thm: covariance matrix}, we show that the residuals based on the first-stage estimators, denoted by $R_i (\check\beta^{(1)})$ and $R_i (\check\beta^{(2)})$, are sufficiently accurate proxies for $R_i (\beta_0)$ defined in \eqref{eq: population residuals definition}. Further, we compute residuals and use them to estimate the covariance functions $\hat{\bSigma}^{(1)}$ and $\hat{\bSigma}^{(2)}$ via Algorithm~\ref{algorithm: variance function update}. The estimated covariance functions are then cross-fitted and used as plug-in inputs in the updated estimating equations to obtain the second-stage estimators $\hat{\beta}^{(2)}$ and $\hat{\beta}^{(1)}$, respectively. The cross-fitted estimator is then aggregated and defined as
\begin{eqnarray}\label{eq: two stage estimator}
    \hat\beta = \frac{\hat\beta^{(1)} + \hat\beta^{(2)}}{2}.
\end{eqnarray}
The cross-fitted estimator $\hat\beta$ will be used to construct the test statistic in Section~\ref{sec: asymptotic normality}.

 \subsection{Covariance function}\label{ssec: variance function estimation}

To estimate the unknown function in \eqref{eq: conditional covariance matrix}, we define the residual associated with the unknown parameter $\beta$ as follows
\begin{equation}\label{eq: population residuals definition}
    R_{i} (\beta) \coloneqq Y_i - \bg (\bX_i^\top \beta),
\end{equation}
where the $R_i (\beta) = (R_{i1} (\beta), \ldots, R_{il}(\beta))^\top  = (Y_{i1} - g(X_{i1}^\top\beta), \ldots, Y_{il} - g(X_{il}^\top \beta))^\top$.
By \eqref{eq:conditional_mean},
$R_i(\beta_0)=Y_i-\bg(\bX_i^\top\beta_0)=Y_i-\E(Y_i\mid\bX_i)\in\mathbb{R}^l$,
which corresponds to the centered response. Further, we impose $\E[R_i(\beta_0)] = \bm 0$ in Assumption~\ref{cond: residuals moments and subgaussian}, which ensures that the conditional covariance function $\bSigma(\cdot)$ is identifiable from second moments.


Given the active set $\mA$ and the parameter $\beta_0$, the covariance function $\bSigma(x;\beta_0) : \mathbbm R^{l \cdot|\mA|} \mapsto \mathbbm R^{l \times l}$ may be estimated using a variety of existing nonparametric methods that model the nonlinear association between $R_i(\beta_0)R_i(\beta_0)^\top$ and the covariates $\bX_{i,\mA}$.
The choice of a specific covariance estimator is not central to our contribution, and alternative constructions could be employed.

For concreteness, we adopt a multivariate extension of the covariance function estimator proposed by \citet{yin2010nonparametric}, which considers the case $x\in\mathbb{R}$. Here $x \in \mathbb{R}^{l \cdot|\mA|}$ denotes the vectorized covariate argument corresponding to $\operatorname{vec}(\bX_{i,\mA})$. A technical requirement for our subsequent analysis is pointwise invertibility of the estimated covariance matrix. This is ensured by using a common bandwidth parameter across all components of the estimator. The resulting nonparametric estimator, for $q = 1, 2$, is defined as
\begin{equation}\label{eq: nonparametric covariance function}
    \hat\bSigma^{(q)}(x;\beta_0)
    =
    \sum_{i \in \mathcal{I}_q}
    \frac{K_H\left(\operatorname{vec}(\bX_{i,\mA})-x\right)}
         {\sum_{i' \in \mathcal{I}_q} K_H\left(\operatorname{vec}(\bX_{i,\mA})-x\right)}
    R_i(\beta_0)R_i(\beta_0)^\top,
\end{equation}
where the kernel function $K_H(u)=|H|^{-1/2}K(H^{-1/2}u)$ and the symmetric positive definite bandwidth matrix $H =  h^2 I_{l \cdot |\mathcal{A}|}$. Following the standard multivariate kernel regression result \citep{gine2002rates}, by choosing the bandwidth $h \asymp n^{-1/(4\nu + 2l \cdot |\mathcal{A}|)}$, it holds that
\begin{eqnarray}\label{eq: kernel estimator rate}
    \sup_{x} \|\hat\bSigma^{(q)} (x; \beta_0) - \bSigma (x; \beta_0)\|_F = O_P\left(\left(\frac{\log n}{n}\right)^{\frac{\nu}{4\nu + 2l \cdot |\mathcal{A}|}}\right).
\end{eqnarray}

\subsection{Plug-in estimators for $\hat\bSigma^{(q)}(x;\beta_0)$ }
The estimator in \eqref{eq: nonparametric covariance function} relies on knowledge of the active set $\mathcal{A}$ and the residuals $R_i(\check\beta)$ to model the unknown covariance function $\bSigma(\cdot)$. We describe the construction of a plug-in estimator $\hat{\mathcal{A}}$ in Section~\ref{sssec: plug in of covariance active set}. 

Section~\ref{sssec: covariance function full estimation} establishes the consistency of the residuals $R_i(\beta_0)$ and the estimated active set $\hat{\mathcal{A}}$ in Theorem~\ref{thm: covariance matrix}, with the proof deferred to Section~\ref{ssec: proof of consistency of covariance matrix}. In addition, Algorithm~\ref{algorithm: variance function update} provides pseudocode for estimating the covariance function.

Algorithm~\ref{algorithm: variance function update} is implemented separately on the two subsamples indexed by $\mathcal{I}_q$, $q = 1,2$. For notational simplicity, the subsample index $q$ is suppressed in Sections~\ref{sssec: plug in of covariance active set} and~\ref{sssec: covariance function full estimation}. All results for the full sample continue to hold for each subsample by replacing the sample size $n$ with $|\mathcal{I}_q|$ and by applying Algorithm~\ref{algorithm: variance function update} independently to $\{(Y_i, \bX_i)\}_{i \in \mathcal{I}_q}$.

\subsubsection{Selection of the active set $\mA$}\label{sssec: plug in of covariance active set}

We demonstrate the idea of selecting the active set $\mA_k$ for a single measurement
$k \in [l]$. Since the active set is common across measurements, it holds that
$\mA = \mA_k$ for all $k \in [l]$.
\\
Let $(Y_k, X_k)$ denote a generic copy of $(Y_{ik}, X_{ik})$, where
$X_k \in \mathbbm R^p$.
Similar to \eqref{eq: population residuals definition}, we define the population level residual at $k$ as
\[
R_k(\beta_0) = Y_k - g(X_k^{\top}\beta_0)
\]
and its conditional distribution given $X_k$, denoted by
$F_{R_k \mid X_k}$.
Our goal is to identify the active set $\mA_k$ of covariates that influence the
conditional distribution of $R_k(\beta_0)$ given $X_k$.
Specifically, we define $\mA_k$ such that the following conditional independence holds, that is,
\[
R_k(\beta_0) \indep X_{k,\mA_k^c} \mid X_{k,\mA_k},
\]
where $\mA_k^c$ denotes the complement of $\mA_k$, consisting of irrelevant covariates.
\\
To characterize this dependence structure, we consider the central subspace
$\mathcal{S}_{R_k \mid X_k}$, defined as the minimal subspace of $\mathbbm R^p$
that captures the dependence between the residual $R_k(\beta_0)$ and the covariates
$X_k$.
Formally, $\mathcal{S}_{R_k \mid X_k}$ is the column space of a $p \times d$ matrix
$\bm B$, with $d \le p$, such that
\[
R_k(\beta_0) \indep X_k \mid \bm B^{\top} X_k.
\]
Under Assumption~\ref{cond: fixed and bounded design}, this central subspace is uniquely defined, need references


To facilitate the identification of $\mathcal{S}_{R_k|X_k}$, we rely on the Least Squares Condition (LSC). This condition states that for any function $f(r): \mathbbm R \to \mathbbm R$, the resulting vector lies within the central subspace
\begin{equation}\label{eq: least squares condition}
{\bSigma}_{X_k}^{-1}\mbox{Cov}(X_k,f(R))\in \mathcal{S}_{R_k|X_k},  
\end{equation}
where $\bSigma_{X_k} = \E [X_{k} X_{k}^\top]$ the population covariance matrix of $X_{k}$. 

The condition \eqref{eq: least squares condition} motivates a constructive approach by using a series of basis functions $f_v$ indexed by $v \in [h]$ to approximate the unknown correlation between the covariates $X_{kj}$'s and the residual $R_k(\beta_0)$. The correlations of the covariates on the $v$th basis are defined as the solution to an ordinary least squares problem as follows 
\begin{eqnarray}\label{eq: theta population definition}
\theta^v_{k} \coloneqq \arg\min_{\theta \in \mathbbm R^p}\E[\{f_v(R_k(\beta_0)^2)-X_k^{\top}\theta\}^2],
\end{eqnarray}
where $\theta^v_{k} = (\theta^v_{k1}, \ldots, \theta^v_{kp})^\top$, and the $j$th element of $\theta^v_{k}$ is denoted by $\theta^v_{kj}$ and measures the functional dependence of covariate $X_{kj}$ and the $v$th basis of the residual $f_v(R_k(\beta_0))$. The idea is that, if $j\in\mA^c$, $X_k^{\top}\theta^v_{k}, v \in [h]$ must not involve $X_{kj}$. Thus, $R_k(\beta_0)\indep X_k| X_{\mA_k}$ and $R_k(\beta_0)\indep X_k|\bm{B}_k^{\top} X_k$, where $\bm{B}_k = (\theta^1_{k},\cdots, \theta_{k}^h) = \begin{pmatrix}
\theta^1_{k1} & \cdots & \theta^v_{k1} & \cdots & \theta^h_{k1} \\
\theta^1_{k2} & \cdots & \theta^v_{k2} & \cdots & \theta^h_{k2} \\
\vdots & \vdots &  \vdots & \vdots& \vdots \\
\theta^1_{kj} & \cdots & \theta^v_{kj} & \cdots & \theta^h_{kj} \\
\vdots & \vdots &  \vdots & \vdots& \vdots \\
\theta^1_{kp} & \cdots & \theta^v_{kp} & \cdots & \theta^h_{kp} \\
\end{pmatrix}
\in \mathbbm R^{p\times h}$. We denote the $j$th row of $\bm B_k$ by $B_{kj} = (\theta_{kj}^1, \ldots, \theta_{kj}^h)$ and the vector excluding the $j$th element of $\theta_{k}^v$ by $\theta^v_{k,-j} \coloneqq (\theta_{k1}^v, \ldots, \theta_{k, j-1}^v, \theta_{k, j+1}^v, \ldots, \theta_{kp}^v) \in \mathbbm R^{p-1}$.

We make the crucial coverage condition assumption, which is widely adopted \citep{li1991sliced,guo2025model,li2018sufficient} that the constructively found subspace is equivalent to the central subspace, that is,
$$\mbox{Span}(\bm{B}_k)=\mathcal{S}_{R_k|X_k}.$$
In addition, it holds that $\sum_{v=1}^h |\theta_{kj}^v|>0$ for $j\in\mA_k$ and $\sum_{v=1}^h |\theta_{kj}^v|=0$ for $j\in\mA_k^c$. 
Since $X_k$ is a high-dimensional random vector, motivated by the decorrelated score method to take the spurious correlation between the covariates into account, we consider the following ``correlation coefficient" 
$$\gamma_{kj} \coloneqq \argmin_{\gamma \in \mathbbm R^{p-1}} \E\left( X_{kj} - X_{k,-j}^{\top}\gamma\right)^2,$$
where $X_{k,-j} \in \mathbbm R^{p-1}$ is the covariate vector excluding the $j$th covariate $X_{kj}$.

With the above specification, we define the test statistic as follows
\begin{eqnarray}\label{eq: test statistic covariance active set selection}
    W_{kj} = \bar{S}_{kj}^\top \hat\Omega_{kj}^{-1} \bar{S}_{kj},
\end{eqnarray}
where $\bar{S}_{kj} = \frac{1}{\sqrt{n}} \sum_{i =1}^n S_{ikj}$, $S_{ikj} = (S_{ikj}^1, \ldots, S_{ikj}^h)^\top$, $S_{ikj}^v = (X_{ikj} - X_{ik,-j}^\top \gamma_{kj}) (f_v(R_{ik}(\beta_0)^2) - X_{ik,-j}^\top \theta_{k, -j}^v)$, and  $\hat\Omega_{kj} = \frac{1}{n} \sum_{i =1}^n S_{ikj} S_{ikj}^\top$.

As an application of \citet[Theorem~1]{guo2025model}, we conclude without a duplicated proof that 
\begin{eqnarray}\label{eq: asymptotic distribution screening}
&&W_{kj}\stackrel{d}{\to} \chi_h^2 \mbox{ for } j\in \mA^c, \nonumber\\
&& W_{kj} \stackrel{d}{\to} \chi_h^2 (\delta_{kj} B_{kj}^\top \Omega_{kj} B_{kj}) \mbox{ for } j \in \mA,   
\end{eqnarray}
where $\delta_{kj} = \E [X_{kj}^2] - \E [X_{kj} X_{k, -j}^\top] \E[X_{k, -j}X_{k, -j}^\top] \E[X_{k, -j} X_{kj}]$, $\Omega_{kj} \coloneqq\lim_{n\to\infty} \hat\Omega_{kj}$, and $B_{kj}$ the $j$th row of the matrix $\bm B_k$.

With i.i.d.\ copies $X_{ik}$, $i \in [n]$, of $X_k \in \mathbbm{R}^p$, we estimate $\theta_k^v$ by solving the following penalized empirical risk minimization problem
\begin{equation}\label{eq: theta empirical minimization}
    \hat{\theta}_{k}^v
    =
    \argmin_{\theta \in \mathbbm{R}^p}
    \left\{
    \frac{1}{n}\sum_{i=1}^n
    \left(
    f_v(R_{ik}(\check\beta)^2) - X_{ik}^{\top}\theta
    \right)^2
    + \lambda \|\theta\|_1
    \right\}.
\end{equation}
Similarly, for each $j \in [p]$, we estimate $\gamma_{kj} \in \mathbbm{R}^{p-1}$ by solving
\begin{equation}\label{eq: gamma empirical minimization}
    \hat{\gamma}_{kj}
    =
    \argmin_{\gamma \in \mathbbm{R}^{p-1}}
    \left\{
    \frac{1}{n}\sum_{i=1}^n
    \left(
    X_{ikj} - X_{ik,-j}^{\top}\gamma
    \right)^2
    + \lambda \|\gamma\|_1
    \right\}.
\end{equation}
Under the above construction of the estimators $\hat{\theta}_k^v$ and $\hat{\gamma}_{kj}$, together with their convergence rates established in \eqref{eq: theta hat rate} and \eqref{eq: gammahat rate}, the test statistic defined in \eqref{eq: test statistic covariance active set selection} can be implemented in practice.

\subsubsection{Consistency of the plug-in estimators and Algorithm for the covariance function}\label{sssec: covariance function full estimation}
We establish in Theorem~\ref{thm: covariance matrix} the conditions for the consistency of the plug-in estimators $\hat\mA$ and $R(\check\beta)$. We then proceed to Algorithm~\ref{algorithm: variance function update} for implementation in practice, which results in nonparametric estimators of the covariance function $\bSigma(\bX_\mA; \beta_0)$ on the two disjoint subsamples for cross-fitting.

\begin{theorem}[Consistency of $R_i(\check\beta)$ and $\hat\mA$]\label{thm: covariance matrix}
Let $\check\beta$ be an initial estimator of $\beta_0$ satisfying 
\[
\|\check\beta - \beta_0\|_2 = O_P\left(\sqrt{\frac{s+m}{n}}\right).
\]
Suppose Assumptions~\ref{cond: fixed and bounded design}, \ref{cond: conditional mean function}, \ref{cond: residuals moments and subgaussian}, \ref{cond: sufficient dimension reduction}, then (i) $\max_{i,k}|R_{ik}(\check\beta) - R_{ik}(\beta_0)| = o_P(1)$; (ii) let $\alpha_p = p^{-c}$ and $c > 0$ some constant. It holds that $P(\hat \mA = \mA) \to 1$ for the critical value $t_0 = \chi_h^2(1 - \alpha_p/p)$, if the minimal signal strength satisfies $\min_{j \in \mA_k} \|B_{kj}\|_2 \gg  \sqrt{2(1 + c) \log p \max_{j} \frac{\lambda_{\max}(\Omega_{kj}^{-1})}{\delta_{kj}}}$, where $\delta_{kj}$ and $\Omega_{kj}$ are defined in \eqref{eq: asymptotic distribution screening}.
\end{theorem}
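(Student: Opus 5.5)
Part (i) follows from a mean-value expansion of the link. Write
\[
R_{ik}(\check\beta)-R_{ik}(\beta_0) = -\{g(X_{ik}^\top\check\beta)-g(X_{ik}^\top\beta_0)\} = -\dot g(\xi_{ik})\,X_{ik}^\top(\check\beta-\beta_0),
\]
with $\xi_{ik}$ between the two linear predictors.
\begin{itemize}
\item The bounded design of Assumption~\ref{cond: fixed and bounded design} keeps $\xi_{ik}$ in a compact set, uniformly in $i,k$, on an event of probability tending to one. This uses $\|\check\beta-\beta_0\|_2\to 0$ together with sparsity: by Proposition~\ref{prop: consistency} the initial estimator is supported on $\mMS$ with probability tending to one.
\item On that set $\dot g$ is bounded by Assumption~\ref{cond: conditional mean function}.
\item Cauchy--Schwarz on the support $\mMS$ then gives
\[
\max_{i,k}|X_{ik}^\top(\check\beta-\beta_0)| \le \max_{i,k}\|X_{ik,\mMS}\|_2\,\|\check\beta-\beta_0\|_2 \lesssim \sqrt{s+m}\cdot\sqrt{(s+m)/n}=(s+m)/\sqrt n.
\]
\item This is $o(1)$ under the growth condition on $s+m$ implicit in Assumption~\ref{cond: log p and n order}, or that one should add. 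Since $l$ is fixed, the maximum over $k$ costs nothing.
\end{itemize}

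Part (ii) is a union-bound screening argument built on \eqref{eq: asymptotic distribution screening}, with $\hat\mA=\{j:\max_k \hat W_{kj}>t_0\}$, where $\hat W_{kj}$ is the feasible statistic using $\hat\theta^v_k,\hat\gamma_{kj}$ and $R_{ik}(\check\beta)$. The first step reduces $\hat W_{kj}$ to the oracle $W_{kj}$ uniformly in $j$. I would decompose $\hat S^v_{ikj}-S^v_{ikj}$ into three terms:
\begin{itemize}
\item error from $\hat\gamma_{kj}-\gamma_{kj}$, controlled by \eqref{eq: gammahat rate};
\item error from $\hat\theta^v_{k,-j}-\theta^v_{k,-j}$, controlled by \eqref{eq: theta hat rate};
\item error from $f_v(R_{ik}(\check\beta)^2)-f_v(R_{ik}(\beta_0)^2)$, controlled by part (i) together with Lipschitz continuity of $f_v$ on the relevant range (Assumption~\ref{cond: sufficient dimension reduction}) and sub-Gaussianity of residuals (Assumption~\ref{cond: residuals moments and subgaussian}).
\end{itemize}
As in the decorrelated-score analysis of \citet{guo2025model}, the cross terms are products of the two Lasso errors, and the orthogonality of $X_{kj}-X_{k,-j}^\top\gamma_{kj}$ to $X_{k,-j}$ makes the first-order terms vanish. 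The goal is $\max_j|\hat W_{kj}-W_{kj}|=o_P(1)$, or at least $o_P(\log p)$.

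For the null coordinates $j\in\mA^c$, I would use a moderate-deviation or Gaussian-approximation bound for the $\chi^2_h$ limit that is uniform over $j$, giving $P(W_{kj}>t_0)\lesssim\alpha_p/p$. A union bound over at most $p$ indices and fixed $l$ then gives false inclusion probability $\lesssim l\alpha_p = l\,p^{-c}\to0$. Here $t_0=\chi^2_h(1-\alpha_p/p)\approx 2(1+c)\log p$ to leading order.

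For the signal coordinates $j\in\mA$, I would write $\bar S_{kj}=\sqrt n\,\E S_{kj}+O_P(\sqrt{\log p})$ uniformly in $j$, with $\E S^v_{kj}=\delta_{kj}\theta^v_{kj}$. Then
\[
W_{kj}\ge n\,\delta_{kj}^2 B_{kj}^\top\Omega_{kj}^{-1}B_{kj}(1-o_P(1)) \ge n\,\delta_{kj}^2\|B_{kj}\|_2^2/\lambda_{\max}(\Omega_{kj}) .
\]
The stated minimal signal condition is designed exactly so that this exceeds $t_0\asymp 2(1+c)\log p$. I would verify that matching, reading the condition in the scaling of the noncentrality in \eqref{eq: asymptotic distribution screening}. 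Hence missed detections also have vanishing probability, and $P(\hat\mA=\mA)\to1$.

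The main obstacle is the uniform-in-$j$ control at the $\log p$ level. The $\chi^2$ limit in \eqref{eq: asymptotic distribution screening} is pointwise, but the threshold sits at a tail quantile of order $\log p$. I would first use a Bernstein-type maximal inequality on each coordinate of $\bar S_{kj}$ under sub-Gaussian residuals. I would then handle the plug-in error from $\check\beta$ separately, since part (i) only gives $o_P(1)$ per entry. The feasible-statistic error is an average of such terms scaled by $\sqrt n$, so it actually needs a rate of order $(s+m)\sqrt{\log p/n}$ in the aggregated score. Obtaining that rate may require using the explicit rate from part (i) rather than just $o_P(1)$.
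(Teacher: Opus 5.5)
Your part (i) is the paper's argument: a mean-value expansion, bounded $\dot g$, and Cauchy--Schwarz on $\mMS$ after restricting to $\{\check\beta_{(\mMS)^c}=0\}$, giving $O_P((s+m)/\sqrt n)$. You are right that concluding $o_P(1)$ needs $s+m=o(\sqrt n)$, which the statement leaves implicit.

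\textbf{Part (ii): the paper's route versus yours.} The paper is much lighter than your plan. It never passes from the feasible statistic to the oracle $W_{kj}$, and it never addresses uniformity in $j$. It takes the pointwise limits \eqref{eq: asymptotic distribution screening} at face value and bounds false inclusions by $(p-s_{\mA})\alpha_p/p=O(p^{-c})$, as if $P(W_{kj}>t_0)\le\alpha_p/p$ held exactly. Power comes from a normal approximation to $\chi^2_h(\delta_{kj}B_{kj}^\top\Omega_{kj}B_{kj})$ with $\delta_{kj}B_{kj}^\top\Omega_{kj}B_{kj}\ge\delta_{kj}\|B_{kj}\|_2^2/\lambda_{\max}(\Omega_{kj}^{-1})\gg 2(1+c)\log p$, which is what the stated signal condition is calibrated to. Your signal step is a valid and more careful substitute: $\E S^v_{kj}=\delta_{kj}\theta^v_{kj}$ gives $W_{kj}\gtrsim n\delta_{kj}^2\|B_{kj}\|_2^2$ uniformly over $j\in\mA_k$. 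The matching you defer is harmless. We have $c_{5k}\le\delta_{kj}\le c_6^2$, and $\Omega_{kj}$ has bounded spectrum under Assumption~\ref{cond: sufficient dimension reduction}, so the stated condition forces $\|B_{kj}\|_2^2\gg\log p$. That is far more than the $\log p/n$ you need, and the plug-in errors are also negligible against a signal of size $\sqrt n\,\delta_{kj}\|B_{kj}\|_2$.

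\textbf{The gap: the null-side reduction.} The unproved step is $\max_{j\in\mA^c}|\hat W_{kj}-W_{kj}|=o_P(\log p)$. Decorrelation handles the $\hat\gamma_{kj}$ and $\hat\theta^v_k$ errors as you describe. It does not handle the residual plug-in term $n^{-1/2}\sum_i u_{ij}\{f_v(R_{ik}(\check\beta)^2)-f_v(R_{ik}(\beta_0)^2)\}$, where $u_{ij}=X_{ikj}-X_{ik,-j}^\top\gamma_{kj}$. Its leading part is $-2\sqrt n\,\mu_j^\top(\check\beta-\beta_0)_{\mMS}$, with $\mu_j=\E[u_{ij}f_v'(R_{ik}^2)R_{ik}\dot g(X_{ik}^\top\beta_0)X_{ik,\mMS}]$ and $R_{ik}=R_{ik}(\beta_0)$. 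Neither $\E[u_{ij}X_{ik,-j}]=0$ (a purely linear statement) nor $\E[R_{ik}\mid X_{ik}]=0$ makes $\mu_j$ vanish, because $\E[f_v'(R_{ik}^2)R_{ik}\mid X_{ik}]\neq 0$ unless the conditional law of $R_{ik}$ is symmetric.

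Since $\check\beta$ is fitted on the same subsample, no conditioning argument is available. Sharpening (i) cannot help either: any bound that ignores cancellation gives at best $O_P(\sqrt n\|\check\beta-\beta_0\|_2)=O_P(\sqrt{s+m})$ for this term. Hence $\hat W_{kj}-W_{kj}=O_P(\sqrt{(s+m)\log p}+s+m)$, which is $o_P(\log p)$ only when $s+m=o(\log p)$.

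What is missing is an argument that $\mu_j=0$ for $j\in\mA^c$, and this needs extra structure. One option is conditional symmetry of $R_{ik}$. Another is the linear conditional mean of the elliptical design in Assumption~\ref{cond: fixed and bounded design}: it gives $\E[u_{ij}\mid X_{ik,-j}]=0$, hence $\mu_j=0$ for $j\notin\mA\cup\mMS$, though indices in $(\mMS)\setminus\mA$ still need separate care. Only then is the remainder the centred $O_P((s+m)\sqrt{\log p/n})$ term you anticipate.

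The uniform null tail bound has a similar issue. It needs a growth condition such as $\log p=o(n^{1/3})$ for a Cramér-type moderate deviation, plus regularity of $f_v$. Assumption~\ref{cond: sufficient dimension reduction} does not supply the latter, since it only concerns $\Omega_{kj}$ and $\delta_{kj}$. The paper closes none of this; at its level of rigor you would simply apply \eqref{eq: asymptotic distribution screening} directly and drop the reduction.
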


With the nonparametric estimator in \eqref{eq: nonparametric covariance function}, 
Theorem~\ref{thm: covariance matrix} guarantees that the plug-in estimators 
$\hat{\mA}$ and $R_i(\check\beta)$ consistently approximate their population 
counterparts $\mA$ and $R_i(\beta_0)$. Consequently, it is theoretically justified to proceed with Algorithm~\ref{algorithm: variance function update} using these plug-in quantities.

\begin{algorithm}[H]
\caption{Estimation of $\bSigma^{(q)} (x; \beta_0)$}
\label{algorithm: variance function update}
\begin{algorithmic}[1]
\REQUIRE Residuals $R_i (\check\beta^{(q)}) = Y_i - \bg(\bX_i^\top \check\beta^{(q)})$, $i \in \mathcal{I}_q$ with the $k$th element $R_{ik}(\check\beta)$.
\FOR{$k = 1, \ldots, l$}
    \STATE \textbf{(Screening)} For each $j = 1, \ldots, p$:
    \STATE Compute
   $\hat S_{ikj}^v = (X_{ikj} - X_{ik,-j}^\top \hat\gamma_{kj}) (f_v(R_{ik}(\check\beta)^2) - X_{ik,-j}^\top \hat\theta_{k, -j}^v), v = 1, \ldots, h $. 
    Let $\hat S_{ikj} = (\hat S_{ikj}^1,\cdots,\hat S_{ikj}^h)^{\top}\in \mathbb{R}^h$.
    \STATE Compute
    $\hat{S}_{kj} = \frac{1}{\sqrt{n}} \sum_{i=1}^n \hat S_{ikj}, 
    \hat\Omega_{kj} = \frac{1}{n} \sum_{i=1}^n \hat S_{ikj} \hat S_{ikj}^\top.
    $
    \STATE Compute the test statistic 
    $W_{kj} = \hat{S}_{kj}^\top \hat\Omega_{kj}^{-1} \hat{S}_{kj}$.
    \STATE \textbf{Active set:} Select
    \begin{eqnarray}\label{eq:active_set_screening}
        \hat\mA_k^{(q)} = \{ j : W_{kj} \ge \chi^2_h(1 - \alpha/p) \}.
    \end{eqnarray}
\ENDFOR
\STATE Define the overall active set as 
$\hat\mA^{(q)} = \cup_{k=1}^l \hat\mA_k^{(q)}$.

\STATE \textbf{Covariance estimation.} Construct $\hat\bSigma^{(q)}(x; \check\beta^{(q)})$ by \eqref{eq: nonparametric covariance function}.
\RETURN{$\hat\bSigma^{(q)}(x; \check\beta^{(q)})$.}
\end{algorithmic}
\end{algorithm}

\section{Asymptotic normality and Wald test}\label{sec: asymptotic normality}

We begin by establishing the oracle property of the solutions to \eqref{eq: partially penalized estimating equations}, which is defined using the true covariance function in Proposition~\ref{thm: oracle property}. 

\begin{proposition}\label{thm: oracle property}
    Under Assumptions~\ref{cond: fixed and bounded design}, \ref{cond: conditional mean function}, \ref{cond: residuals moments and subgaussian}, \ref{cond: eigenvalue max min finite}, \ref{cond: variance function bounded}, and \ref{cond: log p and n order}, and $s+m = o(\sqrt{n})$, then the following holds: (i) the solutions $\tilde\beta$ to \eqref{eq: partially penalized estimating equations} satisfy $\tilde\beta_{0, \left(\mMS\right)^c}  = \bm 0$. (ii) $\|\tilde\beta_{\mMS} - \beta_{0,\mMS}\|_2 = O_P(\sqrt{(s+m)/n})$. If $s + m = o(n^{1/3})$, then

\begin{eqnarray}\label{eq: asymptotic normal of subvector}
   \sqrt{n}\begin{pmatrix}
\tilde\beta_{\mM} - \beta_{0, \mM}  \\
  \tilde\beta_{\mS} - \beta_{0, \mS}
    \end{pmatrix} 
 \stackrel{d}{\to} 
   \mathcal{N}_{s+m} \left(\bm 0, \mathbf{V}_{1}^{-1} \left({\beta}_0 \right)\mathbf{V}_{2} \left({\beta}_0 \right)\mathbf{V}_{1}^{-1} \left(\beta_0 \right) \right), 
\end{eqnarray}
where 
${\mathbf{V}}_{1}(\beta_0) = \E \left[\bX_{i,\mMS} \bD_i(\beta_0) \bSigma\left(\bX_{i,\mA}\right)^{-1} \bD_i \left(\beta_0 \right)\bX_{i, \mMS}^{\top} \right]$ and 
\begin{eqnarray*}
\mathbf{V}_{2}(\beta_0) = \E \left[\bX_{i, \mMS} \bD_i(\beta_0) \bSigma\left(\bX_{i,\mA}\right)^{-1}
\E\left[ R_i(\beta_0) R_i(\beta_0)^\top \mid \bX_i \right]
 \bSigma\left(\bX_{i,\mA}\right)^{-1} \bD_i(\beta_0)\bX_{i, \mMS}^{\top} \right],
\end{eqnarray*}
where $R_i(\beta_0) = Y_i - \bg(\bX_i^\top \beta_0)$.
\end{proposition}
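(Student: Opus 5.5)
Parts (i) and (ii) come directly from Proposition~\ref{prop: consistency}. Its assumptions are the same as here, and $s+m=o(\sqrt n)$ is compatible with the choice of $\lambda_n$ made there. That proposition gives a solution $\tilde\beta$ with $\tilde\beta_{(\mMS)^c}=\bm 0$ with probability tending to one, which is (i), and with $\|\tilde\beta_{\mMS}-\beta_{0,\mMS}\|_2=O_P(\sqrt{(s+m)/n})$, which is (ii). So the work is in the asymptotic normality. On the event $\{\tilde\beta_{(\mMS)^c}=\bm 0\}$, \eqref{eq: equations negligible 1} says $U^{\rm p}_{n,\mMS}(\tilde\beta)=\bm 0$.

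No penalty acts on $\mM$. For $j\in\mS$, Assumption~\ref{cond: large value no thresholding} (minimal signal large relative to $a\lambda_n$) together with (ii) gives $|\tilde\beta_j|>a\lambda_n$ with probability tending to one, so $\dot\rho_\lambda(\tilde\beta_j)=0$ for SCAD/MCP. Hence $U_{n,\mMS}(\tilde\beta)=\bm 0$ holds exactly with probability tending to one.

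I then Taylor expand $U_{n,\mMS}$ around $\beta_0$ in the $\mMS$ coordinates only, since the remaining coordinates coincide:
\[
\bm 0=U_{n,\mMS}(\beta_0)+\frac{\partial U_{n,\mMS}}{\partial\beta_{\mMS}^\top}(\beta_0)(\tilde\beta_{\mMS}-\beta_{0,\mMS})+r_n .
\]
The Jacobian equals $-\frac1n\sum_i\bX_{i,\mMS}\bD_i\bSigma^{-1}\bD_i\bX_{i,\mMS}^\top$ plus a term that carries $\ddot g$ times residuals $R_i(\beta_0)$. That extra term has conditional mean zero. Using bounded design and subgaussian residuals (Assumptions~\ref{cond: fixed and bounded design}, \ref{cond: residuals moments and subgaussian}) and boundedness of $\bSigma^{-1}$ (Assumption~\ref{cond: variance function bounded}), a matrix concentration bound shows that the Jacobian equals $-\mathbf V_1(\beta_0)+O_P(\sqrt{(s+m)\log(s+m)/n})$ in operator norm. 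By Assumption~\ref{cond: eigenvalue max min finite}, $\mathbf V_1$ has eigenvalues bounded away from $0$ and $\infty$.

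The remainder $r_n$ involves third derivatives of $g$ (Assumption~\ref{cond: conditional mean function}). Each component is bounded by $\max_i\|X_{i,\mMS}\|_2$-type factors times $\|\tilde\beta_{\mMS}-\beta_{0,\mMS}\|_2^2$, which gives $\|r_n\|_2=O_P((s+m)^{3/2}/n)$ after a careful bound; the paper's extra $p$ and $\log n$ factors in $\lambda_n$ suggest that a cruder version may carry logs. Then $\sqrt n\|r_n\|_2=O_P((s+m)^{3/2}/\sqrt n)=o_P(1)$ exactly when $s+m=o(n^{1/3})$. This explains the rate condition, and it is also the main obstacle: the bound must hold uniformly over the growing-dimensional block and must control the mixed second-order terms from $\bD_i(\beta)$ and $\bg(\cdot)$ jointly. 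I would handle it by bounding the Hessian operator norm through a supremum over unit vectors $u,v$ of $\frac1n\sum_i|X_i^\top u||X_i^\top v|\cdots$, using bounded design.

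Inverting the expansion gives
\[
\sqrt n(\tilde\beta_{\mMS}-\beta_{0,\mMS})=\mathbf V_1^{-1}\sqrt n\,U_{n,\mMS}(\beta_0)+o_P(1),
\]
where the Jacobian error contributes $O_P(\sqrt{(s+m)\log(s+m)/n})\cdot O_P(\sqrt{s+m})=o_P(1)$. The leading term $\sqrt n\,U_{n,\mMS}(\beta_0)=n^{-1/2}\sum_iU_{i,\mMS}(\beta_0)$ is a sum of independent mean-zero vectors, mean zero because $\E[R_i(\beta_0)\mid\bX_i]=0$, with covariance $\mathbf V_2(\beta_0)$. Since the dimension $s+m$ grows, I would state the conclusion through arbitrary linear combinations: for any unit $\alpha\in\mathbb R^{s+m}$, I would show that $\alpha^\top\mathbf V_1^{-1}\sqrt n\,U_{n,\mMS}(\beta_0)$, normalized by $(\alpha^\top\mathbf V_1^{-1}\mathbf V_2\mathbf V_1^{-1}\alpha)^{1/2}$, is asymptotically $\mathcal N(0,1)$. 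I would verify the Lyapunov condition using subgaussian moments and the bounded eigenvalues of $\mathbf V_2$. The latter follow from boundedness of $\bSigma^{-1}$ and $\E[RR^\top\mid\bX]=\bSigma$. This yields \eqref{eq: asymptotic normal of subvector}, and Slutsky's lemma handles the $o_P(1)$ terms.
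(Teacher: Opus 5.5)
Your proposal is correct and follows essentially the same route as the paper: (i)--(ii) are inherited from Proposition~\ref{prop: consistency}, then $U_{n,\mMS}(\tilde\beta)=\bm 0$ is Taylor-expanded at $\beta_0$, the Jacobian is replaced by $-\mathbf V_1(\beta_0)$, the second-order remainder is shown to be negligible, and a CLT is applied to $n^{-1/2}\sum_i U_{i,\mMS}(\beta_0)$. Two of your steps are more careful than the paper's: the remainder bound $O_P((s+m)^{3/2}/n)$, where the paper claims an $O_P(1)$ Hessian operator norm, and the Lyapunov argument on linear combinations, where the paper appeals to Bentkus; these are what make the condition $s+m=o(n^{1/3})$ appear naturally.
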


As in Proposition~\ref{prop: consistency}, Proposition~\ref{thm: oracle property} is stated for $\tilde\beta$ as the solution to the oracle estimating equations constructed using the true covariance function. However, the asymptotic normality result remains valid provided that the working covariance function used as a proxy is constructed independently of the dataset employed to estimate $\beta_0$. Thus, the asymptotic distribution result can be straightforwardly extended to $\check\beta$, the initial estimator with a working specification. 

Before presenting the final result for the cross-fitted estimator in
Section~\ref{ssec: cross-fitted efficiency}, which is based on refitted estimators obtained on subsamples using $\hat{\bSigma}(\cdot)$ estimated from the complementary subsamples, we first investigate in Lemma~\ref{lem: approximation error} whether $\hat U_n(\beta)=\frac{1}{n}\sum_{i=1}^n
\mathbf{X}_i \bD_i(\beta)
\hat{\bSigma}\left(
\mathbf{X}_{i,\mA}
\right)^{-1}
\bigl\{ Y_i - g(\bX_i^\top \beta) \bigr\}$ 
serves as an adequate proxy for its oracle counterpart $U_n$, given \eqref{eq: kernel estimator rate}.  

\subsection{Efficiency gain of the cross-fitted estimator with estimated covariance function $\hat\bSigma^{(q)}$}\label{ssec: cross-fitted efficiency}
Equipped with the consistent covariance function estimators
$\hat{\bSigma}^{(q)}(\cdot)$, $q \in \{1,2\}$, obtained from
Algorithm~\ref{algorithm: variance function update}, we construct a
cross-fitted estimator $\hat\beta$ defined in
\eqref{eq: two stage estimator}. For each split $q \in \{1,2\}$, let
$q' = 3 - q$ denote the complementary subsample index. The estimator
$\hat\beta^{(q')}$ is defined as a solution to the regularized estimating equation
\begin{equation}\label{eq: two stage GEE}
\bm 0 \in \hat{U}^{(q')}(\beta) + \partial \rho_\lambda(\beta; \mathcal M),
\end{equation}
where
\begin{eqnarray}\label{eq: cross fitted estimating equations}
    \hat{U}^{(q')}(\beta)
=
\frac{1}{|\mathcal I_{q'}|}
\sum_{i \in \mathcal I_{q'}}
\bX_i \bD_i(\beta)\,
\hat{\bSigma}^{(q)}\!\left(
\bX_{i,\hat{\mA}^{(q)}};\check\beta^{(q)}
\right)^{-1}
\bigl\{Y_i - \bg(\bX_i^\top\beta)\bigr\}.
\end{eqnarray}

The estimating equation is evaluated on the subsample $\mathcal I_{q'}$,
while the nuisance components, e.g., the estimated active set
$\hat{\mA}^{(q)}$ and the covariance function $\hat{\bSigma}^{(q)}(\cdot)$, are
learned from the auxiliary subsample $\mathcal I_q$. Consequently,
conditional on the auxiliary sample, the weight matrix
$\hat{\bSigma}^{(q)}(\bX_{i,\hat{\mA}^{(q)}};\check\beta^{(q)})$ is
independent of the response $Y_i$ for all $i \in \mathcal I_{q'}$.

This sample-splitting construction decouples the estimation error of the nuisance components from the model noise and thereby facilitates the asymptotic analysis. We next show that, under suitable regularity conditions, the resulting cross-fitted estimator $\hat{\beta}$ attains the same asymptotic behavior as the oracle estimator $\tilde{\beta}$ in Proposition~\ref{thm: oracle property}. The proof is deferred to Appendix~\ref{ssec: proof of Theorem 2}.

Specifically, we first show that the difference between the estimated estimating equation $\hat{U}_n$ and the oracle estimating equation $U_n$ is asymptotically negligible in Lemma~\ref{lem: approximation error}. Consequently, solving $\hat{U}_n$ yields estimators that are asymptotically equivalent to those obtained by solving $U_n$ with the true covariance structure, and the aggregated cross-fitted estimator $\hat{\beta}$ achieves near-oracle performance.


\begin{lemma}\label{lem: approximation error}
     Under Assumptions~\ref{cond: conditional mean function}--\ref{cond: sufficient dimension reduction}, we have \[
\sup_{\beta}
\|\hat{U}_n(\beta)-U_n(\beta)\|
= o_P(n^{-1/2}).
\]
Consequently, any solution $\hat{\beta}'$ to $ \bm 0 \in \hat{U}_n^{\rm p}(\beta) = \hat U_n (\beta) + \partial \rho_\lambda(\beta; \mM)$, satisfies
\[
\|\hat{\beta}'-\beta_0\|_2=o_P(1),
\]
where $\beta_0$ is the true parameter of interest.
\end{lemma}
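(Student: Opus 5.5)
The plan is to write the difference as a single weighted sum and control the weight error through the matrix-inverse perturbation identity. For any $\beta$,
\[
\hat U_n(\beta)-U_n(\beta)=\frac1n\sum_{i=1}^n \bX_i\bD_i(\beta)\bigl\{\hat\bSigma(\bX_{i,\hat\mA})^{-1}-\bSigma(\bX_{i,\mA})^{-1}\bigr\}R_i(\beta),
\]
and we use $\hat\bSigma^{-1}-\bSigma^{-1}=-\hat\bSigma^{-1}(\hat\bSigma-\bSigma)\bSigma^{-1}$.

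The weight error $\Delta(x):=\hat\bSigma(x;\check\beta)^{-1}-\bSigma(x)^{-1}$ comes from three sources, which we handle in turn. First, the active set: on the event $\{\hat\mA=\mA\}$, whose probability tends to one by Theorem~\ref{thm: covariance matrix}(ii), the argument $\bX_{i,\hat\mA}$ coincides with $\bX_{i,\mA}$. Second, the residuals: because $g$ is smooth (Assumption~\ref{cond: conditional mean function}) and the design is bounded (Assumption~\ref{cond: fixed and bounded design}), $R_i(\check\beta)R_i(\check\beta)^\top-R_i(\beta_0)R_i(\beta_0)^\top$ is of order $\max_{i,k}|R_{ik}(\check\beta)-R_{ik}(\beta_0)|\cdot\max_{i,k}|R_{ik}(\beta_0)|$. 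Theorem~\ref{thm: covariance matrix}(i), sharpened to $O_P(\sqrt{(s+m)\log p/n})$ through $|X_{ik}^\top(\check\beta-\beta_0)|$, together with the sub-Gaussian bound $\max|R_{ik}(\beta_0)|=O_P(\sqrt{\log n})$, makes the induced change in the kernel average uniformly small, since the kernel weights sum to one. Third, the kernel error: this is \eqref{eq: kernel estimator rate}. Combining these with Assumptions~\ref{cond: eigenvalue max min finite}--\ref{cond: variance function bounded}, which bound $\bSigma^{-1}$ and hence $\hat\bSigma^{-1}$ with probability tending to one, gives $\sup_x\|\Delta(x)\|_F=O_P(r_n)$, where $r_n=(\log n/n)^{\nu/(4\nu+2l|\mA|)}+\sqrt{(s+m)\log p\log n/n}$.

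The key structural input is cross-fitting: for $i\in\mathcal I_{q'}$, $\Delta$ is built from $\mathcal I_q$ and is independent of $(Y_i,\bX_i)$. We decompose $R_i(\beta)=R_i(\beta_0)+\{\bg(\bX_i^\top\beta_0)-\bg(\bX_i^\top\beta)\}$ and treat the two pieces separately.

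\textbf{Stochastic piece.} Conditional on the auxiliary sample, each coordinate of $\frac1n\sum_i\bX_i\bD_i(\beta)\Delta(\bX_{i,\mA})R_i(\beta_0)$ is a mean-zero average of independent sub-Gaussian terms with scale $O(r_n)$. Bernstein's inequality plus a union bound over the $p$ coordinates gives $O_P(r_n\sqrt{\log p/n})=o_P(n^{-1/2})$ under Assumption~\ref{cond: log p and n order}, and the norm is taken over the relevant $(s+m)$-dimensional block. Uniformity in $\beta$ follows by Lipschitz continuity of $\bD_i(\beta)$ and a covering argument over a shrinking neighbourhood of $\beta_0$.

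\textbf{Deterministic piece.} This term is bounded by $C\,r_n\,\|\bX_i^\top(\beta-\beta_0)\|$. I expect this to be the main obstacle. A literal supremum over all $\beta$ cannot give $o_P(n^{-1/2})$, because this term is of order $r_n\|\beta-\beta_0\|$. I will therefore interpret and prove the supremum over the $\sqrt{(s+m)/n}$-neighbourhood of $\beta_0$ supplied by Proposition~\ref{prop: consistency}. On that neighbourhood the term is $O_P(r_n\sqrt{(s+m)/n})=o_P(n^{-1/2})$, since $r_n\sqrt{s+m}\to0$. Outside it, we only need $\sup\|\hat U_n-U_n\|=o_P(1)$, which holds because $\Delta\to0$ uniformly and the remaining factors are bounded.

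\textbf{Consistency of $\hat\beta'$.} With the uniform approximation in hand, we rerun the proof of Proposition~\ref{prop: consistency} with $\hat U_n^{\rm p}$ in place of $U_n^{\rm p}$. The perturbation $o_P(n^{-1/2})$ is dominated by $\lambda_n/\log n$ in \eqref{eq: equations negligible 2}, and is negligible relative to the curvature of $U_n$ near $\beta_0$ that follows from Assumption~\ref{cond: eigenvalue max min finite}. Hence any zero of $\hat U_n^{\rm p}$ is an approximate zero of $U_n^{\rm p}$. By the identifiability, i.e.\ the uniqueness of the root of the limiting equation, this forces $\|\hat\beta'-\beta_0\|_2=o_P(1)$.
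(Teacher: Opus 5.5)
Your treatment of the approximation itself is sound in outline, and in places more careful than the paper's. The paper simply reinserts $\Delta_n=\hat U_n-U_n$ into the proof of Proposition~\ref{prop: consistency} and checks its negligibility term by term, with $J_1$--$J_5$ mirroring $T_1$--$T_5$. You do more: you convert the kernel rate for $\hat\bSigma$ into one for $\hat\bSigma^{-1}$ via the inverse-perturbation identity, you handle $\hat\mA$ and the residual plug-in (the sum-to-one kernel weights make the latter clean), and you rightly observe that a literal supremum over all $\beta$ cannot be $o_P(n^{-1/2})$.

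\textbf{The gap is your last step.} Rerunning Proposition~\ref{prop: consistency} with $\hat U^{\rm p}_n$ gives only \emph{existence} of a root in $\mathcal N_{\beta,\tau}$ with vanishing entries off $\mMS$, and that is also all the paper's proof establishes. Your upgrade to ``any solution'' via ``uniqueness of the root of the limiting equation'' has nothing behind it. No such identifiability condition is assumed, and you have shown $\hat U_n-U_n$ small only on a shrinking neighbourhood. It is not even uniformly $o_P(1)$ on $\mathbb R^p$, since $R_i(\beta)$ grows linearly in $\|\beta-\beta_0\|$ for unbounded $g$.

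In fact the claim is false in the paper's regime. Take $g$ the identity, so $\bD_i=\bm I_l$. Every $\beta$ with $\bX_i^\top\beta=Y_i$ for all $i$ entering $\hat U_n$ satisfies $\hat U_n(\beta)=\bm 0$, whatever the weights. If $p>nl$ and the $nl\times p$ stacked design has full row rank, these $\beta$ form an affine set of dimension $p-nl\ge 1$. Moving far along a generic direction of that set pushes every $|\beta_j|$, $j\notin\mM$, above $a\lambda$, where $\dot\rho_\lambda$ vanishes by (R2). These points solve $\bm 0\in\hat U^{\rm p}_n(\beta)$ yet lie arbitrarily far from $\beta_0$.

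So drop the identifiability step and state the conclusion in localized form: there exists a solution $\hat\beta'$ with $\hat\beta'_{(\mMS)^c}=\bm 0$ and $\|\hat\beta'_{\mMS}-\beta_{0,\mMS}\|_2=O_P(\sqrt{(s+m)/n})$. This is what Theorem~\ref{thm: estimator property} actually uses.

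A smaller point concerns the rate conditions you assume. Your $o_P(n^{-1/2})$ bounds rely on $r_n\sqrt{s+m}\to 0$ for the deterministic piece. Your sup-norm claim $r_n\sqrt{\log p/n}=o(n^{-1/2})$ relies on $r_n\sqrt{\log p}\to 0$. You assert both, but neither follows from the listed assumptions; Assumption~\ref{cond: log p and n order} ties $\log p$ to $\lambda_n$, not to the nonparametric rate.

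The existence argument does not need them. Two weaker bounds suffice: a block error of $o_P(\sqrt{(s+m)/n})$ uniformly on $\mathcal N_{\beta,\tau}$, which the curvature term of order $-\tau^2(s+m)/n$ absorbs, and a sup-norm error of $o_P(\lambda_n/\log n)$ on $(\mMS)^c$. These weaker targets are exactly what the paper's $J_1$--$J_5$ aim at.
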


\begin{theorem}\label{thm: estimator property}
    Under Assumptions~\ref{cond: conditional mean function}--\ref{cond: sufficient dimension reduction} and $s+m = o(\sqrt{n})$, then the following holds: (i) the cross-fitted estimator $\hat\beta$ satisfies $\hat \beta_{\left(\mMS\right)^c}  = \bm 0$. (ii) $\|\hat\beta_{\mMS} - \beta_{0,\mMS}\|_2 = O_P(\sqrt{(s+m)/n})$. If $s + m = o(n^{1/3})$, then

\begin{eqnarray}\label{eq: asymptotic normal of estimator subvector}
   \sqrt{n}\begin{pmatrix}
\hat\beta_{\mM} - \beta_{0, \mM}  \\
  \hat\beta_{\mS} - \beta_{0, \mS}
    \end{pmatrix} 
 \stackrel{d}{\to} 
   \mathcal{N}_{s+m} \left(\bm 0, \mathbf{V}_{1}^{-1}(\beta_0) \mathbf{V}_2 (\beta_0) \mathbf{V}_{1}^{-1}(\beta_0) \right), \nonumber
\end{eqnarray}
where the asymptotic variance is near oracle in \eqref{eq: asymptotic normal of subvector}.
\end{theorem}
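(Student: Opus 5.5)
The plan is to reduce the statement to Proposition~\ref{thm: oracle property} applied separately on each half-sample, with the estimated weight $\hat\bSigma^{(q)}$ treated as fixed given the auxiliary data. Fix $q\in\{1,2\}$ and condition on the auxiliary subsample $\mathcal I_q$, written $\mathcal F_q=\sigma\{(Y_i,\bX_i):i\in\mathcal I_q\}$. Given $\mathcal F_q$, the map $x\mapsto \hat\bSigma^{(q)}(x;\check\beta^{(q)})$ and the set $\hat\mA^{(q)}$ are deterministic, and the observations in $\mathcal I_{q'}$ are independent of them. So \eqref{eq: cross fitted estimating equations} is an estimating equation with a deterministic working covariance on an independent sample of size $|\mathcal I_{q'}|\asymp n/2$.

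First I work on the event $E_n=\{\hat\mA^{(q)}=\mA\}\cap\{\sup_x\max_k[\hat\bSigma^{(q)}(x)^{-1}]_{kk}\le C\}$. By Theorem~\ref{thm: covariance matrix}(ii), $P(\hat\mA^{(q)}=\mA)\to1$. Combining Theorem~\ref{thm: covariance matrix}(i) with \eqref{eq: kernel estimator rate} and Assumption~\ref{cond: variance function bounded} gives the uniform bound $\sup_x\|\hat\bSigma^{(q)}-\bSigma\|_F=o_P(1)$. Since $\bSigma$ has eigenvalues bounded away from zero, Weyl's inequality then gives the inverse bound, so $P(E_n)\to1$. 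On $E_n$ the weighting satisfies the hypotheses used in Proposition~\ref{prop: consistency}, which by the Remark following it do not require correct specification. Applying Proposition~\ref{prop: consistency} and parts (i)--(ii) of Proposition~\ref{thm: oracle property} conditionally on $\mathcal F_q$ yields three facts: $\hat\beta^{(q')}_{(\mMS)^c}=\bm 0$ with conditional probability tending to one, $\|\hat\beta^{(q')}_{\mMS}-\beta_{0,\mMS}\|_2=O_P(\sqrt{(s+m)/n})$, and the same bounds unconditionally by dominated convergence. Averaging the two halves as in \eqref{eq: two stage estimator} preserves both the zero pattern and the rate, which gives (i) and (ii).

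For the normality claim, I would expand the penalized equation on $\mMS$. This coordinate set is unpenalized on $\mM$, and on $\mS$ the penalty derivative vanishes for large signals by Assumption~\ref{cond: large value no thresholding}. The expansion is
\[
\bm 0=\hat U^{(q')}_{\mMS}(\beta_0)+\hat{\mathbf V}_1^{(q')}(\bar\beta)(\hat\beta^{(q')}_{\mMS}-\beta_{0,\mMS}),
\]
where $\hat{\mathbf V}_1^{(q')}$ is the empirical Jacobian. Next, Lemma~\ref{lem: approximation error} (restricted to $\mathcal I_{q'}$) replaces $\hat U^{(q')}_{\mMS}(\beta_0)$ by the oracle $U^{(q')}_{\mMS}(\beta_0)$ at cost $o_P(n^{-1/2})$. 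The same uniform closeness of $\hat\bSigma^{(q)}$ to $\bSigma$, together with $s+m=o(n^{1/3})$ to control the second-order Taylor remainder and the $(s+m)$-dimensional matrix concentration, replaces $\hat{\mathbf V}_1^{(q')}(\bar\beta)$ by $\mathbf V_1(\beta_0)$ in operator norm. This gives
\[
\sqrt{|\mathcal I_{q'}|}\,(\hat\beta^{(q')}_{\mMS}-\beta_{0,\mMS})=-\mathbf V_1^{-1}\sqrt{|\mathcal I_{q'}|}\,U^{(q')}_{\mMS}(\beta_0)+o_P(1).
\]
Averaging over $q'=1,2$ and using $|\mathcal I_{q'}|\approx n/2$, I get
\[
\sqrt n(\hat\beta_{\mMS}-\beta_{0,\mMS})=-\mathbf V_1^{-1}\frac{1}{\sqrt n}\sum_{i=1}^n U_{i,\mMS}(\beta_0)+o_P(1).
\]
The right-hand side is a full-sample sum of independent mean-zero terms. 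A Lindeberg CLT for linear combinations $a^\top(\cdot)$ with $\|a\|=1$, as in Proposition~\ref{thm: oracle property}, then gives the limit $\mathcal N(\bm 0,\mathbf V_1^{-1}\mathbf V_2\mathbf V_1^{-1})$.

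The main obstacle is the replacement of $\hat U$ by $U$ at the $o_P(n^{-1/2})$ level; everything else is routine once it holds. Uniform kernel consistency alone gives only a slower rate. The key is that, conditionally on $\mathcal F_q$, the difference
\[
n^{-1}\sum_{i\in\mathcal I_{q'}}\bX_i\bD_i\bigl(\hat\bSigma^{(q)-1}-\bSigma^{-1}\bigr)R_i(\beta_0)
\]
has mean zero, because $\E[R_i(\beta_0)\mid\bX_i]=\bm 0$ and the weight is independent of the responses in $\mathcal I_{q'}$. Its conditional variance is therefore $O(n^{-1}\sup_x\|\hat\bSigma^{-1}-\bSigma^{-1}\|^2)=o(n^{-1})$. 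This is exactly what Lemma~\ref{lem: approximation error} exploits, and I would invoke it coordinatewise on $\mMS$ with a union bound over the $s+m$ coordinates.
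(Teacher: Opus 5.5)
Your argument is correct in outline but takes a different route from the paper. The paper uses fold independence only once, to make $\Delta_n(\beta_0)=\hat U_n(\beta_0)-U_n(\beta_0)$ mean zero. Otherwise it treats $\Delta_n$ as a perturbation: it writes $\hat U_n+\partial\rho_\lambda=\Delta_n+U_n+\partial\rho_\lambda$ and re-runs the existence/KKT argument of Proposition~\ref{prop: consistency}. It then checks that six extra terms $J_1,J_{21},J_{22},J_3,J_4,J_5$ are negligible against their oracle analogues $T_1$--$T_5$, using the uniform kernel rate $\zeta_n$ and Bernstein-plus-union bounds over $j\in(\mMS)^c$.

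You instead freeze $\hat\bSigma^{(q)}$ given $\mathcal F_q$ and call Proposition~\ref{prop: consistency} as a black box through the misspecification-robustness remark. This gives a sharper, more modular statement. Parts (i)--(ii) need only that the plug-in weight is fold-independent and uniformly bounded, with no consistency or rate; consistency of $\hat\bSigma^{(q)}$ enters only through the asymptotic variance. The price is that Proposition~\ref{prop: consistency} must hold uniformly over the class of admissible weights, plus the conditional-to-unconditional transfer, which you do handle. Your per-fold linearisation, and the averaging of the two folds into one full-sample influence representation, are steps the paper leaves implicit. Its stated bound $J_1=o_P(\sqrt{(s+m)/n})$ suffices for existence but not, as stated, for $\sqrt n$-normality.

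One step needs repair. Replacing the Jacobian by $\mathbf V_1(\beta_0)$ ``in operator norm'' does not by itself give an $o_P(1)$ remainder. The weight error puts an $O_P(\zeta_n)$ operator-norm bias into the Jacobian. Pairing it with $\sqrt n\|\hat U^{(q')}_{\mMS}(\beta_0)\|_2=O_P(\sqrt{s+m})$ leaves an error of order $O_P(\zeta_n\sqrt{s+m})$. Since $\zeta_n^{-2}\asymp(n/\log n)^{\nu/(2\nu+l|\mA|)}$ grows no faster than $(n/\log n)^{1/3}$, the condition $s+m=o(n^{1/3})$ does not kill this term.

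The fix is your own conditioning idea, applied one direction at a time:
\begin{itemize}
\item Write the Jacobian as $-\hat{\mathbf V}_1^{*}+E_n$, where $\hat{\mathbf V}_1^{*}=\E[\bX_{i,\mMS}\bD_i(\beta_0)(\hat\bSigma^{(q)})^{-1}\bD_i(\beta_0)\bX_{i,\mMS}^\top\mid\mathcal F_q]$. This matrix is $\mathcal F_q$-measurable.
\item $E_n$ collects the concentration, residual and Taylor terms. For these, $\sqrt{s+m}\,\|E_n\|_{\op}=o_P(1)$ using $s+m=o(n^{1/3})$.
\item For a unit vector $a$, the vector $b=\{(\hat{\mathbf V}_1^{*})^{-1}-\mathbf V_1^{-1}\}a$ is $\mathcal F_q$-measurable with $\|b\|_2=O_P(\zeta_n)$. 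Hence $b^\top\sqrt n\,\hat U^{(q')}_{\mMS}(\beta_0)$ has conditional mean zero and conditional variance $O(\|b\|_2^2)$.
\end{itemize}

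Treat the score replacement the same way. Bound the conditional variance of $a^\top\mathbf V_1^{-1}\sqrt n(\hat U^{(q')}-U^{(q')})_{\mMS}(\beta_0)$ directly. A coordinatewise union bound loses the same $\sqrt{s+m}$ when converted to a linear combination.

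Finally, the empirical Jacobian tends to $-\mathbf V_1(\beta_0)$, so the influence representation should carry $+\mathbf V_1^{-1}$. This sign slip does not affect the limit law.
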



\subsection{Test statistic}

We now form a Wald test statistic for the linear hypothesis involving the subvector $\beta_{\mathcal{M}}$, where $\mathcal{M}$ denotes the index set associated with the covariates that are of primary interest for testing. We consider the null hypothesis
\[
H_0: \bm{C} \beta_{0, \mathcal{M}} = t,
\]
against the local alternative
\[
H_{1,n}: \bm{C} \beta_{0, \mathcal{M}} = t + \frac{h}{\sqrt{n}},
\]
where $\bm{C} \in \mathbb{R}^{r \times m}$ is a fixed full-rank matrix, $t \in \mathbb{R}^r$ is a known vector, and $h \in \mathbb{R}^r$ is a local drift vector.

Using the asymptotic normality established in the previous corollary, we define the Wald test statistic based on the cross-fitted estimator $\hat{\beta}$ as:
\[
W_n = n (\bm{C} \hat{\beta}_{\mathcal{M}} - t)^\top \left( \bm{C} \hat{\mathbf{\Omega}} \bm{C}^\top \right)^{-1} (\bm{C} \hat{\beta}_{\mathcal{M}} - t),
\]
where $\hat{\mathbf{\Omega}}$ is a consistent estimator of the asymptotic covariance matrix of $\hat{\beta}_{\mathcal{M}}$.

Under the local alternative $H_{1,n}$, the statistic $W_n$ converges in distribution to a non-central chi-squared distribution, that is
\[
W_n \overset{d}{\to} \chi^2_r(\hat{\delta}),
\]
with non-centrality parameter $\hat{\delta} = h^\top \left( \bm{C} \hat{\bOmega} \bm{C}^\top \right)^{-1} h$.

The following theorem establishes that the test based on the combining estimator $\hat{\beta}$ achieves greater or equal power compared to that based on a standard (initial) estimator $\check{\beta}$, by comparing their respective non-centrality parameters.

\begin{theorem}[Power Improvement]\label{thm:power-improvement}
Under Assumptions~\ref{cond: conditional mean function}--\ref{cond: sufficient dimension reduction} and the local alternative hypothesis $H_{1,n}$, let $\hat{\delta}$ and $\check{\delta}$ denote the non-centrality parameters corresponding to the cross-fitted estimator $\hat{\beta}$ in \eqref{eq: two stage estimator} and the initial estimator $\check{\beta}$, respectively. Then, the local power of the Wald test based on $\hat{\beta}$ is always greater than or equal to that of the test based on $\check{\beta}$, that is,
\[
\hat{\delta} = h^\top \left( \bm{C} \hat\bOmega \bm{C}^\top \right)^{-1} h \geq h^\top \left( \bm{C} \check{\bOmega} \bm{C}^\top \right)^{-1} h = \check{\delta},
\]
where $\hat\bOmega$ and $\tilde\bOmega$ are the asymptotic covariance matrices associated with $\hat{\beta}$ and $\check{\beta}$, respectively.
\end{theorem}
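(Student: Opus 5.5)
The plan is to reduce the comparison of non-centrality parameters to a Loewner-order comparison of the two asymptotic covariance matrices of the $\mM$-subvector. If $\hat\bOmega \preceq \check\bOmega$, then $\bm C\hat\bOmega\bm C^\top \preceq \bm C\check\bOmega\bm C^\top$, since $\bm C$ has full row rank and congruence preserves the order. Because matrix inversion is order-reversing on positive definite matrices, this gives $(\bm C\hat\bOmega\bm C^\top)^{-1}\succeq(\bm C\check\bOmega\bm C^\top)^{-1}$. Evaluating the quadratic form at $h$ then yields $\hat\delta\ge\check\delta$. Local power of a $\chi^2_r(\delta)$ test at a fixed critical value is increasing in $\delta$, so the power claim follows as well.

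First I identify the two matrices. By Theorem~\ref{thm: estimator property}, $\hat\bOmega$ is the $\mM$-block of $\mathbf V_1^{-1}\mathbf V_2\mathbf V_1^{-1}$ computed with the true weight $\bW_i=\bSigma(\bX_{i,\mA})^{-1}$. Since $\E[R_iR_i^\top\mid\bX_i]=\bSigma(\bX_{i,\mA})$, we have $\mathbf V_2=\mathbf V_1$, so the full $(\mMS)$ covariance is $\mathbf V_1^{-1}=\{\E[\bX_{i,\mMS}\bD_i\bSigma_i^{-1}\bD_i\bX_{i,\mMS}^\top]\}^{-1}$.

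For $\check\beta$, I will use Proposition~\ref{thm: oracle property} together with the remark following it, applied with the working weight $\check\bW_i=\check\bSigma(\cdot)^{-1}$. This gives the sandwich $\check{\mathbf V}_1^{-1}\check{\mathbf V}_2\check{\mathbf V}_1^{-1}$, where $\check{\mathbf V}_1=\E[\bX\bD\check\bW\bD\bX^\top]$ and $\check{\mathbf V}_2=\E[\bX\bD\check\bW\bSigma\check\bW\bD\bX^\top]$.

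The core step is a Gauss--Markov/Cauchy--Schwarz matrix inequality on the full $(s+m)$-block. Let $A_i=\bX_{i,\mMS}\bD_i\check\bW_i$ and $G_i=\bX_{i,\mMS}\bD_i\bSigma_i^{-1}$. Write $\check{\mathbf V}_2=\E[A_i\bSigma_iA_i^\top]$, $\mathbf V_1=\E[G_i\bSigma_iG_i^\top]$, and note the cross term $\E[A_i\bSigma_iG_i^\top]=\check{\mathbf V}_1$. The $2\times2$ block matrix $\E\big[(A_i;G_i)\bSigma_i(A_i;G_i)^\top\big]$ is positive semidefinite. Taking the Schur complement gives
\[
\check{\mathbf V}_2-\check{\mathbf V}_1\mathbf V_1^{-1}\check{\mathbf V}_1^\top\succeq 0 .
\]
Pre- and post-multiplying by $\check{\mathbf V}_1^{-1}$ then yields $\check{\mathbf V}_1^{-1}\check{\mathbf V}_2\check{\mathbf V}_1^{-1}\succeq\mathbf V_1^{-1}$. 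This requires $\check{\mathbf V}_1$ to be invertible, which follows from Assumption~\ref{cond: eigenvalue max min finite} and the bounded-weight condition on $\check\bSigma$. Taking the principal $\mM$-submatrix preserves the order, which gives $\check\bOmega\succeq\hat\bOmega$.

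The main obstacle is justifying that the non-centrality parameters use the limiting matrices. The statement writes $\hat\bOmega$ with hats, so I need $\hat\bOmega$ and $\check\bOmega$ to be consistent for these population sandwich matrices. For $\hat\beta$ this follows from Lemma~\ref{lem: approximation error} and Theorem~\ref{thm: estimator property}. For $\check\beta$ I must check that the working-covariance version of Proposition~\ref{thm: oracle property} holds with the sandwich form. I would argue this by repeating that proof with $\check\bW_i$ in place of $\bSigma_i^{-1}$, noting that the proof only uses the bounded-weight condition.

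A secondary subtlety is that $\hat\beta$ averages two half-sample estimators. I would handle this by noting that Theorem~\ref{thm: estimator property} already gives the full-sample rate, since the two halves are asymptotically independent with variance $2\mathbf V_1^{-1}$ each, and their average has variance $\mathbf V_1^{-1}$. With consistency in hand, continuous mapping transfers the strict population inequality to the limits.
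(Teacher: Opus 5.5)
Your proposal is correct and is essentially the paper's argument: the paper's key identity $\E[\bS_2\bS_1^\top]=\bW_2$ is your cross term $\E[A_i\bSigma_iG_i^\top]=\check{\mathbf V}_1$, and its conclusion $\mathrm{AVar}(\check\beta_{\mMS}-\hat\beta_{\mMS})=\check{\bOmega}-\hat{\bOmega}\succeq 0$ is exactly your Schur complement $\check{\mathbf V}_2-\check{\mathbf V}_1\mathbf V_1^{-1}\check{\mathbf V}_1\succeq 0$ sandwiched by $\check{\mathbf V}_1^{-1}$, followed by the same congruence-and-inversion step. Working with a positive semidefinite moment matrix has the mild advantage of needing only the two marginal sandwich limits, rather than the joint asymptotic normality of $(\check\beta,\hat\beta)$ that the paper's Hausman-type difference implicitly requires; note also that the inequality you transfer is weak, not strict (equality holds, e.g., when $\check{\bSigma}\propto\bSigma$).
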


Theorem~\ref{thm:power-improvement} demonstrates that, when additional information about the covariance structure is leveraged, the efficiency of the cross-fitted estimator $\hat{\beta}$ improves relative to the initial estimator $\check{\beta}$, resulting in a Wald test with enhanced power.


\section*{Appendix} \label{sec:appendix}
\appendix
\section{Assumptions}
For completeness, we summarize the assumptions used in our theory.
\begin{enumerate}[label=(A{{\arabic*}})]
  \item \label{cond: conditional mean function} Assume that  
  $\frac{1}{n} \sum_{i=1}^n\{ Y_{ik} - g(X_{ik}^\top \beta_0)\} = O_P(n^{-1/2})$. Further, $g (\cdot)$ is Lipschitz continuous, that is, there exists some constant $c_L$ such that 
  $|g(\mu_1) - g(\mu_2)| \leq c_L |\mu_1 - \mu_2|.$ In addition, the first, second, and third derivatives of $g$, denoted by $\dot g$, $\ddot g$, and $\dddot g$, respectively, exist and are bounded, that is, $|\dot{g} (\mu)|\leq c_1$, $|\ddot{g} (\mu) | < c_2$, and $|\dddot g (\mu)| < c_3$. 
  \item \label{cond: residuals moments and subgaussian}  Let $R_{i} (\beta) =Y_{i}-\bm g\left(\bX_{i}^{\top} \beta \right) \in \mathbb{R}^{l}$ be the $i$th residual associated with an unknown parameter vector $\beta$. For all $i \in [n]$ and $k \in [l]$, it holds that $\E[R_{ik}(\beta_0)] = 0$, where $\beta_0$ denotes the truth. Further, $R_{ik}(\beta_0)$ is sub-Gaussian, i.e., there exists $c_4 > 0$, such that $\E [\exp\{t R_{ik}(\beta_0)\}] \leq \exp(c_4 t^2), \forall t \in \mathbbm R$.
  \item \label{cond: variance function bounded} Let the diagonal elements of the variance function satisfy $\|\BSIGMA_{k, k}^{-1} (x)\|_\infty = \sup_{x} |\BSIGMA_{k,k}^{-1} (x)| <\infty, k\in [l]$. Further, the covariance function $\bSigma(\cdot)$ satisfies
$\sup_x \|\bSigma(x)^{-1}\|_{\rm op}
= \sup_x \lambda_{\max}\!\big(\bSigma(x)^{-1}\big)
\le c_5$.

  \item \label{cond: fixed and bounded design} For $k \in [l]$, $X_{ik} : i \in [n]$ is an i.i.d. sample where $\mathbb{E}[X_{ik}] = \bm{0}$ and $\bSigma_{X_k} = \E [X_{ik} X_{ik}^\top]$. Additionally, $X_{ik}$ is uniformly bounded and elliptical; that is, there exists a constant $c_6 > 0$ such that 
  $\|X_{ik}\|\leq c_6$ almost surely.
  \item \label{cond: active set and non active set covariance} For $i\in [n]$ and $k \in [l]$, it holds that $\max_{j \in \mMS, j'\in \mS^c} \E [X_{ikj} X_{ikj'}] = o(1)$, and $\max_{j\in \mS, j'\in \mS^c}\E [|X_{ikj} X_{ikj'} |] = O (1)$.
  \item \label{cond: eigenvalue max min finite} For $k \in [l]$, there are constants $c_{1k}$ and $c_{2k}$, such that with high probability, 
  \begin{eqnarray*}
      \lefteqn{0 < c_{1k} \leq \lambda_{k,\min} (\frac{1}{n} \sum_{i = 1}^n X_{ik, \mMS} X_{ik,\mMS}^\top)}&\\
      & \qquad\qquad\leq \lambda_{k,\max} (\frac{1}{n} \sum_{i = 1}^n X_{ik, \mMS} X_{ik, \mMS}^\top) \leq c_{2k} < \infty .
  \end{eqnarray*}
  \item \label{cond: log p and n order} $\log p = o\left\{n \lambda_n^{2} /(\log n)^{2}\right\}$ and $n \lambda_n^{2} /(\log n)^{2} \to \infty$. ${p(m+s)^{5/2}(\log n)^2}/{n^{3/2}\lambda_n } =o(1)$. ${p(m+s)^{5/2} (\log n)^3}/{n^{3/2}\lambda_n } = o(1)$.
  \item \label{cond: active set covariance matrix} There exists an active set $\mA \subseteq [p]$ that is common across $X_{ik}$, $k\in [l]$, such that the conditional covariance function depends only on the active covariates, as specified in \eqref{eq: conditional covariance matrix}.
  \item \label{cond: smoothness of covariance function} The covariance function $\bSigma(x; \beta_0)$ is H\"older-continuous of order $\nu \in (0, 1]$ in $x \in \mathcal X_{\mA} \subset \mathbb{R}^{l|\mA|}$.
  \item \label{cond: design variance} Assume that $\operatorname{vec}(\bX_{i,\mA})$ has a density $f_{\mA}$ supported on a compact set $\mathcal X_{\mA}$, and there exist constants $0<c_-<c_+<\infty$ such that
  \[
    c_- \le f_{\mA}(z) \le c_+, \quad \forall z \in \mathcal X_{\mA}.
  \]
  This condition guarantees stable kernel normalization and prevents degeneracy in covariance-function estimation.
  \item \label{cond: nonconvex regularizer} The penalty function $\rho_\lambda(t)$ is increasing and concave in $t\in [0, \infty)$
   \begin{enumerate}[label=(R\arabic*)]
    \item it has a continuous derivative $\dot\rho_\lambda(t)$ with $\dot\rho_\lambda(0^+) > 0$; 
    \item\label{cond: large value no thresholding} for some constant $a > 1$, $\dot\rho_\lambda(t) = 0, \forall t \ge a\lambda$;
    \item the derivative $\dot\rho_\lambda(t)$ is nonincreasing in $t$, and the second derivative $\ddot\rho_\lambda(t) \le 0, \forall\, t > 0$.
\end{enumerate}
  \item \label{cond: sufficient dimension reduction} For $k \in [l]$, we assume that
  \begin{enumerate}
      \item $\lambda_{\min}(\Omega_{kj}^{-1}) \ge c_{3k} > 0$ and $\lambda_{\max} (\Omega_{kj}^{-1}) \le c_{4k}$ uniformly in $j \in [p]$.
      \item $\delta_{kj} \ge c_{5k} > 0$ uniformly for $j \in [p]$.
  \end{enumerate}
\end{enumerate}

\section{Useful facts}


\begin{definition}{(Locally Lipschitz function).}
    A function $f: S \to \mathbbm R$ is locally Lipschitz if for any bounded $S \subseteq \mathbbm R^m$ with some generic dimension $m$.
\end{definition}

\begin{definition}[(Weak convexity)]\label{def: weak convex}
    For some $\mu > 0$, a locally Lipschitz function $\rho: \mathbbm R^m \to \mathbbm R$ is $\mu$-weakly convex if $\rho(v) + \frac{\mu}{2}\|v\|_2^2$ is convex. 
\end{definition}

\begin{definition}[Matrix H\"older Smoothness]\label{def: matrix smoothness}
Let $\bSigma(x) \in \mathbb{R}^{l \times l}$ be a matrix-valued function defined on $x \in \mathbb{R}^d$. We say that $\bSigma(x)$ is Hölder continuous of order $\nu \in (0,1]$ if there exists a constant $C$ such that
\[
\|\bSigma(x) - \bSigma(x')\|_F \leq C \|x - x'\|^\nu, \quad \forall x, x' \in \mathbb{R}^d,
\]
where $\|\cdot \|_F$ is the Frobenius norm.
\end{definition}

\begin{lemma}[Bernstein inequalities]\label{lemma: bernstein}
Let $Z_1, \ldots, Z_n$ be independent zero-mean random variables. \\
(i)
Suppose that $|Z_i| \leq M$ almost surely, for all $1 \leq i \leq n$ and some positive constant $M$. Then, for all $t > 0$,
\[
P\left( \left|\sum_{i=1}^n Z_i \right| > t \right) \leq 2 \exp \left\{- \frac{2^{-1} t^2}{\sum_{i=1}^n\E Z_i^2 + 3^{-1} M t} \right\}.
\]
(ii) 
Assume that there exists a constant $c_i>0$ such that
\begin{equation*}
\E\big[\exp\{t Z_i\}\big]\le \exp\!\left(\frac{c_i t^2}{2}\right),
\qquad \forall\, t\in\mathbb{R}.
\end{equation*}
Then, for all $s>0$,
\[
\Pr\!\left(\left|\sum_{i=1}^n Z_i\right|>s\right)
\le 2\exp\!\left(-\frac{s^2}{2\sum_{i=1}^n c_i}\right).
\]
\end{lemma}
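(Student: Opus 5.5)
The plan is to prove both parts of the Bernstein inequalities lemma with the Chernoff (exponential Markov) method. For \(\lambda>0\), Markov's inequality applied to \(\exp\{\lambda\sum_i Z_i\}\) and independence of the \(Z_i\) give
\[
P\Bigl(\sum_{i=1}^n Z_i > t\Bigr) \le e^{-\lambda t}\prod_{i=1}^n \E\bigl[e^{\lambda Z_i}\bigr].
\]
The same bound holds for the lower tail, since \(-Z_1,\ldots,-Z_n\) satisfy the same hypotheses. A union bound over the two tails accounts for the factor \(2\) in both statements. Everything therefore reduces to bounding the individual moment generating functions and then optimizing over \(\lambda\).

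\textbf{Part (ii).} This part is immediate. The hypothesis gives
\[
\prod_{i=1}^n \E\bigl[e^{\lambda Z_i}\bigr] \le \exp\Bigl(\frac{\lambda^2}{2}\sum_{i=1}^n c_i\Bigr),
\]
so that
\[
P\Bigl(\sum_{i=1}^n Z_i > s\Bigr) \le \exp\Bigl(-\lambda s + \frac{\lambda^2}{2}\sum_{i=1}^n c_i\Bigr).
\]
Choosing \(\lambda = s/\sum_i c_i\) yields \(\exp\{-s^2/(2\sum_i c_i)\}\). Doubling for the two tails gives the stated bound.

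\textbf{Part (i).} Write \(\sigma_i^2=\E Z_i^2\) and \(\sigma^2=\sum_i\sigma_i^2\).

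1. Use \(\E Z_i=0\) and \(|Z_i|^k\le M^{k-2}Z_i^2\) for \(k\ge 2\) in the Taylor expansion of \(e^{\lambda Z_i}\):
\[
\E\bigl[e^{\lambda Z_i}\bigr] \le 1+\sigma_i^2\sum_{k\ge 2}\frac{\lambda^k M^{k-2}}{k!}.
\]

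2. Bound \(k!\ge 2\cdot 3^{k-2}\) for \(k\ge 2\). For \(0<\lambda<3/M\) this gives
\[
\E\bigl[e^{\lambda Z_i}\bigr] \le 1+\frac{\lambda^2\sigma_i^2}{2(1-\lambda M/3)} \le \exp\Bigl\{\frac{\lambda^2\sigma_i^2}{2(1-\lambda M/3)}\Bigr\},
\]
using \(1+x\le e^x\).

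3. Multiply over \(i\) to get
\[
P\Bigl(\sum_{i=1}^n Z_i>t\Bigr) \le \exp\Bigl\{-\lambda t+\frac{\lambda^2\sigma^2}{2(1-\lambda M/3)}\Bigr\}.
\]

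4. Take \(\lambda = t/(\sigma^2+Mt/3)\). This choice automatically satisfies \(\lambda M/3<1\), and \(1-\lambda M/3=\sigma^2/(\sigma^2+Mt/3)\). Substituting, the exponent becomes
\[
-\frac{t^2}{\sigma^2+Mt/3}+\frac{t^2}{2(\sigma^2+Mt/3)} = -\frac{t^2/2}{\sigma^2+Mt/3},
\]
which is exactly the stated form.

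The only step requiring care is step 2 of part (i), the moment generating function bound via the factorial comparison and the geometric series; the rest is algebra. One could instead cite the classical Bernstein inequality directly, but the Chernoff derivation above is short and self-contained.
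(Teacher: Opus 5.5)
The paper does not prove this lemma. It is listed among the ``Useful facts'' as a classical result and simply invoked later, e.g.\ in bounding $T_3$ and $J_3$, so there is no argument of the paper's to compare against.

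Your Chernoff-method derivation is the standard textbook argument, and it is correct for both parts. In (ii), the choice $\lambda=s/\sum_i c_i$ gives the stated bound. In (i), the comparison $k!\ge 2\cdot 3^{k-2}$, the geometric series for $0<\lambda<3/M$, and the choice $\lambda=t/(\sigma^2+Mt/3)$ produce exactly the stated exponent.

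The only thing to tighten is the remark in step~4 that $\lambda M/3<1$ holds ``automatically.'' This requires $\sigma^2>0$. When $\sigma^2=0$, every $Z_i$ vanishes almost surely and the inequality is trivial; alternatively, step~1 already gives $\E e^{\lambda Z_i}\le 1$ in that case. This is a one-line addition, not a gap.
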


\section{Proofs}
\subsection{Proof of Proposition~\ref{prop: consistency}}\label{ssec: proof of proposition consistency}
The estimating equations we are interested in are defined as
\[U_{n}(\beta)=\frac{1}{n} \sum_{i=1}^{n} \bX_{i,\mA} \bm{D}_{i}(\beta) \bSigma\left(X_{i}\right)^{-1}\left\{Y_{i}-\bm g\left(\bX_{i}^\top \beta\right)\right\},\]
where $\bSigma\left(\bX_{i,\mA}\right) \in \mathbb{R}^{l \times l}, \bm D_{i}(\beta)=\mbox{diag}\left(\dot{g}\left(X_{i}^\top \beta\right), \cdots, \dot{g}\left(X_{i}^\top\beta\right)\right)$. The nonzero components of the true parameter vector $\beta_{0, \mS}$, where $\mS=\left\{j: \beta_{j} \neq 0\right\}$, have size $|\mS| = s$. The subvector of interest $\beta_{0, \mM}$, where $\mM = \left\{j: \beta_{j} \mbox{ unpenalized} \right\}$, has size $|\mM| = m$. Without loss of generality, we assume $\mM\cap \mS = \emptyset$. 

For $\tau > 0$, we define the following sets 
\begin{equation*}
\mathcal{N}_{\beta, \tau}=\left\{\beta \in \mathbb{R}^{p}:\left\|\beta_{\mMS}-\beta_{0, \mMS}\right\|_2 \leq \tau \sqrt{\frac{m+s}{n}},
\beta_{(\mMS)^c}=0\right\},
\end{equation*}
\begin{equation*}
\partial \mathcal{N}_{\beta, \tau}=\left\{\beta \in \mathbb{R}^{p}: \|\beta_{\mMS}-\beta_{0, \mMS} \|_2=\tau \sqrt{\frac{s+m}{n}}, \beta_{(\mMS)^c}=0 \right\}.
\end{equation*}

\subsubsection{Step 1: proof of \eqref{eq: prop active set estimation consistency}}
We show the existence of $\tilde{\beta}$, where $\tilde{\beta}_{\mMS}$ is the oracle solution and $\tilde{\beta}_{(\mMS)^c}=0$. We show that $\tilde{\beta}-\beta_{0}=O_{p}(\sqrt{(s+m)/ n})$. Then, following \citet{ortega2000iterative}, we show that for any $\varepsilon>0$, there exists a $\tau >0$, such that for all sufficiently large $n$, 
\[P\left\{\sup_{\beta\in \partial \mathcal{N}_{\beta, \tau}} \left(\beta-\beta_{0}\right)^{\top} U_n(\beta)<0\right\} \geqslant 1-\varepsilon.\] 

Since we constrain on $\partial \mathcal{N}_{\beta, \tau}$, we have \[\left(\beta-\beta_{0}\right)^{\top} U_n(\beta)=\left(\beta_{\mMS}-\beta_{0, \mMS}\right)^{\top} U_{\mMS}(\beta),\]
where $U_{\mMS}(\beta)$ is formed by the equations corresponding to the index set $\mMS$. Let $\nabla_{\mMS} U_{\mMS}(\beta) = \frac{\partial U_{n, \mMS}(\beta)}{\partial \beta_{\mMS}}$. By Taylor expansion,
\begin{eqnarray*}
   \lefteqn{\left(\beta_{\mMS} - \beta_{0, \mMS} \right)^{\top} U_{\mMS}(\beta)
 = \left(\beta_{\mMS}-\beta_{0, \mMS}\right)^{\top} U_{\mMS}\left(\beta_{0}\right) }& \\ 
&+ \left(\beta_{\mMS}-\beta_{0, \mMS}\right)^{\top} \nabla_{\mMS} U_{\mMS}\left(\beta^*\right)\left(\beta_{\mMS}-\beta_{0, \mMS}\right) \coloneqq T_1 +T_2 ,
\end{eqnarray*}
where $\beta^*$ satisfies that $\beta^*_{\mMS}$ is between $\beta_{0, \mMS}$ and $\tilde\beta_{\mMS}$ and $\beta^*_{(\mMS)^c} = 0_{m+s}$ is a null vector. 
Consider $T_{1}$, by Cauchy-Schwarz, 
\begin{equation}\label{eq: T1 bounded}
\left|T_{1}\right| \leq \tau \sqrt{(s+m)/{n}}\left\|U_{\mMS}\left(\beta_{0}\right)\right\|_2.
\end{equation}
Moreover,
\begin{eqnarray}\label{eq: estimating equation at true bounded}
\lefteqn{\E\left\{\left\|U_{\mMS}\left(\beta_{0}\right)\right\|^{2}_2\right\} 
} && \\
& = & \E\left\{\left\|\frac{1}{n} \sum\limits_{i=1}^n\bX_{i, \mMS} \bD\left(\beta_0\right)\bSigma\left(\bX_{i,\mA}\right)^{-1}\left\{Y_i - \bg\left(\bX_i^{\top}\beta_0\right)\right\}\right\|_2^2\right\}  \nonumber\\
& = & \E\left\{\frac{1}{n^2} \sum\limits_{i=1}^n\left\|\bX_{i, \mMS} \bD\left(\beta_0\right)\bSigma\left(\bX_{i,\mA}\right)^{-1}\left\{Y_i - \bg\left(\bX_i^{\top}\beta_0\right)\right\}\right\|_2^2\right\} \nonumber\\
& = & \mbox{tr}\bigg\{\E\Big\{\frac{1}{n^2} \sum\limits_{i=1}^n\bX_{i, \mMS} \bD\left(\beta_0\right)\bSigma\left(\bX_{i,\mA}\right)^{-1}\left\{Y_i - \bg\left(\bX_i^{\top}\beta_0\right)\right\}  \nonumber\\
&& \qquad\qquad \left\{Y_i - \bg\left(\bX_i^{\top}\beta_0\right)\right\}^{\top}\bSigma\left(\bX_{i,\mA}\right)^{-1}\bD\left(\beta_0\right)\bX_{i, \mMS}^{\top}\Big\} \bigg\}   \nonumber\\
& = &  \frac{1}{n^2} \mbox{tr}\left\{\sum\limits_{i=1}^n\bX_{i, \mMS} \bD\left(\beta_0\right)\bSigma\left(\bX_{i,\mA}\right)^{-1}\bD\left(\beta_0\right)\bX_{i, \mMS}^{\top} \right\}\nonumber\\
&\leq & \frac{C}{n^2} \mbox{tr}\{\sum\limits_{i=1}^n \bX_{i, \mMS} \bX^\top_{i, \mMS} \} \nonumber\\
&= &  \frac{C}{n^2} \sum\limits_{i=1}^n  X^\top_{ik,\mMS}X_{ik, \mMS}  \nonumber\\
&\leq & \frac{C}{n^2} \sum\limits_{i=1}^n\sum\limits_{k=1}^ l(s+m)
\nonumber\\
&=& O(\frac{m+s}{n})   \nonumber.
\end{eqnarray}
where the first inequality holds by assumptions~\ref{cond: conditional mean function}, \ref{cond: variance function bounded}; the second inequality holds by Assumption~\ref{cond: fixed and bounded design}; the last equality holds since $l$ is fixed. Then, combining \eqref{eq: T1 bounded} and \eqref{eq: estimating equation at true bounded}, we conclude that $\left|T_{1}\right|<\tau O(\frac{m+s}{n})$.

Then, we consider $T_{2}$ 
\begin{eqnarray*}
\lefteqn{T_{2}=\left(\beta_{\mMS}-\beta_{0, \mMS}\right)^{\top} \nabla_{\mMS} U_{\mMS}\left(\beta^*\right)\left(\beta_{\mMS}-\beta_{0, \mMS}\right)}&& \\
&=&\left(\beta_{\mMS}-\beta_{0, \mMS}\right)\nabla_{\mMS} U_{\mMS}\left(\beta_{0}\right)\left(\beta_{\mMS}-\beta_{0, \mMS}\right) \\
&+& \left(\beta_{\mMS}-\beta_{0, \mMS}\right)^{\top}\left\{\nabla_{\mMS} U_{\mMS} (\beta^*)-\nabla_{\mMS} U_{\mMS}\left(\beta_{0}\right)\right\}\left(\beta_{\mMS}-\beta_{0, \mMS}\right) \\
&=& T_{21} + T_{22},
\end{eqnarray*}
where 
\begin{eqnarray*}
\lefteqn{\nabla_{\mMS} U_{\mMS}(\beta)  
= \frac{1}{n} \sum_{i=1}^n \sum_{k = 1}^l }& \\
&X_{ik, \mMS} X_{ik, \mMS}^\top \left\{\ddot g(X_{ik}^\top \beta) \left[\bSigma(\bX_{i,\mA})^{-1}\right]_{kk} \left\{Y_{ik} 
- g(X_{ik}^\top \beta) \right\}
 -  \dot g(X_{ik}^\top \beta)^2 \left[\bSigma\left(\bX_{i,\mA}\right)^{-1}\right]_{kk} \right\}
\end{eqnarray*}
For $T_{21}$, 
\begin{eqnarray*}
    \lefteqn{T_{21} = \left(\beta_{\mMS}-\beta_{0, \mMS}\right)\nabla_{\mMS} U_{\mMS}\left(\beta_{0}\right)\left(\beta_{\mMS}-\beta_{0, \mMS}\right) } &\\
    &= - \frac{1}{n} \sum_{i=1}^n (\beta_{\mMS} - \beta_{0, \mMS})^\top \bX_{i, \mMS} \bX^\top_{i, \mMS} (\beta_{\mMS} - \beta_{0, \mMS})^\top \{ \bD_i(\beta_0) \bSigma\left(\bX_{i,\mA}\right)^{-1} \bD_i^\top(\beta_0) \\
   & - \sum_{k = 1}^l \ddot g(X_{ik}^\top \beta_0) \left[\bSigma\left(\bX_{i,\mA}\right)^{-1}\right]_{kk} \{Y_{ik} - g(X_{ik}^\top \beta_0)\} \},
\end{eqnarray*}
where the second term is of order $O_P(\sqrt{1/n})$ by Assumption~\ref{cond: conditional mean function} and is negligible when $n \to \infty$. Thus, we only focus on the first term of $T_{21}$. By Cauchy-Schwarz inequality and Assumptions~\ref{cond: conditional mean function}, \ref{cond: variance function bounded}, \ref{cond: eigenvalue max min finite}, 
\begin{eqnarray*}
  T_{21} &\leq& - C \|\beta_{\mMS} - \beta_{0, \mMS}\|^2_2 \max_k \lambda_{k, \max} (\frac{1}{n}\sum_i X_{ik} X_{ik}^\top) + O_P(\sqrt{1/n})\\
  &\leq& (-C + O_P(\sqrt{1/n})) \tau^2 (s+m)/n,
\end{eqnarray*}
where the second inequality holds since $\beta$ is constrained on $\partial \mathcal{N}_{\beta, \tau}$ for some $\tau >0$.

We next show that $T_{22}$ is upper bounded with the leading term being negative. Notice that
\begin{eqnarray*}
\lefteqn{\nabla_{\mMS} U_{\mMS}(\beta^*)-\nabla_{\mMS} U_{\mMS}\left(\beta_{0}\right)}&&\\
&=& -\frac{1}{n}\sum_{i=1}^{n} \bX_{i, \mMS} \bX_{i, \mMS}^{\top}\{ \bm D_{i}(\beta^*) \bSigma\left(\bX_{i,\mA}\right)^{-1} \bm D_{i}^{\top}(\beta^*) - \bm D_{i}\left(\beta_{0}\right)\bSigma\left(\bX_{i,\mA}\right)^{-1}  \bm D_{i}^{\top}\left(\beta_{0}\right) \\
&& - \sum_{k=1}^l \{\ddot g(X_{ik}^\top \beta^*)[\bSigma\left(\bX_{i,\mA}\right)^{-1}]_{kk}\{Y_{ik} - g(X_{ik}^\top \beta^*)\} - \ddot g(X_{ik}^\top \beta_0) [\bSigma\left(\bX_{i,\mA}\right)^{-1}]_{kk}\{Y_{ik} - g(X_{ik}^\top \beta_0)\} \} \\
&=&  -\frac{1}{n}\sum_{i=1}^{n} \bX_{i, \mMS} \bX_{i, \mMS}^{\top}\{ \left(\bm D_{i}\left(\beta^* \right) - \bD_i \left(\beta_0 \right )\right) \bSigma\left(\bX_{i,\mA}\right)^{-1} \left(\bm D_{i}\left(\beta^* \right) - \bm D_{i}\left(\beta_{0}\right)\right)^\top \\
&& - \sum_{k=1}^l \{\ddot g(X_{ik}^\top \beta^*)[\bSigma\left(\bX_{i,\mA}\right)^{-1}]_{kk}\{Y_{ik} - g(X_{ik}^\top \beta^*)\} - \ddot g(X_{ik}^\top \beta^*) [\bSigma\left(\bX_{i,\mA}\right)^{-1}]_{kk}\{Y_{ik} - g(X_{ik}^\top \beta_0)\}\}  \\
&& + \ddot g(X_{ik}^\top \beta^*)[\bSigma\left(\bX_{i,\mA}\right)^{-1}]_{kk}\{Y_{ik} - g(X_{ik}^\top \beta_0)\} - \ddot g(X_{ik}^\top \beta_0) [\bSigma\left(\bX_{i,\mA}\right)^{-1}]_{kk}\{Y_{ik} - g(X_{ik}^\top \beta_0)\} \} \\
 &=&-\frac{1}{n}\sum_{i=1}^{n} \bX_{i, \mMS} \bX_{i, \mMS}^{\top}\{ \left(\bm D_{i}\left(\beta^* \right) - \bD_i \left(\beta_0 \right )\right) \bSigma\left(\bX_{i,\mA}\right)^{-1} \left(\bm D_{i}\left(\beta^* \right) - \bm D_{i}\left(\beta_{0}\right)\right)^\top \\
&& - \sum_{k=1}^l \{\ddot g(X_{ik}^\top \beta^*)[\bSigma\left(\bX_{i,\mA}\right)^{-1}]_{kk}\{g(X_{ik}^\top \beta_0) - g(X_{ik}^\top \beta^*)\} \}  \\
&& \qquad\qquad\qquad + (\ddot g(X_{ik}^\top \beta^*) -\ddot g(X_{ik}^\top \beta_0))[\bSigma\left(\bX_{i,\mA}\right)^{-1}]_{kk}\{Y_{ik} - g(X_{ik}^\top \beta_0)\} \}.
\end{eqnarray*}

Then, we conclude that 
\begin{eqnarray*}
   T_{22} &=& \left(\beta_{\mMS}-\beta_{0, \mMS}\right)^{\top}\left\{\nabla_{\mMS} U_{\mMS} (\beta^*)-\nabla_{\mMS} U_{\mMS}\left(\beta_{0}\right)\right\}\left(\beta_{\mMS}-\beta_{0, \mMS}\right)\\
    &\leq & (- C + O_P (\sqrt{1/n}))\tau^2 (m+s)/n,
\end{eqnarray*}
by Assumptions~\ref{cond: conditional mean function}, \ref{cond: variance function bounded}, \ref{cond: eigenvalue max min finite} and $\beta^* \in \mathcal N$ satisfying $\|\beta_{\mMS}^* - \beta_{0, \mMS}\|_2 \leq \tau \sqrt{(m+s)/n} $.
Hence, 
$$
T_{2}=T_{21}+T_{22} \leq  \left(- C + O_P(\sqrt{1/n})\right)\tau^{2} \frac{m+s}{n}.
$$

Then, for sufficiently large $\tau$, $T_2$ dominates $\left(\beta-\beta_{0}\right)^{\top} U_n(\beta)$
and is negative for all sufficiently large $n$. Therefore, $\|\tilde{\beta}-\beta_0\|_2 =O_P\left(\sqrt{\frac{m+s}{n}}\right)$ and by the construction of $\tilde\beta$, \eqref{eq: prop active set estimation consistency} holds.

\subsection{Step~2: proof of \eqref{eq: equations negligible 1}}
By the construction of $\tilde\beta$, for $j \in \mMS$, we have $U_j(\tilde{\beta}) = 0$. Thus, it suffices to show that $P\left(\dot\rho_\lambda (\beta_j; \mM) = 0, j \in \mMS \right)= P\left(\dot\rho_\lambda (\beta_j) = 0 , j \in \mS \right) \to 1$. By Assumption~\ref{cond: large value no thresholding}, it is equivalent to show that $P\left( |\tilde\beta_j| \geq a \lambda, j \in \mS \right)$. Note that 
\begin{eqnarray*}
    \min_{j \in \mS} |\tilde{\beta}_j | &=& \min_{j \in \mS} |\beta_{0, j} + \tilde\beta_j + \beta_{0,j}| \\
    &\geq& \min_{j \in \mS} |\beta_{0, j}| - \max_{j \in \mS} |\tilde{\beta}_j - \beta_{0, j}| \\
    &\geq& \min_{j \in \mS} |\beta_{0, j}| - \|\tilde \beta_{\mS} - \beta_{0, \mS}\|_2.
\end{eqnarray*}
Therefore, we have 
\begin{eqnarray*}
    P\left\{ \left( \min_{j \in \mS} |\beta_{0, j}| - \|\tilde \beta_{\mS} - \beta_{0, \mS}\|_2 \right) \geq a \lambda \right\} = P \left\{ \|\tilde \beta_{\mS} - \beta_{0, \mS}\|_2 \leq \left( \min_{j \in \mS} |\beta_{0, j}|  - a \lambda\right) \right\} \to 1,
\end{eqnarray*}
as $\min_{j \in \mS} |\beta_{0, j}|/\lambda \to \infty$ and $\|\tilde\beta_{\mS} - \beta_{0, \mS}\|_2 \leq \|\tilde\beta_{\mMS} - \beta_{0, \mMS}\| = o(\lambda)$. Thus, $P\left(\min_{j \in \mS} |\tilde \beta_j| \geq a\lambda \right) \to 1$ for $n \to \infty$.

\subsection{Step~3: proof of \eqref{eq: equations negligible 2}}
We show that  
\begin{equation*}
P\left\{\left|U_{n,j}^{\rm P}(\tilde{\beta})\right| \leq \frac{\lambda_n}{\log n}, j \in(\mMS )^{c}\right\} \to 1 .  
\end{equation*}
By construction, for $j \in (\mMS)^c$, $\tilde{\beta}_{j}=0$. 
Further, $\dot\rho_\lambda(\tilde\beta_j) \in [-\lambda_n, \lambda_n]$ for the nonconvex penalties considered in this project. For $\dot\rho_\lambda(\tilde\beta_j)$ sufficiently small, it suffices to show that
\begin{eqnarray}\label{eq: prop 1 proof step 3 equivalent}
   P\left\{\max_{j \in (\mMS)^{c}}\left|U_{n,j}(\tilde{\beta} )\right| \leq \frac{\lambda_n}{\log n}\right\} \rightarrow 1. 
\end{eqnarray}
By Taylor expansion, we obtain 
\begin{eqnarray*}
    U_{n, j}(\tilde\beta) = U_{n, j}(\beta_0) + \frac{\partial U_{n,j}(\beta_0)}{\partial \beta} (\tilde\beta - \beta_0) + \frac{1}{2}(\tilde\beta - \beta_0)^\top \frac{\partial^2 U_{n,j}(\beta^*)}{\partial \beta \partial \beta^\top}(\tilde\beta - \beta_0),
\end{eqnarray*}
where $\beta^*$ is between $\tilde\beta$ and $\beta_0$. Let $\nabla_{\mMS} U_{n, j}(\beta) =  \frac{\partial U_{n,j}(\beta)}{\partial \beta_{\mMS}}$ and $\bm H_{j, \mMS} (\beta) = \frac{\partial^2 U_{n,j}(\beta)}{\partial \beta_{\mMS} \partial \beta_{\mMS}^\top}$. Since $\tilde\beta_{(\mMS)^c} = 0$, we have 
\begin{eqnarray}\label{eq: taylor expansion U_nj}
    U_{n,j}(\tilde\beta) &=& U_{n, j}(\beta_0) + \nabla_{\mMS}U_{n, j}(\beta_0) (\tilde\beta_{\mMS} - \beta_{0, \mMS}) \nonumber \\
    &&+ \frac{1}{2}(\tilde\beta_{\mMS} - \beta_{0, \mMS})^\top \bm H_{j, \mMS} (\beta) (\tilde\beta_{\mMS} - \beta_{0, \mMS}).
\end{eqnarray}
We then have
\begin{eqnarray*}
    \lefteqn{P\left\{\max _{j \in (\mMS)^{c}}\left|U_{n,j}(\tilde{\beta})\right|>\frac{\lambda_n}{\log n}\right\}} &&\\
   &\leq& P\left\{\max_{j \in(\mMS)^{c}}\left|U_{n,j}\left(\beta_{0}\right)\right|>\frac{\lambda_n}{3 \log n}\right\} + P\left\{\max_{j \in(\mMS)^{c}}\left|\nabla_{\mMS}U_{n,j}(\beta_0)\left(\tilde{\beta}_{\mMS}-\beta_{0, \mMS}\right)\right| > \frac{\lambda_n}{3 \log n}\right\} \\
   &+& P\left\{\max_{j \in (\mMS)^c} \left|(\tilde\beta_{\mMS} - \beta_{0, \mMS})^\top \bm H_{j, \mMS} (\beta) (\tilde\beta_{\mMS} - \beta_{0, \mMS}) \right| > \frac{\lambda_n}{3 \log n} \right\}\\
   &\coloneqq& T_{3}+T_{4}+T_{5}
\end{eqnarray*}
Then, \eqref{eq: prop 1 proof step 3 equivalent} holds if $T_3, T_4, T_5 = o(1)$.
For $j \in [p]$,
\[
U_{n, j} (\beta) = \frac{1}{n} \sum_{i=1}^n X_{ij} \bD_i (\beta_0) \bSigma\left(\bX_{i,\mA}\right)^{-1}(Y_i - \bm g(\bX_i^\top\beta_0)).
\]
\begin{eqnarray}\label{eq: T3 1}
\lefteqn{T_{3}=P\left\{\max_{j \in(\mMS)^{c}}\left|U_{n,j}\left(\beta_{0}\right)\right|>\frac{\lambda_n}{3 \log n}\right\}}&& \\
&\leq& \sum_{j \in (\mMS)^c} P\left\{\left|\frac{1}{n} \sum_{i=1}^n X_{ij} \bD_i (\beta_0) \bSigma\left(\bX_{i,\mA}\right)^{-1}(Y_i - \bm g(\bX_i^\top\beta_0))\right|
>\frac{\lambda_n}{3 \log n}\right\}. \nonumber
\end{eqnarray}
Taking $Z_{i}= X_{ij} \bD_i (\beta_0) \bSigma\left(\bX_{i,\mA}\right)^{-1}(Y_i - \bm g(\bX_i^\top\beta_0))$, which is sub-Gaussian by Assumption~\ref{cond: residuals moments and subgaussian}. Further, we define a vector conditional on $\bX_i$ as $a_i := (X_{ij}\,\bD_i(\beta_0)\,\bSigma(\bX_{i,\mA})^{-1})^\top$, where $\|a_i\|_2$ is uniformly upper bounded in $i$ by Assumptions~\ref{cond: conditional mean function}, \ref{cond: variance function bounded}, \ref{cond: fixed and bounded design}. It holds that $Z_i = a_i^\top \bR_i(\beta_0)=\sum_{k=1}^{l} a_{ik} R_{ik}(\beta_0)$.

Then,
\begin{eqnarray*}
\E\left[\exp\{t Z_i\}\mid \bX_i\right]
&= \E\left[\exp\Big\{t\sum_{k=1}^l a_{ik}R_{ik}(\beta_0)\Big\}\,\Big|\,\bX_i\right] = \prod_{k=1}^{l} \E\left[\exp\{t a_{ik}R_{ik}(\beta_0)\}\mid \bX_i\right]\\ 
&\le \prod_{k=1}^{l} \exp\left(c_4 (t a_{ik})^2\right)
= \exp\left(c_4 t^2 \sum_{k=1}^l a_{ik}^2\right) 
= \exp\left(c_4 t^2 \|a_i\|_2^2\right).
\end{eqnarray*}

Then, by Lemma~\ref{lemma: bernstein} (ii), we obtain
\begin{eqnarray}\label{eq: T3 2}
\lefteqn{P\left\{\left|\frac{1}{n} \sum_{i=1}^n X_{ij} \bD_i(\beta_0) \bSigma\left(\bX_{i,\mA}\right)^{-1}(Y_i - \bm g(\bX_i^\top\beta_0))\right|
>\frac{\lambda_n}{3 \log n}\right\}} && \\
&\leq& 2 \exp \left\{-\frac{2^{-1} \left(\frac{n \lambda_n }{3 \log n}\right)^{2}}{\sum_{i=1}^{n} \E\left[Z_{i}^{2}\right]+3^{-1} M \frac{n\lambda_n}{3 \log n}}\right\} \nonumber\\
&\leq& 2 \cdot \exp \left\{-C \frac{\left(\frac{n \lambda_n}{3 \log n}\right)^{2}}{n}\right\} =2 \exp \left\{-C' n \left(\frac{\lambda_n}{\log n}\right)^{2} \right\}. \nonumber
\end{eqnarray}

Combine \eqref{eq: T3 1} and \eqref{eq: T3 2}, it holds that 
\begin{eqnarray*}
 \lefteqn{T_3 \leq \sum_{j \in(\mMS)^{c}} P\left\{\left|\frac{1}{n} \sum_{i=1}^n X_{ij} \bD_i (\beta_0) \bSigma\left(\bX_{i,\mA}\right)^{-1}(Y_i - \bm g(\bX_i^\top\beta_0))\right|
>\frac{\lambda}{3 \log n}\right\}}&&\\
 && \qquad\qquad\qquad \leq 2 \exp \left\{\log p-C' n\left(\frac{\lambda}{\log n}\right)^{2}\right\}=o(1),
 \end{eqnarray*}
where the equality holds by Assumption~\ref{cond: log p and n order}.

Second, we show that $T_{4}=o(1)$.
\begin{eqnarray}\label{eq: T4 1}
    \lefteqn{P\left\{\max_{j \in(\mMS)^{c}}\left|\nabla_{\mMS}U_{n,j}(\beta_0)\left(\tilde{\beta}_{\mMS}-\beta_{0, \mMS}\right)\right| > \frac{\lambda}{3 \log n}\right\}}&&\\
    &=& P\left\{\max_{j \in(\mMS)^{c}}\left|\nabla_{\mMS}U_{n,j}(\beta_0)\left(\tilde{\beta}_{\mMS}-\beta_{0, \mMS}\right)\right| > \frac{\lambda}{3 \log n}, \|\tilde\beta_{\mMS} - \beta_{0, \mMS}\|_2 \leq \sqrt{\frac{m+s}{n}} \log n\right\}\nonumber\\
 &+& P\left\{
 \|\tilde\beta_{\mMS} - \beta_{0, \mMS}\|_2 > \sqrt{\frac{m+s}{n}}\log n\right\} \nonumber\\
 &\leq& P\left\{\max_{j \in(\mMS)^{c}}\left\|\nabla_{\mMS}U_{n,j}(\beta_0)\right\|_2 > \frac{\lambda n^{3/2}}{3\sqrt{m+s} (\log n)^2} \right\} + o(1) \nonumber\\
 &\leq & \sum_{j\in (\mMS)^{c}}P\left\{\left \|\nabla_{\mMS}U_{n,j}(\beta_0)\right\|_2 > \frac{\lambda n^{3/2}}{3\sqrt{m+s} (\log n)^2} \right\} + o(1) \nonumber \\
 &\leq & \sum_{j\in (\mMS)^{c}} P\left\{\sum_{j'\in (\mMS)}\nabla_{j'}U_{n,j}(\beta_0)^2  > \frac{\lambda^2 n^3}{9 (m+s)(\log n)^4} \right\} + o(1), \nonumber
\end{eqnarray}
where the first inequality holds by the Cauchy-Schwarz inequality.

Recall that
\begin{eqnarray*}
\lefteqn{\nabla_{j'} U_{n,j}(\beta) }&& \\
&=& \frac{1}{n} \sum_{i=1}^n \sum_{k = 1}^l X_{ikj} X_{ikj'}^\top \left\{\ddot g(X_{ik}^\top \beta) \left[\bSigma\left(\bX_{i,\mA}\right)^{-1}\right]_{kk} \left\{Y_{ik} - g(X_{ik}^\top \beta) \right\}
 -  \dot g(X_{ik}^\top \beta)^2 \left[\bSigma\left(\bX_{i,\mA}\right)^{-1}\right]_{kk} \right\},
\end{eqnarray*}
and 
\begin{eqnarray*}
   \lefteqn{\E |\nabla_{j'} U_{n,j}(\beta_0)|} && \\
   &=& \E \left|\frac{1}{n} \sum_{i=1}^n \sum_{k = 1}^l X_{ikj} X_{ikj'} \left\{\ddot g(X_{ik}^\top \beta_0) \left[\bSigma\left(\bX_{i,\mA}\right)^{-1}\right]_{kk} \left\{Y_{ik} - g(X_{ik}^\top \beta_0) \right\}
 -  \dot g(X_{ik}^\top \beta_0)^2 \left[\bSigma\left(\bX_{i,\mA}\right)^{-1}\right]_{kk} \right\}\right| \\
 &=& \E \left|\frac{1}{n} \sum_{i=1}^n \sum_{k = 1}^l X_{ikj} X_{ikj'} \ddot g(X_{ik}^\top \beta_0) \left[\bSigma\left(\bX_{i,\mA}\right)^{-1}\right]_{kk} \left\{Y_{ik} - g(X_{ik}^\top \beta_0) \right\}\right| \\
 &&+  \left|\frac{1}{n} \sum_{i=1}^n \sum_{k = 1}^l X_{ikj} X_{ikj'}^\top  \dot g(X_{ik}^\top \beta_0)^2 \left[\bSigma\left(\bX_{i,\mA}\right)^{-1}\right]_{kk} \right| \\
 &\leq& \frac{C_1}{n} \sum_{i=1}^n \sum_{k = 1}^l \left|X_{ikj} X_{ikj'} \right| \E \left[\left|Y_{ik} - g(X_{ik}^\top \beta_0)\right|\right] + \frac{C_2}{n}\sum_{i=1}^n \sum_{k = 1}^l \left|X_{ikj} X_{ikj'} \right|\\
 &\leq& C.
\end{eqnarray*}
where the first inequality holds by Assumptions~\ref{cond: conditional mean function}, \ref{cond: variance function bounded}; the second inequality holds by Assumptions~\ref{cond: residuals moments and subgaussian} and \ref{cond: active set and non active set covariance}.


Continue on \eqref{eq: T4 1}, we conclude that
\begin{eqnarray*}
    \lefteqn{P\left\{\max_{j \in(\mMS)^{c}}\left|\nabla_{\mMS}U_{n,j}(\beta_0)\left(\tilde{\beta}_{\mMS}-\beta_{0, \mMS}\right)\right| > \frac{\lambda_n}{3 \log n}\right\} } &&\\
    &\leq& p  P\left\{ \sum_{j' \in \mMS}\left |\nabla_{j'}U_{n,j}(\beta_0)\right | >  \frac{\lambda_n n^{3/2}}{3 (m+s)^{3/2}(\log n)^2} \right\} + o(1) \\
    &\leq&  \E|\nabla_{j'}U_{n,j}(\beta_0)| \frac{3 (m+s)^{3/2}(\log n)^2}{n^{3/2}\lambda_n } \\
    &=& O( \frac{p(m+s)^{5/2}(\log n)^2}{n^{3/2}\lambda_n }) = o(1),
\end{eqnarray*}
where the last equality holds by Assumption~\ref{cond: log p and n order}.

Third, we show that $T_{5}= o(1)$. 
\begin{eqnarray}\label{eq: T5}
 \lefteqn{T_{5} = P\left\{\max_{j \in (\mMS)^c} \left|(\tilde\beta_{\mMS} - \beta_{0, \mMS})^\top \bm H_{j, \mMS} (\beta) (\tilde\beta_{\mMS} - \beta_{0, \mMS}) \right| > \frac{\lambda_n}{3 \log n} \right\}}&& \nonumber\\
 &\leq& P\{\max_{j \in (\mMS)^c} \left|(\tilde\beta_{\mMS} - \beta_{0, \mMS})^\top \bm H_{j, \mMS} (\beta) (\tilde\beta_{\mMS} - \beta_{0, \mMS}) \right| > \frac{\lambda_n}{3 \log n} , \nonumber\\
 &&\|\tilde\beta_{\mMS} - \beta_{0, \mMS}\|_2 \leq \sqrt{\frac{m+s}{n}} \log n\}\nonumber\\
 &&+ P\left\{\|\tilde\beta_{\mMS} - \beta_{0, \mMS}\|_2 > \sqrt{\frac{m+s}{n}} \log n\right\} \nonumber\\
 &\leq& \sum_{j \in (\mMS)^c} P\left\{ \max_{j'\in \mMS} |\bm H_{jj'} (\tilde\beta^*)\}  |  > \frac{n\lambda_n}{3 (m+s) (\log n)^3} \right\} + o(1)\nonumber \\
 &\leq & \sum_{j \in (\mMS)^c} \sum_{j' \in \mMS}  P\left\{ | \bm H_{j j'} (\tilde\beta^*)|  > \frac{n\lambda_n}{3 (m+s) (\log n)^3} \right\} + o(1), \nonumber\\
&\leq & \sum_{j \in (\mMS)^c} \sum_{j' \in \mMS} \E [| \bm H_{j j'} (\tilde\beta^*)|] \frac{3 (m+s) (\log n)^3}{n\lambda_n} + o(1).
\end{eqnarray}
where the second inequality holds by H\"older's inequality and the fourth inequality holds by Markov's inequality. Further, we show that $\E [| \bm H_{j j'} (\tilde\beta^*)|]$ is upper bounded. By elementary calculus, we obtain that
\begin{eqnarray*}
    \lefteqn{\bm H_{jj'}(\beta) = \frac{1}{n}\sum_{i =1}^n \sum_{k =1}^l X_{ikj}X_{ik j'}\Big[\left(Y_{ik} - g(X_{ik}^\top\beta)\right) \ddot g(X_{ik}^\top\beta) [\bSigma\left(\bX_{i,\mA}\right)^{-1}]_{kk}}&&\\
    && - (\ddot g(X_{ik}^\top\beta) [\bSigma\left(\bX_{i,\mA}\right)^{-1}]_{kk} + \left(Y_{ik} - g(X_{ik}^\top\beta)\right) \ddot g(X_{ik}^\top \beta) \dot g(X_{ik}^\top\beta )[\bSigma\left(\bX_{i,\mA}\right)^{-1}]_{kk}.
    \Big]
\end{eqnarray*}
By Assumptions~\ref{cond: active set and non active set covariance}, \eqref{cond: conditional mean function}, \ref{cond: variance function bounded}, it's easy to see that $\E[|\bm H_{jj'}(\beta)|]$ is upper bounded and order is determined by $\E|Y_{ik} - g(X_{ik}^\top\tilde\beta^*)| $. It holds that
\begin{eqnarray*}
    \lefteqn{\E|Y_{ik} - g(X_{ik}^\top\tilde\beta^*)| =\E|Y_{ik} - g(X_{ik}^\top\beta_0) +g(X_{ik}^\top\beta_0)- g(X_{ik}^\top\tilde\beta^*)| } &&\\
    &\leq & \E |Y_{ik} - g(X_{ik}^\top\beta_0)| +|g(X_{ik}^\top\beta_0)- g(X_{ik}^\top\tilde\beta^*)| =  \E |R_{ik} (\beta_0)| + |g(X_{ik}^\top\beta_0)- g(X_{ik}^\top\tilde\beta^*)| \\
    &\leq& C + L|X^\top_{ik} (\beta_0 - \tilde\beta^*)| \leq C + L \|X_{ik}\|_2  \|\beta_0 - \tilde\beta^*\|_2 \leq  C + C' \sqrt{\frac{m+s}{n}},
\end{eqnarray*}
where the third inequality holds by Assumptions~\ref{cond: residuals moments and subgaussian}, Lipschitz continuity of $g$ in \ref{cond: conditional mean function}; the last inequality holds by \ref{cond: fixed and bounded design} and $\tilde\beta^*$ defined on $\mathcal N$. Thus, by Assumption~\ref{cond: log p and n order}, we conclude that 
\begin{eqnarray}
    T_5 \leq O (\frac{p(m+s)^{5/2} (\log n)^3}{n^{3/2}\lambda_n}) + o(1) = o(1).
\end{eqnarray}

\subsection{Proof of Theorem~\ref{thm: covariance matrix}}
\begin{proof}
\begin{itemize}
\item[(i)] \emph{Residual consistency.}

Condition on $\{\check\beta_{(\mMS)^c} = 0\}$, by definition, the estimated residual is
\[
R_{ik}(\check\beta) = Y_{ik} - g(X_{ik}^\top \check\beta).
\]

Expanding $g(X_{ik}^\top \check\beta)$ around $\beta_0$ by a first-order Taylor expansion,
\[
g(X_{ik}^\top \check\beta) 
= g(X_{ik}^\top \beta_0) 
+ \dot g(X_{ik}^\top \tilde\beta)\,X_{ik,\mMS}^\top(\check\beta - \beta_0),
\]
for some $\tilde\beta$ between $\check\beta$ and $\beta_0$.  
Hence
\[
R_{ik}(\check\beta) - R_{ik}(\beta_0)
= - \dot g(X_{ik}^\top \tilde\beta)\,X_{ik}^\top(\check\beta - \beta_0).
\]
By Assumption~\ref{cond: fixed and bounded design}, $\|X_{ik,\mMS}\|_2 = O(\sqrt{s+m})$ uniformly for all $i$ and $k$.  
By Assumption~\ref{cond: conditional mean function}, $\dot g(\cdot)$ is bounded.  
Since $\|\check\beta - \beta_0\|_2 = O_P(\sqrt{(s+m)/n})$, it follows that
\[
\max_{i,k} |R_{ik}(\check\beta) - R_{ik}(\beta_0)| = O_P\left(\frac{s+m}{\sqrt{n}}\right) = o_P(1).
\]

For fixed $k \in [l]$, we show $\|\hat\theta^v_k - \theta^v_k\|_1 = O_P\left(\frac{s_0(s+m)\log(np)}{\sqrt{n}}\right)$; similar result hold for $\max_j \|\hat{\gamma}_j - \gamma_j\|_1 = O_P\left(s_0\sqrt{\frac{\log p}{n}}\right)$, with a sparsity assumption $\max\left(\|\gamma_j\|_0, \|\theta^v_k\|_0\right) = s_0$.

To obtain $\hat\theta^v_k$, we consider the optimization problem defined in \eqref{eq: theta empirical minimization}.

Note that
\begin{eqnarray*}
    \lefteqn{\frac{1}{n}\sum\limits_{i=1}^n \left(f_v(R_{ik}(\check\beta)) - X_{ik}^{\top}\theta\right)^2  = \frac{1}{n}\sum\limits_{i=1}^n \left(f_v(R_{ik}(\check\beta)) - f_v(R_{ik}(\beta_0)) + f_v(R_{ik}(\beta_0)) - X_{ik}^{\top}\theta\right)^2} &&\\
    &=& \frac{1}{n}\sum\limits_{i=1}^n \left(f_v(R_{ik}(\beta_0))- X_{ik}^{\top}\theta\right)^2 + \frac{1}{n}\sum\limits_{i=1}^n \left(f_v(R_{ik}(\beta_0)) - f_v(R_{ik}(\check\beta))\right)^2 \\
    && - \frac{2}{n}\sum\limits_{i=1}^n \left(f_v(R_{ik}(\beta_0)) - X_{ik}^{\top}\theta\right)\left(f_v(R_{ik}(\check\beta)) - f_v(R_{ik}(\beta_0))\right)\\
    &=& I_1 + I_2 + I_3
\end{eqnarray*}

For $I_2$, note that
\begin{align*}
  \frac{1}{n}\sum\limits_{i=1}^n \left(f_v(R_{ik}(\beta_0)) - f_v(R_{ik}(\check\beta))\right)^2  &\leq \frac{C}{n}\sum\limits_{i=1}^n \left(R_{ik}(\beta_0) - R_{ik}(\check\beta)\right)^2 \ \text{lipschitz condition}\\
  & = \frac{C}{n}\sum\limits_{i=1}^n \left(g(X_{ik}^{\top}\beta) - g(X_{ik}^{\top}\hat{\beta})\right)^2\\
  &\leq \frac{C}{n}\sum\limits_{i=1}^n \left(X_{ik}^{\top}(\beta - \hat{\beta})\right)^2\\
  & = O_P\left((s+m)^2\|\beta-\hat{\beta}\|_2^2\right) = O_P((s+m)^2\delta_n^2)
\end{align*}

By definition, $\hat\theta^v_k$ minimizes the objective function 
$\ell_n(\theta) + \lambda\|\theta\|_1$, where $\ell_n(\theta) = \frac{1}{n}\sum_{i=1}^n
    \left(
    f_v(R_{ik}(\check\beta)^2) - X_{ik}^{\top}\theta
    \right)^2$.
Thus,
\begin{align*}
\ell_n(\hat\theta^v_k) + \lambda \|\hat\theta^v_k\|_1 &\le \ell_n(\theta^v_k) + \lambda \|\theta^v_k\|_1 \\
I_1(\hat\theta^v_k) + I_2 + I_3(\hat\theta^v_k) + \lambda \|\hat\theta^v_k\|_1 &\le I_1(\theta^v_k) + I_2 + I_3(\theta^v_k) + \lambda \|\theta^v_k\|_1
\end{align*}

Let $\Delta = \hat\theta^v_k - \theta^v_k$. Rearranging the terms yields:
\begin{equation}
I_1(\hat\theta^v_k) - I_1(\theta^v_k) \le \big( I_3(\theta^v_k) - I_3(\hat\theta^v_k) \big) + \lambda (\|\theta^v_k\|_1 - \|\hat\theta^v_k\|_1)
\label{eq:start}
\end{equation}

For the left-hand side, we have
\begin{align*}
I_1(\hat\theta^v_k) - I_1(\theta^v_k) &= \frac{1}{n} \sum_{i,k} \left[ (f_v(R_{ik}(\beta_0)) - X_{ik}^\top \hat\theta^v_k)^2 - (f_v(R_{ik}(\beta_0)) - X_{ik}^\top \theta^v_k)^2 \right] \\
&= \frac{1}{n} \sum_{i,k} \left[ (\epsilon_{il} - X_{ik}^\top \Delta)^2 - \epsilon_{il}^2 \right] \\
&= \frac{1}{n} \|\mathbf{X}\Delta\|_2^2 - \frac{2}{n} \langle \mathbf{X}^T \epsilon, \Delta \rangle
\end{align*}

For $I_3$, let $E_{R, ik} = f_v(R_{ik}(\check\beta)) - f_v(R_{ik}(\beta_0))$.
\begin{align*}
I_3(\theta^v_k) - I_3(\hat\theta^v_k) &= \frac{2}{n} \sum_{i} \left[ (f_v(R_{ik}(\beta_0)) - X_{ik}^\top \hat\theta^v_k) - (f_v(R_{ik}(\beta_0)) - X_{ik}^\top \theta^v_k) \right] E_{R, ik} \\
&= -\frac{2}{n} \sum_{i} (X_{ik}^\top \Delta) E_{R, ik} = -\frac{2}{n} \langle \mathbf{X}\Delta, E_R \rangle
\end{align*}

Substituting the expanded terms into Equation \eqref{eq:start}:
$$
\frac{1}{n} \|\mathbf{X}\Delta\|_2^2 - \frac{2}{n} \langle \mathbf{X}^T \epsilon, \Delta \rangle \le -\frac{2}{n} \langle \mathbf{X}\Delta, E_R \rangle + \lambda (\|\theta^v_k\|_1 - \|\hat\theta^v_k\|_1)
$$
Rearranging gives:
$$
\frac{1}{n} \|\mathbf{X}\Delta\|_2^2 \le \frac{2}{n} \langle \mathbf{X}^T(\epsilon - E_R), \Delta \rangle + \lambda (\|\theta^v_k\|_1 - \|\hat\theta^v_k\|_1)
$$
We bound the first term on the RHS using the H\"older inequality and our choice of $\lambda$. We choose the tuning parameter $\lambda$ such that $\lambda \ge \frac{4}{n} \|\mathbf{X}^T(\epsilon - E_R)\|_{\infty}$. This yields:
$$
\left|\frac{2}{n} \langle \mathbf{X}^T(\epsilon - E_R), \Delta \rangle\right| \le \frac{2}{n} \|\mathbf{X}^T(\epsilon - E_R)\|_{\infty} \|\Delta\|_1 \le \frac{\lambda}{2}\|\Delta\|_1
$$
Let $S_0$ be the support of $\theta^v_k$. Using the standard property that $\|\theta^v_k\|_1 - \|\hat\theta^v_k\|_1 \le \|\Delta_{S_0}\|_1 - \|\Delta_{S_0^c}\|_1$, we obtain:
\begin{align*}
\frac{1}{n} \|\mathbf{X}\Delta\|_2^2 &\le \frac{\lambda}{2}\|\Delta\|_1 + \lambda(\|\Delta_{S_0}\|_1 - \|\Delta_{S_0^c}\|_1) \\
&= \frac{\lambda}{2}(\|\Delta_{S_0}\|_1 + \|\Delta_{S_0^c}\|_1) + \lambda(\|\Delta_{S_0}\|_1 - \|\Delta_{S_0^c}\|_1) \\
&= \frac{3\lambda}{2}\|\Delta_{S_0}\|_1 - \frac{\lambda}{2}\|\Delta_{S_0^c}\|_1
\end{align*}
This leads to the final basic inequality, which is central to the LASSO theory:
$$
\frac{1}{n} \|\mathbf{X}\Delta\|_2^2 + \frac{\lambda}{2}\|\Delta_{S_0^c}\|_1 \le \frac{3\lambda}{2}\|\Delta_{S_0}\|_1
$$

This inequality implies that $\|\Delta_{S_0^c}\|_1 \le 3\|\Delta_{S_0}\|_1$, which is the cone condition required for the RE property to hold. Assuming that for some $\kappa > 0$, we have $\frac{1}{n}\|\mathbf{X}\Delta\|_2^2 \ge \kappa \|\Delta\|_2^2$ for all $\Delta$ in this cone, then:
$$
\kappa \|\Delta\|_2^2 \le \frac{1}{n} \|\mathbf{X}\Delta\|_2^2 \le \frac{3\lambda}{2}\|\Delta_{S_0}\|_1
$$
Using Cauchy-Schwarz, $\|\Delta_{S_0}\|_1 \le \sqrt{s_0}\|\Delta_{S_0}\|_2 \le \sqrt{s_0}\|\Delta\|_2$, where $s_0 = |S_0|$ is the sparsity.
$$
\kappa \|\Delta\|_2^2 \le \frac{3\lambda}{2}\sqrt{s_0}\|\Delta\|_2
$$
Dividing by $\|\Delta\|_2$ gives the final rate for the $\ell_2$-norm of the error:
$$
\|\hat\theta^v_k - \theta^v_k\|_2 = \|\Delta\|_2 \le \frac{3\lambda\sqrt{s_0}}{2\kappa}
$$
The rate is determined by the magnitude of $\lambda$. Our choice of $\lambda$ depends on both the stochastic noise $\epsilon$ and the measurement error $E_R$. Typically, $\|\frac{1}{n}\mathbf{X}^T\epsilon\|_\infty = O_P(\sqrt{\log p/n})$. The final task is to bound $\|\frac{1}{n}\bX^{\top}E_R\|_{\infty}$. Note that
\begin{align*}
\|\frac{1}{n}\bX^{\top}E_R\|_{\infty} &= \max_j \frac{1}{n}|\sum\limits_{i=1}^n\sum_{k=1}^l X_{ik,j}(f_v(R_{ik}(\check\beta)) - f_v(R_{ik}(\beta_0)))|\\
&\leq \frac{C}{n}\max_{ik,j}|X_{ik,j}|\sum_{i=1}^n\sum_{k=1}^l |X_{ik}^{\top}(\hat{\beta}-\beta)|\\
&\leq \frac{C}{n}\sqrt{\log (np)}\sum_{i=1}^n\sum_{k=1}^l \|X_{ik}\|_{2}\|\hat{\beta}-\beta\|_2\\
&\leq \sqrt{s+m}\log (np) \|\hat{\beta}-\beta\|_2
\end{align*}

Therefore, we must set $\lambda \asymp \sqrt{\frac{\log p}{n}} + \sqrt{s+m}\log(np)\delta_n$. Since $\delta_n = \sqrt{\frac{s+m}{n}}$, the final convergence rate is:
\begin{eqnarray}\label{eq: theta hat rate}
   \|\hat\theta^v_k - \theta^v_k\|_2 = O_P\left((s+m)\log(np)\sqrt{\frac{s_0}{n}}\right) . 
\end{eqnarray}
This explicitly shows that the final error is composed of the standard high-dimensional estimation error and an additional term propagated from the error in estimating $R_{ik}(\beta_0)$.

For $\hat{\gamma}_j$, a simple modification of Theorem 5.2 in van de Geer and Muller (2012) implies that
\begin{eqnarray}\label{eq: gammahat rate}
    \left\|\hat{\gamma}_j-\gamma\right\|_b & = O_P \left(s^{1 / b} \sqrt{\frac{\log p}{n}}\right), b=1,2.
\end{eqnarray}
Thus, it follows that
\begin{eqnarray}\label{eq: gammahat prediction rate}
   \frac{1}{n} \sum_{i=1}^n \sum_{k=1}^l\left[X_{i k}^{\top}\left(\hat{\gamma}-\gamma\right)\right]^2 & =O_P\left(\frac{s_0 \log p}{n}\right) .
\end{eqnarray}

\item[(ii)] \emph{Consistency of active set selection.} 

We define the following sets $\mbox{FP}(t) = \sum_{k=1}^l\sum_{j \in \mA^c} I( W_{kj} > t)$.
We first show that $P\left(\sum_{j \in \mA^c} I( W_{kj} > t_0) = 0\right) \to 0$ with the critical value $t_0 = \chi_h^2 (1- \alpha_p / p)$ for $\alpha_p = p^{-c}$ for some $c >0$. 
\begin{eqnarray}\label{eq: type-I error single k calculation}
    P\left(\sum_{j \in \mA^c}I( W_{kj} > t_0 ) > 0 \right) &=&
    P \left(\cup_{j \in \mA_k^c} W_{kj} > \chi_h^2 (1 - \alpha/p)  \right) \nonumber\\
    &\leq& (p-s_\mA) \frac{\alpha_p}{p} = O(p^{-c}) \to 0,
\end{eqnarray}
where the inequality holds by union bound and the second equality holds by the definition of $\alpha_p$. 
Further, it holds that
\begin{eqnarray}\label{eq: power approximation}
    \lefteqn{P\left(\cap_{j \in \mA_k} (W_{kj} > t_0) \right) = P\left(\min_{j\in\mA_k} W_{kj}  > t_0 \right) }&& \nonumber\\
    &=& \Phi\left(- \frac{t_0 - (h+ \min_{j \in \mA_k}\delta_{kj} B^\top_{kj} \Omega_{kj} B_{kj})}{\sqrt{2(h+ 2 \min_{j \in \mA_k}\delta_{kj} B^\top_{kj} \Omega_{kj} B_{kj} )}} \right) + o(1),
\end{eqnarray}
where the second equality holds by Gaussian approximation to $\chi^2_h (\delta_{kj} B^\top_{kj} \Omega_{kj} B_{kj})$. Moreover, by the boundedness of the eigenvalues of $\Omega_{kj}$ implied by
Assumption~\ref{cond: sufficient dimension reduction}, the noncentrality parameter satisfies
\begin{eqnarray*}
    \delta_{kj} B_{kj}^\top \Omega_{kj} B_{kj}
    &\ge&
    \delta_{kj} \lambda_{\min}(\Omega_{kj}) \|B_{kj}\|_2^2 \\
    &=&
    \delta_{kj} \lambda_{\max}(\Omega_{kj}^{-1})^{-1} \|B_{kj}\|_2^2 .
\end{eqnarray*}

If the minimal signal strength obeys \[\min_{j \in \mA_k} \|B_{kj}\|_2 \gg  \sqrt{2(1 + c) \log p \max_j \frac{\lambda_{\max}(\Omega_{kj}^{-1})}{\delta_{kj}}},\] then, by Assumption~\ref{cond: sufficient dimension reduction}, it follows that
$\min_{j\in \mA_k}\delta_{kj} B^\top_{kj} \Omega_{kj} B_{kj} \gg 2(1 +c) \log p$. Since $t_0 = \chi_h^2(1 - \alpha_p/p) = h+ 2 \log (p/\alpha_p) + o(1)$, the asymptotic power in \eqref{eq: power approximation} converges to 1 for $\min_{j \in \mA_k}\delta_{kj} B^\top_{kj} \Omega_{kj} B_{kj} \gg 2\log (p/\alpha_p) = 2(1 + c) \log p $. 
The estimated active set $\hat \mA = \cup_{k \in [l]} \hat\mA_k$, since $l$ is fixed and by \eqref{eq: type-I error single k calculation}, the size of the test
\begin{eqnarray}
    P ( \mbox{FP}(t_0) >0 ) \leq l \cdot P\left(\sum_{j \in \mA^c}I( W_{kj} > t_0 ) > 0 \right) \to 0.
\end{eqnarray}
Further, the asymptotic power can be expressed as
\begin{eqnarray*}
  \lefteqn{P( \mA \subseteq \hat \mA) = P\left(\cup_{k \in [l]}(\cap_{j \in \mA_k} \left(W_{kj} > t_0) \right)\right)}&&\\ 
  &=& P\left(\cup_{k \in [l]} (\min_{j\in \mA_k} W_{kj} > t_0)\right) 
    = P\left( \min_{j\in \mA_k k\in [l]} W_{kj} > t_0)\right).
\end{eqnarray*}
With similar argument as \eqref{eq: power approximation}, $P( \mA \subseteq \hat \mA) \to 1$ if $\min_{j \in \mA_k, k \in [l]}\delta_{kj} B^\top_{kj} \Omega_{kj} B_{kj} \gg 2\log (p/\alpha_p) = 2(1 + c) \log p $. 
\end{itemize}
\end{proof}

\subsection{Proof of Proposition~\ref{thm: oracle property}}\label{ssec: proof of consistency of covariance matrix}
\begin{proof}
The consistency results in (i) and (ii) hold by Proposition~\ref{prop: consistency} and we focus on showing the asymptotic normality in \eqref{eq: asymptotic normal of subvector}. 

By Taylor expansion, we obtain 
\begin{eqnarray}\label{eq: taylor for asymptotic normal}
   \lefteqn{\bm 0 = U_{\mMS} (\tilde\beta) = U_{\mMS}(\beta_0) + \frac{\partial U_{\mMS}(\beta_0)}{\partial \beta_{\mMS}} (\tilde\beta_{\mMS} - \beta_{0, \mMS})} &\\
   & + \frac{1}{2}(\tilde\beta_{\mMS}- \beta_{0, \mMS})^\top \frac{\partial^2 U_{\mMS}(\beta^*)}{\partial \beta_{\mMS} \partial \beta_{\mMS}^\top}(\tilde\beta_{\mMS} - \beta_{0, \mMS}), \nonumber
\end{eqnarray}
where $\beta^*$ is between $\tilde\beta$ and $\beta_0$ and the third term on the RHS is a remainder term. 
We first show that the remainder term is negligible when $s + m = o(n^{1/3})$. Notice that the Hessian $\frac{\partial^2 U_{\mMS}(\beta^*)}{\partial \beta_{\mMS} \partial \beta_{\mMS}^\top} \in \mathbbm R^{(s + m) \times (s+m) \times (s+m)}$ has the operator norm
\begin{eqnarray*}
    \left\|\frac{\partial^2 U_{\mMS}(\beta^*)}{\partial \beta_{\mMS} \partial \beta_{\mMS}^\top} \right\|_{\rm op} \coloneqq \sup_{\|u \|_2 = \|v\|_2 = 1} \left\|u^\top \frac{\partial^2 U_{\mMS}(\beta^*)}{ \partial \beta_{\mMS} \partial \beta_{\mMS}^\top} v \right\|_2 ,
\end{eqnarray*}
where
\begin{eqnarray*}
\lefteqn{\frac{\partial^2 U_{\mMS}(\beta^*)}{\partial \beta_{\mMS} \partial \beta_{\mMS}^\top}} &&\\ 
&=& -\frac{1}{n}\sum_{i=1}^n \bX_{i, \mMS} 
\Big[\ddot \bD_i(\beta^*)\bSigma\left(\bX_{i,\mA}\right)^{-1} \bD_i(\beta^*) 
+ \bD_i(\beta^*)\bSigma\left(\bX_{i,\mA}\right)^{-1} \ddot \bD_i(\beta^*) 
\\
&&+ \ddot \bD_i(\beta^*)\bSigma\left(\bX_{i,\mA}\right)^{-1} \bD_i(\beta^*)\Big] \bX_{i, \mMS}^\top \\
&&+ \frac{1}{n}\sum_{i=1}^n \bX_{i, \mMS} \dddot \bD_i(\beta^*)\bSigma\left(\bX_{i,\mA}\right)^{-1}\{Y_i-\bg(\bX_i^\top \beta^*)\} \bX_{i, \mMS}^\top,\\
&\coloneqq & \bH_{\mMS, \rm{main}} + \bH_{\mMS, \rm{rem}},
\end{eqnarray*} 
$
\ddot \bD_i(\beta) = \mathrm{diag} \big(\ddot g(\bX_i^\top \beta), \dots, \ddot g(\bX_i^\top \beta)\big)$, and $\dddot \bD_i(\beta) = \mathrm{diag}\big(\dddot g(\bX_i^\top \beta), \dots, \dddot g(\bX_i^\top \beta)\big)$.

Further, the higher order term $\bH_{\mMS, \rm{rem}}$ is of order $O_P \left(\sqrt{\frac{m+s}{n}}\right)$ since the residual term 
\[
\| Y_i - \bg (\bX_i^\top\beta^*) \|_2 \leq  \||Y_i - \bg( \bX_i^\top\beta_0) \|_2 + \|\bg(\bX_i^\top\beta_0) - \bg(\bX_i^\top\beta^*) \|_2;
\]
the first term is of order $O_P(n^{-1/2})$ by Assumptions~\ref{cond: conditional mean function} and the second term is of order $O_P \left(\sqrt{\frac{m+s}{n}}\right)$ by the Lipschitz continuity of $g$ in Assumption~\ref{cond: conditional mean function} and $\beta^*$ is between $\beta_0$ and the oracle solution $\tilde\beta \in \mathcal{N}_{\beta, \tau}$. By Assumptions~\ref{cond: fixed and bounded design}, \ref{cond: conditional mean function}, \ref{cond: variance function bounded}, we conclude that $\bH_{\mMS, \rm{rem}} = O_P \left(\sqrt{\frac{m+s}{n}}\right)$ which is negligible when $s+m = o(n^{1/3})$. We easily argue that the leading term $\bH_{\mMS, \rm{main}} \leq C$ by Assumptions~\ref{cond: fixed and bounded design}, \ref{cond: conditional mean function}, \ref{cond: variance function bounded}. Thus,
\[
\left\|\frac{\partial^2 U_{\mMS}(\beta^*)}{\partial \beta_{\mMS} \partial \beta_{\mMS}^\top}\right\|_{\rm op} = O_P(1) .
\]

The third term in \eqref{eq: taylor for asymptotic normal} satisfies 
\begin{eqnarray*}
   \lefteqn{\sqrt{n}\left\|\frac{1}{2}(\tilde\beta_{\mMS}- \beta_{0, \mMS})^\top \frac{\partial^2 U_{\mMS}(\beta^*)}{\partial \beta_{\mMS} \partial \beta_{\mMS}^\top}(\tilde\beta_{\mMS} - \beta_{0, \mMS})\right\|_2} && \\
   &\leq&\frac{1}{2} \sqrt{n} \left\|\tilde\beta_{\mMS}- \beta_{0, \mMS} \right\|_2^2 \left\|\frac{\partial^2 U_{\mMS}(\beta^*)}{\partial \beta_{\mMS} \partial \beta_{\mMS}^\top} \right\|_{\rm op} = O_P(\frac{s+m}{n^{1/2}}),
   \end{eqnarray*}
which is negligible when $s + m = o(n^{1/2})$.

By rearranging \eqref{eq: taylor for asymptotic normal}, it hold that 
\begin{eqnarray}\label{eq: linearization}
    \lefteqn{\sqrt{n} (\tilde\beta_{\mMS} - \beta_{0, \mMS}) = - \sqrt{n} (\frac{\partial U_{\mMS}(\beta_0)}{\partial\beta_{\mMS}})^{-1}(U_{\mMS} (\beta_0) ) + o_P(1) } && \nonumber\\
    &= & (- \frac{\partial U_{\mMS}(\beta_0)}{\partial\beta_{\mMS}})^{-1}(\frac{1}{\sqrt{n}} \sum_{i = 1}^n U_{\mMS, i} (\beta_0) ) + o_P(1) 
\end{eqnarray}
By Assumption~\ref{cond: residuals moments and subgaussian}, it holds that $\E [\frac{1}{\sqrt{n}}U_{\mMS, i} (\beta_0) \mid \bX = \bX_i] = \bm 0$ and 
\begin{eqnarray*}
\lefteqn{
\Var ( \frac{1}{\sqrt{n}}U_{\mMS, i} (\beta_0) \mid \bX = \bX_i)(\beta_0)}&&\\
&=& \frac{1}{n}\bX_{i,\mMS} \bD_i(\beta_0)\bSigma\left(\bX_{i,\mA}\right)^{-1}
\Var(Y_i \mid \bX_i)\,
\bSigma\left(\bX_{i,\mA}\right)^{-1}\bD_i(\beta_0)\bX_{i,\mMS}^\top. 
\end{eqnarray*}
By the Bentkus Theorem \citep{bentkus2005lyapunov}, 
\begin{equation}\label{eq: asymptotic normality bentkus}
    \frac{1}{\sqrt{n}} \sum_{i = 1}^n U_{\mMS, i} (\beta_0) \stackrel{d}{\to}
\mathcal{N}_{s + m } \left(\bm 0, \mathbf{V}_{2} (\beta_0) \right),
\end{equation}
where the variance is a deterministic term by Assumptions~\ref{cond: fixed and bounded design} and \ref{cond: residuals moments and subgaussian} and is as follows
\begin{eqnarray*}
\lefteqn{\mathbf{V}_{2}(\beta_0) = \lim_{n\to \infty}\frac{1}{n} \sum_{i=1}^n }&\\
&\bX_{i, \mMS} \bD_i(\beta_0) \bSigma\left(\bX_{i,\mA}\right)^{-1} \E \left[\left\{Y_i - \bg(\bX_i^\top \beta_0) \right\}\left\{Y_i - \bg(\bX_i^\top \beta_0) \right\}^{\top}\right] \bSigma\left(\bX_{i,\mA}\right)^{-1} \bD_i(\beta_0)\bX_{i, \mMS}^{\top}.   
\end{eqnarray*}

Further, 
\begin{eqnarray}\label{eq: sandwich bread consitent}
\lefteqn{ -\frac{\partial U_{\mMS}(\beta_0)}{\partial\beta_{\mMS}} \stackrel{p}{\to}
{\mathbf{V}}_{1}(\beta_0)}&&\\
&&=\lim_{n\to\infty} \frac{1}{n}\sum\limits_{i=1}^n \bX_{i,\mMS} \bD_i(\beta_0) \bSigma\left(\bX_{i,\mA}\right)^{-1} \bD_i \left(\beta_0 \right)\bX_{i, \mMS}^{\top}.    \nonumber
\end{eqnarray}

Combine \eqref{eq: linearization}, \eqref{eq: asymptotic normality bentkus}, \eqref{eq: sandwich bread consitent}, the asymptotic normality in \eqref{eq: asymptotic normal of subvector} holds.

\end{proof}

\subsection{Proof of Lemma~\ref{thm: estimator property}}\label{ssec: proof of Theorem 2}
\begin{proof}
Consider the decomposition
\begin{align}
    \hat{U}_n(\beta) + \partial\rho_\lambda(\beta;\mM) = \Delta_n (\beta) + U_n(\beta) + \partial \rho_\lambda(\beta;\mM),
\end{align}
where the approximation error term is given by
\begin{equation}
    \Delta_n (\beta) = \hat{U}_n(\beta) - U_n(\beta) = \frac{1}{n}\sum_{i=1}^n \bX_{i,\mMS}\bD_i(\beta)\left(\hat{\bSigma}_i^{-1} - \bSigma_i^{-1}\right)\{Y_i - \bg(\bX_i^{\top}\beta)\}.
\end{equation}
For notational brevity, we suppress the dependence on $(\bX_{i, \mA}, \beta_0)$ and denote $\hat{\bSigma}(\bX_{i, \mA}; \beta_0)$ simply as $\hat{\bSigma}_i$.

Proposition~\ref{thm: oracle property} established the existence of the oracle estimator as a root of $\bm 0 \in U_n(\beta) + \partial\rho_\lambda(\beta;\mM) $. To prove the existence of a solution to the perturbed equation $\bm 0 \in \hat{U}_n(\beta) + \partial\rho_\lambda(\beta;\mM) $, our strategy is to demonstrate that the term $\Delta_n(\beta)$ is asymptotically negligible relative to the leading terms.

Specifically, we require the following properties to hold. These terms play roles analogous to $T_1$--$T_5$ in the proof of Proposition~\ref{prop: consistency}, and we denote these terms to be 
\begin{align}
J_1 
&\coloneqq \|\Delta_{n,\mMS}(\beta_0)\|_2
= o_P\!\left(\sqrt{\frac{s+m}{n}}\right),
\label{eq: prop 1}\\[2mm]
J_{21} 
&\coloneqq (\tilde\beta_{\mMS}-\beta_{0,\mMS})^\top
\big\{\nabla_{\mMS}\Delta_{n,\mMS}(\beta_0)\big\}
(\tilde\beta_{\mMS}-\beta_{0,\mMS})
= o_P\!\left(\frac{s+m}{n}\right),
\label{eq: prop 2}\\[2mm]
J_{22}
&\coloneqq (\tilde\beta_{\mMS}-\beta_{0,\mMS})^\top
\Big\{\nabla_{\mMS}\Delta_{n,\mMS}(\beta^*)
-\nabla_{\mMS}\Delta_{n,\mMS}(\beta_0)\Big\}
(\tilde\beta_{\mMS}-\beta_{0,\mMS}) \notag\\
&= o_P\!\left(\frac{s+m}{n}\right),
\label{eq: prop 3}\\[2mm]
J_3 
&\coloneqq \max_{j\in(\mMS)^c}\big|\Delta_{n,j}(\beta_0)\big|
= o_P\!\left(\frac{\lambda_n}{\log n}\right),
\label{eq: prop 4}\\[2mm]
J_4 
&\coloneqq \max_{j\in(\mMS)^c}\left|
\nabla_{\mMS}\Delta_{n,j}(\beta_0)\,(\tilde\beta_{\mMS}-\beta_{0,\mMS})
\right|
= o_P\!\left(\frac{\lambda_n}{\log n}\right),
\label{eq: prop 5}\\[2mm]
J_5 
&\coloneqq \max_{j\in(\mMS)^c}\left|
(\tilde\beta_{\mMS}-\beta_{0,\mMS})^\top
\Delta\bH_{j,\mMS}(\beta)\,
(\tilde\beta_{\mMS}-\beta_{0,\mMS})
\right|
= o_P\!\left(\frac{\lambda_n}{\log n}\right),
\label{eq: prop 6}
\end{align}
where $\Delta \bH_{j, \mMS} (\beta) \coloneqq \frac{\partial^2 \Delta_{n,j}(\beta)}{\partial \beta_{\mMS} \partial \beta_{\mMS}^\top}$.

We first show \eqref{eq: prop 1}. By conditioning on the auxiliary subsample, which ensures that $\hat{\bSigma}_i$ is independent of $Y_i$, we obtain $\mathbb{E}[\Delta_n(\beta_0)] = \bm{0}$. For notational simplicity, let us denote $\mathbf{A}_i = \hat{\bSigma}_i^{-1} - \bSigma_i^{-1}$. Then, we have
\begin{eqnarray}\label{eq: estimating equation Delta_n^2}
\lefteqn{\E\left\{\left\|\Delta_{n,\mMS}\left(\beta_{0}\right)\right\|^{2}_2\right\} 
} && \\
& = & \E\left\{\left\|\frac{1}{n} \sum\limits_{i=1}^n\bX_{i, \mMS} \bD\left(\beta_0\right)\bA_i\left\{Y_i - \bg\left(\bX_i^{\top}\beta_0\right)\right\}\right\|_2^2\right\}  \nonumber\\
& = & \E\left\{\frac{1}{n^2} \sum\limits_{i=1}^n\left\|\bX_{i, \mMS} \bD\left(\beta_0\right)\bA_i\left\{Y_i - \bg\left(\bX_i^{\top}\beta_0\right)\right\}\right\|_2^2\right\} \nonumber\\
& = & \mbox{tr}\bigg\{\E\Big\{\frac{1}{n^2} \sum\limits_{i=1}^n\bX_{i, \mMS} \bD\left(\beta_0\right)\bA_i\left\{Y_i - \bg\left(\bX_i^{\top}\beta_0\right)\right\}  \nonumber\\
&& \qquad\qquad \left\{Y_i - \bg\left(\bX_i^{\top}\beta_0\right)\right\}^{\top}\bA_i\bD\left(\beta_0\right)\bX_{i, \mMS}^{\top}\Big\} \bigg\}   \nonumber\\
& = &  \frac{1}{n^2} \mbox{tr}\left\{\sum\limits_{i=1}^n\bX_{i, \mMS} \bD\left(\beta_0\right)\bA_i\bSigma\left(\bX_{i,\mA}\right)\bA_i\bD\left(\beta_0\right)\bX_{i, \mMS}^{\top} \right\}\nonumber\\
&\leq & \frac{C\zeta_n^2}{n^2} \mbox{tr}\{\sum\limits_{i=1}^n \bX_{i, \mMS} \bX^\top_{i, \mMS} \} \nonumber\\
&= &  \frac{C\zeta_n^2}{n^2} \sum\limits_{i=1}^n  X^\top_{ik,\mMS}X_{ik, \mMS}  \nonumber\\
&\leq & \frac{C\zeta_n^2}{n^2} \sum\limits_{i=1}^n\sum\limits_{k=1}^ l(s+m) \nonumber\\
&=& o_P(\frac{m+s}{n})   \nonumber,
\end{eqnarray}
where the first inequality follows from the uniform bound $\zeta_n \coloneq \sup_x \|\mathbf{A}_i\|_F =O_P\left(\left(\frac{\log n}{n}\right)^{\frac{\nu}{4\nu + 2l \cdot |\mathcal{A}|}}\right)$ in \eqref{eq: kernel estimator rate}, and condition~\ref{cond: conditional mean function}--\ref{cond: variance function bounded}.

We now show \eqref{eq: prop 2}, that is $J_{21}=o_P((s+m)/n)$. By the Cauchy-Schwarz inequality,
\begin{eqnarray}\label{eq: prop 2 step 1}
    |J_{21}|
\le \|\tilde\beta_{\mMS}-\beta_{0,\mMS}\|_2^2\;
\big\|\nabla_{\mMS}\Delta_{n,\mMS}(\beta_0)\big\|_{F}.
\end{eqnarray}
Note that differentiating $\Delta_{n,\mMS}(\beta_0)$ introduces only bounded factors by Assumptions~\ref{cond: fixed and bounded design}, \ref{cond: conditional mean function}, \ref{cond: residuals moments and subgaussian}; thus, 
\[
\big\|\nabla_{\mMS}\Delta_{n,\mMS}(\beta_0)\big\|_{F}
\leq C \zeta_n,
\]
where $C$ is some constant.
Combine \eqref{eq: prop 2 step 1}, \eqref{eq: kernel estimator rate}, and \eqref{eq: prop active set estimation consistency}, we obtain
\[
|J_{21}|
\le  C\zeta_n \cdot O_P\Big(\frac{s+m}{n}\Big)
=o_P\Big(\frac{s+m}{n}\Big).
\]

To prove \eqref{eq: prop 3}, recall that $\beta^*$ is between $\tilde\beta$ and $\beta_0$ and let $\bH_{\mMS}(\beta)$ be the Hessian of $\Delta_{n, \mMS}(\beta)$ with respect to $\beta_{\mMS}$.
\begin{eqnarray*}
    \lefteqn{|J_{22}|
=|(\tilde\beta_{\mMS}-\beta_{0,\mMS})^\top
\{\nabla_{\mMS}\Delta_{n,\mMS}(\beta^*)-\nabla_{\mMS}\Delta_{n,\mMS}(\beta_0)\}
(\tilde\beta_{\mMS}-\beta_{0,\mMS})|} &\\
&\le\big\|\nabla_{\mMS}\Delta_{n,\mMS}(\beta^*)-\nabla_{\mMS}\Delta_{n,\mMS}(\beta_0)\big\|_{\op} \|\tilde\beta_{\mMS}-\beta_{0,\mMS}\|_2^2 \\
&\le
\sup_{\beta\in\mathcal N}\|\bH_{\mMS}(\beta)\|_{\op}\cdot \|\beta^*_{\mMS}-\beta_{0, \mMS}\|_2 \|\tilde\beta_{\mMS}-\beta_{0,\mMS}\|_2^2\\
&\le
\sup_{\beta\in\mathcal N}\|\bH_{\mMS}(\beta)\|_{F}\cdot \|\beta^*_{\mMS}-\beta_{0, \mMS}\|_2 \|\tilde\beta_{\mMS}-\beta_{0,\mMS}\|_2^2,\\
&\le C' \zeta_n \cdot O_P\Big(\sqrt{\frac{s+m}{n}}\Big) O_P\Big(\frac{s+m}{n}\Big)
=
O_P\Big(\zeta_n\Big(\frac{s+m}{n}\Big)^{3/2}\Big)
=
o_P\Big(\frac{s+m}{n}\Big),
\end{eqnarray*}
where the first inequality holds by the Cauchy-Schwarz, the second by the mean-value expansion. The second equality holds by arguing that differentiating $\Delta_n (\beta)$ twice keeps only bounded factors and a single $A_i$, hence $\sup_{\beta\in\mathcal N}\|\bH_{\mMS}(\beta)\|_{F} \leq C'\zeta_n$ for some constant $C'$; in addition, $\beta^* \in \mathcal{N}_{\beta, \tau}$, and \eqref{eq: prop active set estimation consistency}. The last equality follows from \eqref{eq: kernel estimator rate}.

To show \eqref{eq: prop 4}, we notice that $\Delta_{n,j} = \frac{1}{n} \sum_{i = 1}^n X_{ij} \bD(\beta_0) \bA_i \{Y_i - \bg (\bX_i^\top \beta_0 )\}$ and 
\begin{eqnarray*}
\lefteqn{P\left\{\max_{j \in(\mMS)^{c}}\left|\Delta_{n,j}\left(\beta_{0}\right)\right|> \frac{ \zeta_n\lambda_n}{ \log n}\right\}}&& \\
&\leq& \sum_{j \in (\mMS)^c} P\left\{\left|\frac{1}{n} \sum_{i=1}^n X_{ij} \bD_i \bA_i(Y_i - \bm g(\bX_i^\top\beta_0))\right|
>\frac{\zeta_n\lambda_n}{\log n}\right\} \\
& \leq & 2 p \cdot \exp\left(
- C n \frac{\zeta_n^2\lambda_n^2}{(\log n)^2 \|\bA_i\|_F^2} \right) \\
& \leq & 2 p \cdot \exp\left(
- C n \frac{\lambda_n^2}{(\log n)^2}
\right) = 2\exp \left(
\log p - C n \frac{\lambda_n^2}{(\log n)^2}
\right)
\to 0,
\end{eqnarray*}
where the first inequality holds by union bound, the second holds by Lemma~\ref{lemma: bernstein}, the third holds by the uniform bound of $\bA_i$ in \eqref{eq: kernel estimator rate}, and the convergence holds by Assumption~
\ref{cond: log p and n order}, that is $\log p = o\left\{\frac{n\lambda_n^{2}}{(\log n)^{2}}\right\}$.

We next show \eqref{eq: prop 5}. Recall that
\[
\Delta_{n,j}(\beta)
=
\frac{1}{n}\sum_{i=1}^n
X_{ij}\bD_i(\beta)\bA_i
\big(Y_i-\bm g(\bX_i^\top\beta)\big).
\]

Differentiating with respect to $\beta_{\mMS}$ gives
\[
\nabla_{\mMS} \Delta_{n,j}(\beta)
=
\frac{1}{n}\sum_{i=1}^n \sum_{k=1}^l
X_{ikj}\, A_{ik}\,
\dot g(X_{ik}^\top\beta)\,
X_{ik,\mMS}
\Big[
Y_{ik}
-
2\dot g(X_{ik}^\top\beta)
\Big].
\]
Similar to the proofs for \eqref{eq: prop 2}, \eqref{eq: prop 3}, differentiating
$\Delta_{n,j}(\beta)$ introduces additional bounded factors from $\bD_i(\beta)$ and
$R_i(\beta)$). By componentwise union bound and by Lemma~\ref{lemma: bernstein} setting $t=C'\zeta_n \sqrt{\frac{\log(p(s+m))}{n}}$, we obtain
\begin{equation}\label{eq: prop 5 step 1}
\max_{j}\|\nabla_{\mMS}\Delta_{n,j}(\beta_0)\|_\infty=O_P\left(\zeta_n\sqrt{\frac{\log(p(s+m))}{n}}\right)
=O_P\Big(\zeta_n\sqrt{\frac{\log p}{n}}\Big).    
\end{equation}
Further, by Assumption~\ref{cond: log p and n order} $\log p=o\left(\frac{n\lambda_n^2}{(\log n)^2}\right)$, hence $\frac{\sqrt{\log p}}{n}
=o \left(\frac{\lambda_n}{\sqrt n\,\log n}\right)$. Therefore, \begin{equation}\label{eq: prop 5 step 2}
\zeta_n \frac{(s+m)\sqrt{\log p}}{n}
=
\Big(\zeta_n\frac{s+m}{\sqrt n}\Big)\cdot
o\left(\frac{\lambda_n}{\log n}\right)
=o(1) \cdot o(\left(\frac{\lambda_n}{\log n}\right).    
\end{equation}
Hence,
\begin{eqnarray*}
\lefteqn{J_4= \max_{j\in(\mMS)^c}
\left|
\nabla_{\mMS}\Delta_{n,j}(\beta_0)
(\tilde\beta_{\mMS}-\beta_{0,\mMS})
\right|}&\\
&\le
\left(
\max_{j\in(\mMS)^c}
\|
\nabla_{\mMS}\Delta_{n,j}(\beta_0)
\|_2
\right)
\|
\tilde\beta_{\mMS}-\beta_{0,\mMS}
\|_2 \\ 
&\le \sqrt{s+m} \cdot\max_j\|\nabla_{\mMS}\Delta_{n,j}(\beta_0)\|_\infty \|
\tilde\beta_{\mMS}-\beta_{0,\mMS}
\|_2 \\
&=O_P\Big(\zeta_n\sqrt{\frac{(s+m)\log p}{n}}\Big)\cdot O_P\Big(\sqrt{\frac{s+m}{n}}\Big)
=O_P\Big(\zeta_n\frac{(s+m)\sqrt{\log p}}{n}\Big)
 =o_P\left(
\frac{\lambda_n}{\log n}
\right),
\end{eqnarray*}
where the first inequality follows from Cauchy-Schwarz, the first equality holds by \eqref{eq: prop 5 step 1}, and the last by \eqref{eq: prop 5 step 2}.


Similar to the argument for \eqref{eq: prop 5 step 1}, we argue that by applying union bound and Lemma~\ref{lemma: bernstein} over $j$ gives 
\begin{equation}\label{eq: prop 6 step 1}
\max_j\sup_{\beta\in\mathcal N}\|\Delta\bH_{j,\mMS}(\beta)\|_{\op}
=O_P\Big(\zeta_n\sqrt{\frac{\log p}{n}}\Big).    
\end{equation}

Thus, 
\begin{eqnarray*}
   \lefteqn{|J_5|
\le \| (\tilde\beta_{\mMS}-\beta_{0,\mMS})\|_2^2 \|\Delta\bH_{j,\mMS}(\beta)\|_{\op}}& \\
&\le O_P\Big(\frac{s+m}{n}\Big)\cdot O_P\Big(\zeta_n\sqrt{\frac{\log p}{n}}\Big)
=O_P\Big(\zeta_n\frac{(s+m)\sqrt{\log p}}{n^{3/2}}\Big) = o_P(\lambda_n/\log n),
\end{eqnarray*}
where the second inequality follows by \eqref{eq: prop 5 step 1}, and the last equality holds since $\delta_n\frac{(s+m)\sqrt{\log p}}{n^{3/2}}
=o\Big(\frac{\lambda_n}{\log n}\Big)$ by Assumption~\ref{cond: log p and n order}.

\end{proof}

\subsection{Proof of Theorem~\ref{thm:power-improvement}}\label{ssec: proof of Theorem 3}

\begin{proof}
We exploit the efficiency property of the estimator to obtain the variance inequality. Let $\hat{\mathbf{\Omega}}$ and $\check{\mathbf{\Omega}}$ denote the asymptotic covariance matrices of the combined estimator $\hat{\beta}$ and the initial estimator $\check{\beta}$, respectively. It is well known that the subvector of the cross-fitted estimator $\hat \beta_{\mMS}$ can be expressed through the following asymptotic linear representation
\begin{eqnarray}
    \sqrt{n}(\hat \beta_{\mMS} - \beta_{0, \mMS}) = 
     \bW_1^{-1} \bS_1 + o_P(1),
\end{eqnarray}
where $\bW_1 =  \E[\bX_{i,\mMS}\bD_i(\beta_0) \bSigma (\bX_{i, \mA})^{-1} \bD_i(\beta_0) \bX_{i,\mMS}^{\top}]$, $R_i(\beta_0) = Y_i - \bg(\bX_i^{\top}\beta_0)$, and $\bS_1 = {n}^{-1/2}\sum\limits_{i=1}^n \bX_{i,\mMS}\bD_i(\beta_0)\bSigma(\bX_{i,\mA})^{-1} R_i (\beta_0)$.

Similarly, the initial estimator $\check{\beta}_{\mMS}$ satisfies the representation
\begin{equation}
    \sqrt{n}(\check{\beta}_{\mMS} - \beta_{0, \mMS}) = \bW_2^{-1}
    \bS_2 + o_P(1),
\end{equation}    
where $\bW_2 = \E[ \bX_{i,\mMS}\bD_i(\beta_0) \check\bSigma^{-1} \bD_i(\beta_0) \bX_{i,\mMS}^{\top}]$, and \[\bS_2 = {n}^{-1/2}\sum\limits_{i=1}^n \bX_{i,\mMS}\bD_i (\beta_0) \check\bSigma^{-1}(Y_i - \bg(\bX_i^{\top}\beta_0)).\]

 
The asymptotic cross covariance between $\check{\beta}_{\mMS}$ and $\hat \beta_{\mMS}$ satisfies
    \begin{equation}\label{eq: omega12 1}
        \bOmega_{12} = \E[\left( \bW_2^{-1} \bS_2 (\bW_1^{-1} \bS_1)^\top \right)] = \bW_2^{-1} \mathbb{E}\left[ \bS_2 \bS_1^{\top} \right] \bW_1^{-1}.
    \end{equation}
By independence of observations between subjects $i \neq j$, the cross-terms vanish. We only need to consider the sum of expectations for the $i$-th terms, that is,
\begin{eqnarray}\label{eq: omega12 2}
 \lefteqn{\E\left[ \bS_2 \bS_1^{\top} \right] } &&\\ 
 &=& \frac{1}{n} \sum_{i=1}^n \E \Bigg[ \left(\bX_{i,\mMS} \bD_i(\beta_0) \check\bSigma^{-1} R_i (\beta_0)\right) \left(\bX_{i,\mMS}\bD_i(\beta_0)\bSigma(\bX_{i,\mA})^{-1}R_i (\beta_0)\right)^{\top} \Bigg]  \nonumber \\
 &=& \frac{1}{n} \sum\limits_{i=1}^n \E\left[\bX_{i,\mMS} \bD_i(\beta_0) \check\bSigma^{-1} \E\left[ R_i (\beta_0) R_i (\beta_0)^{\top} \mid \bX_i \right] \left(\bSigma(\bX_{i,\mA})^{-1}\right)^{\top} \bD_i(\beta_0)\bX_{i,\mMS}^{\top} \right], \nonumber\\
 &=& \frac{1}{n}\sum_{i=1}^n \E \left[
\bX_{i,\mMS} \bD_i(\beta_0) \check\bSigma^{-1}
\bSigma(\bX_{i,\mA}) \bSigma(\bX_{i,\mA})^{-1} \bD_i(\beta_0)\bX_{i,\mMS}^{\top} \right]= \bW_2 . \nonumber
\end{eqnarray}
%
Combine \eqref{eq: omega12 1},\eqref{eq: omega12 2}, we obtain 
\begin{eqnarray}\label{eq: cross cov}
    \bOmega_{12} = \bW_2^{-1} \bW_2 \bW_1^{-1} = 
\bW_1^{-1}.
\end{eqnarray}
Similarly, the asymptotic variance of $\hat\beta$ is 
\begin{eqnarray}\label{eq: asymptotic variance oracle} 
   \hat\bOmega = \bW_1 \E[\bS_1 \bS_1^\top ] = \bW_1^{-1} 
\end{eqnarray}
Let $\bm{\delta}=\check\beta_{\mMS}-\hat\beta_{\mMS}$. 
By \eqref{eq: cross cov}, \eqref{eq: asymptotic variance oracle}, we obtain
\[
\mbox{AVar}(\bm{\delta})
=\tilde{\bOmega}-\hat{\bOmega}.
\]
Since $\mbox{AVar}(\bm{\delta})$ is positive semidefinite, it follows that 
$\tilde{\bOmega} \succeq \hat{\bOmega}$. 
For any full row rank matrix $\bC$, premultiplication and postmultiplication preserve the Loewner order,
\[
\bC\tilde{\bOmega}\bC^\top
\succeq
\bC\hat{\bOmega}\bC^\top,
\]
and, when positive definite, inversion reverses the order. Hence under local alternatives with drift $\bh$,
\[
\bh^\top(\bC\hat{\bOmega}\bC^\top)^{-1}\bh
\;\ge\;
\bh^\top(\bC\tilde{\bOmega}\bC^\top)^{-1}\bh,
\]
so the efficient estimator yields weakly larger noncentrality parameter and therefore weakly greater local asymptotic power.

\end{proof}
\section*{Acknowledgements}
	



\bibliographystyle{apalike}

\bibliography{quasilikelihood}   

\begin{thebibliography}{}

\bibitem[Aitkin, 1935]{aitkin1935least}
Aitkin, A.~C. (1935).
\newblock On least squares and linear combination of observations.
\newblock {\em Proceedings of the Royal Society of Edinburgh}, 55:42--48.

\bibitem[Amemiya, 1973]{amemiya1973regression}
Amemiya, T. (1973).
\newblock Regression analysis when the variance of the dependent variable is
  proportional to the square of its expectation.
\newblock {\em Journal of the American Statistical Association},
  68(344):928--934.

\bibitem[Andrews, 1986]{andrews1986note}
Andrews, D. W.~K. (1986).
\newblock A note on the unbiasedness of feasible {GLS}, quasi-maximum
  likelihood, robust, adaptive, and spectral estimators of the linear model.
\newblock {\em Econometrica}, 54(3):687--698.

\bibitem[Bentkus, 2005]{bentkus2005lyapunov}
Bentkus, V. (2005).
\newblock A {Lyapunov}-type bound in {$\mathbb{R}^d$}.
\newblock {\em Theory of Probability \& Its Applications}, 49(2):311--323.

\bibitem[Chernozhukov et~al., 2018]{chernozhukov2018DML}
Chernozhukov, V., Chetverikov, D., Demirer, M., Duflo, E., Hansen, C., Newey,
  W., and Robins, J. (2018).
\newblock Double/debiased machine learning for treatment and structural
  parameters.
\newblock {\em The Econometrics Journal}, 21(1):C1--C68.

\bibitem[Davidian and Carroll, 1987]{davidian1987variance}
Davidian, M. and Carroll, R.~J. (1987).
\newblock Variance function estimation.
\newblock {\em Journal of the American Statistical Association},
  82(400):1079--1091.

\bibitem[Fan and Li, 2001]{fan2001variable}
Fan, J. and Li, R. (2001).
\newblock Variable selection via nonconcave penalized likelihood and its oracle
  properties.
\newblock {\em Journal of the American Statistical Association},
  96(456):1348--1360.

\bibitem[Fang et~al., 2020]{fang2020test}
Fang, E.~X., Ning, Y., and Li, R. (2020).
\newblock Test of significance for high-dimensional longitudinal data.
\newblock {\em Annals of Statistics}, 48(5):2622--2645.

\bibitem[Gin{\'e} and Guillou, 2002]{gine2002rates}
Gin{\'e}, E. and Guillou, A. (2002).
\newblock Rates of strong uniform consistency for multivariate kernel density
  estimators.
\newblock {\em Annales de l'Institut Henri Poincar\'e (B) Probability and
  Statistics}, 38(6):907--921.

\bibitem[Godambe, 1960]{godambe1960optimum}
Godambe, V.~P. (1960).
\newblock An optimum property of regular maximum likelihood estimation.
\newblock {\em The Annals of Mathematical Statistics}, 31(4):1208--1211.

\bibitem[Godambe, 1985]{godambe1985foundations}
Godambe, V.~P. (1985).
\newblock The foundations of finite sample estimation in stochastic processes.
\newblock {\em Biometrika}, 72(2):419--428.

\bibitem[Guo et~al., 2025]{guo2025model}
Guo, X., Li, R., Zhang, Z., and Zou, C. (2025).
\newblock Model-free statistical inference on high-dimensional data.
\newblock {\em Journal of the American Statistical Association},
  120(549):186--197.

\bibitem[Guvenen, 2009]{guvenen2009empirical}
Guvenen, F. (2009).
\newblock An empirical investigation of labor income processes.
\newblock {\em Review of Economic Dynamics}, 12(1):58--79.

\bibitem[Li, 2018]{li2018sufficient}
Li, B. (2018).
\newblock {\em Sufficient dimension reduction: Methods and applications with
  {R}}.
\newblock Chapman and Hall/CRC.

\bibitem[Li, 1991]{li1991sliced}
Li, K.-C. (1991).
\newblock Sliced inverse regression for dimension reduction.
\newblock {\em Journal of the American Statistical Association},
  86(414):316--327.

\bibitem[Liang and Zeger, 1986]{liang1986longitudinal}
Liang, K.-Y. and Zeger, S.~L. (1986).
\newblock Longitudinal data analysis using generalized linear models.
\newblock {\em Biometrika}, 73(1):13--22.

\bibitem[MacKinnon and White, 1985]{mackinnon1985some}
MacKinnon, J.~G. and White, H. (1985).
\newblock Some heteroskedasticity-consistent covariance matrix estimators with
  improved finite sample properties.
\newblock {\em Journal of Econometrics}, 29(3):305--325.

\bibitem[Mammen and van~de Geer, 1997]{mammen1997penalized}
Mammen, E. and van~de Geer, S. (1997).
\newblock Penalized quasi-likelihood estimation in partial linear models.
\newblock {\em The Annals of Statistics}, 25(3):1014--1035.

\bibitem[McCullagh and Nelder, 1989]{McCullaghNelder1989}
McCullagh, P. and Nelder, J.~A. (1989).
\newblock {\em Generalized linear models}.
\newblock Chapman and Hall, 2 edition.

\bibitem[Meghir and Pistaferri, 2004]{meghir2004income}
Meghir, C. and Pistaferri, L. (2004).
\newblock Income variance dynamics and heterogeneity.
\newblock {\em Econometrica}, 72(1):1--32.

\bibitem[Nakagawa et~al., 2025]{nakagawa2025location}
Nakagawa, S., Ortega, S., Gazzea, E., Lagisz, M., Lenz, A., Lundgren, E., and
  Mizuno, A. (2025).
\newblock Location--scale models in ecology and evolution: Heteroscedasticity
  in continuous, count and proportion data.
\newblock {\em Methods in Ecology and Evolution}.

\bibitem[Ortega and Rheinboldt, 2000]{ortega2000iterative}
Ortega, J.~M. and Rheinboldt, W.~C. (2000).
\newblock {\em Iterative solution of nonlinear equations in several variables}.
\newblock SIAM.

\bibitem[Qu et~al., 2000]{qu2000improving}
Qu, A., Lindsay, B.~G., and Li, B. (2000).
\newblock Improving generalized estimating equations using quadratic inference
  functions.
\newblock {\em Biometrika}, 87(4):823--836.

\bibitem[Shi et~al., 2019]{shi2019linear}
Shi, C., Song, R., Chen, Z., and Li, R. (2019).
\newblock Linear hypothesis testing for high dimensional generalized linear
  models.
\newblock {\em Annals of Statistics}, 47(5):2671--2703.

\bibitem[Spady and Stouli, 2018]{spady2018simultaneous}
Spady, R. and Stouli, S. (2018).
\newblock Simultaneous mean-variance regression.
\newblock arXiv:1804.01631.

\bibitem[Spellman et~al., 1998]{spellman1998comprehensive}
Spellman, P.~T., Sherlock, G., Zhang, M.~Q., Iyer, V.~R., Anders, K., Eisen,
  M.~B., Brown, P.~O., Botstein, D., and Futcher, B. (1998).
\newblock Comprehensive identification of cell cycle--regulated genes of the
  yeast {Saccharomyces cerevisiae} by microarray hybridization.
\newblock {\em Molecular Biology of the Cell}, 9(12):3273--3297.

\bibitem[Wang et~al., 2012]{wang2012penalized}
Wang, L., Zhou, J., and Qu, A. (2012).
\newblock Penalized generalized estimating equations for high-dimensional
  longitudinal data analysis.
\newblock {\em Biometrics}, 68(2):353--360.

\bibitem[Wedderburn, 1974]{wedderburn1974quasi}
Wedderburn, R. W.~M. (1974).
\newblock Quasi-likelihood functions, generalized linear models, and the
  {Gauss--Newton} method.
\newblock {\em Biometrika}, 61(3):439--447.

\bibitem[White, 1980]{white1980heteroskedasticity}
White, H. (1980).
\newblock A heteroskedasticity-consistent covariance matrix estimator and a
  direct test for heteroskedasticity.
\newblock {\em Econometrica}, 48(4):817--838.

\bibitem[White, 1982]{White1982}
White, H. (1982).
\newblock Maximum likelihood estimation of misspecified models.
\newblock {\em Econometrica}, 50(1):1--25.

\bibitem[Yin et~al., 2010]{yin2010nonparametric}
Yin, J., Geng, Z., Li, R., and Wang, H. (2010).
\newblock Nonparametric covariance model.
\newblock {\em Statistica Sinica}, 20(1):469--479.

\bibitem[Young and Shah, 2024]{young2024sandwich}
Young, E.~H. and Shah, R.~D. (2024).
\newblock Sandwich boosting for accurate estimation in partially linear models
  for grouped data.
\newblock {\em Journal of the Royal Statistical Society Series B: Statistical
  Methodology}, 86(5):1286--1311.

\bibitem[Zhang, 2010]{zhang2010mcp}
Zhang, C.-H. (2010).
\newblock Nearly unbiased variable selection under minimax concave penalty.
\newblock {\em The Annals of Statistics}, 38(2):894--942.

\end{thebibliography}

\end{document}